\newif\ifextended
\newif\ifanonymous
\newif\ifpoly
\newif\ifreview
\newif\ifcomments
\definecolor{burntorange}{rgb}{0.8, 0.33, 0.0}
\definecolor{darkteal}{rgb}{0.0, 0.45, 0.45}
\tiny\color[gray]{0.3},
\it\color{burntorange},
\theoremstyle{definition}
\newtheorem{theorem}{Theorem}[section]
\newtheorem{lemma}[theorem]{Lemma}
\newtheorem{definition}[theorem]{Definition}
\newcommand{\defas}{\mathrel{::=}}
\newcommand{\alt}{\:|\:}
\newcommand{\coeffcolor}[1]{\textcolor{burntorange}{#1}}
\newcommand{\downshift}[1]{\coeffcolor{\downarrow_{#1}}}
\newcommand{\utype}[1]{\textbf{U}_{\coeffcolor{#1}}}
\newcommand{\ftype}[0]{\textbf{F}}
\newcommand{\dom}[1]{\textit{dom } #1}
\newcommand{\qstrict}{\coeffcolor{\mathrm{S}}}
\newcommand{\qstrictarticle}{an }
\newcommand{\qlazy}{\coeffcolor{\mathrm{L}}}
\newcommand{\qunknown}{\coeffcolor{\mathrm{?}}}
\newcommand{\coeffoverline}[1]{\coeffcolor{\overline{#1}}}
\newcommand{\lazyvec}{\coeffoverline{\qlazy}}
\newcommand{\coeffsymbol}{\coeffcolor{\alpha}}
\newcommand{\cbpvlangname}{$\text{CBPV}^{\coeffcolor{\gamma\!}}$\xspace}
\newcommand{\cbnlangname}{$\text{CBN}^{\coeffcolor{\gamma\!}}$\xspace}
\begin{document}

\ifextended \else 
\begin{CCSXML}
<ccs2012>
    <concept>
        <concept_id>10011007.10011006.10011008.10011009.10011012</concept_id>
        <concept_desc>Software and its engineering~Functional languages</concept_desc>
        <concept_significance>500</concept_significance>
        </concept>
    <concept>
        <concept_id>10003752.10010124.10010125.10010130</concept_id>
        <concept_desc>Theory of computation~Type structures</concept_desc>
        <concept_significance>500</concept_significance>
        </concept>
  </ccs2012>
\end{CCSXML}

\ccsdesc[500]{Software and its engineering~Functional languages}
\ccsdesc[500]{Theory of computation~Type structures}
\keywords{Type Systems, Strictness Analysis, Call-By-Push-Value, Lazy Evaluation} \fi

\ifextended \title{Typing Strictness (Extended Version)} \else \title{Typing Strictness} \fi
\date{}
\ifextended \author{\href{https://sainati.pl/}{Daniel Sainati}} \else \author{Daniel Sainati} \fi
\ifextended \else \orcid{0009-0003-4738-8524} \fi
\affiliation{%
  \institution{University of Pennsylvania}
  \city{Philadelphia}
  \country{USA}
}
\email{sainati@seas.upenn.edu}
\ifextended \author{\href{https://www.cis.upenn.edu/\%7e\%6a\%77\%63/}{Joseph W. Cutler}} \else \author{Joseph W. Cutler}  \fi
\ifextended \else \orcid{0000-0001-9399-9308} \fi
\affiliation{%
  \institution{University of Pennsylvania}
  \city{Philadelphia}
  \country{USA}
}
\email{jwc@seas.upenn.edu}
\ifextended \author{\href{https://www.cis.upenn.edu/\%7e\%62\%63\%70\%69\%65\%72\%63\%65/}{Benjamin C. Pierce}} \else \author{Benjamin C. Pierce} \fi
\ifextended \else \orcid{0000-0001-7839-1636} \fi
\affiliation{%
  \institution{University of Pennsylvania}
  \city{Philadelphia}
  \country{USA}
}
\email{bcpierce@seas.upenn.edu}
\ifextended  \author{\href{https://www.cis.upenn.edu/\%7e\%73\%77\%65\%69\%72\%69\%63\%68/}{Stephanie Weirich}} \else \author{Stephanie Weirich} \fi
\ifextended \else \orcid{0000-0002-6756-9168} \fi
\affiliation{%
  \institution{University of Pennsylvania}
  \city{Philadelphia}
  \country{USA}
}
\email{sweirich@seas.upenn.edu}

\begin{abstract}
  Strictness analysis is critical to efficient implementation of languages with non-strict evaluation,
  mitigating much of the performance overhead of laziness.
  However, reasoning about strictness at the source level can be challenging and unintuitive.
  We propose a new definition of strictness that
  refines the traditional one by describing variable usage more precisely.
  We lay type-theoretic foundations for this definition in both
  call-by-name and call-by-push-value settings,
  drawing inspiration from the literature on type systems tracking effects and coeffects.
  We prove via a logical relation that the strictness attributes computed by our type systems accurately
  describe the use of variables at runtime,
  and we offer a strictness-annotation-preserving translation from the call-by-name system to the call-by-push-value one.
  All our results are mechanized in Rocq.
\end{abstract}

\maketitle

\ifextended \section*{This paper is an extended version of \citet{typing-strictness}.} \fi

\section{Introduction} \label{sec:intro}
Non-strict evaluation offers benefits over both fully lazy and fully strict
evaluation strategies by allowing
expressions to be evaluated at any point between when they are first encountered and when
their value is needed.
Unlike strict strategies, which \textit{must} evaluate expressions
as soon as they are encountered,
non-strict strategies may delay evaluation arbitrarily,
improving performance (e.g., by skipping unnecessary computation)
and making it easier to work with codata like infinite lists.
Unlike lazy strategies,
which \textit{must} wait until an expression is used before evaluating it,
non-strict strategies may evaluate earlier
to avoid creating thunks.

These benefits have led a number of languages to support non-strict evaluation,
either as the default strategy, like Haskell \cite{Haskell} or R \cite{rlang},
or optionally via special annotations, like OCaml \cite{ocaml-lazy}.
Non-strict evaluation is also critical for the performance of
streaming libraries like FS2 in Scala \cite{fs2}
and code patterns like iterators in Rust \cite{rust-iterator},
and it is a useful tool for automatic parallelization \cite{calderon-trilla-improving-2015}
and for opportunistic execution in scripting languages \cite{mell2025}.

Non-strict evaluation also comes with a number of caveats.
Overly lazy strategies suffer from drawbacks
such as increased memory usage \cite{aviral-2019}, unpredictable behavior \cite{r-melts-brains} and space complexity \cite{Haskell-foldl, nothunks,devries-visualizing},
and security vulnerabilities \cite{Vassena-concurrent-2017}.
To circumvent these issues,
compilers like GHC \cite{ghc} perform \textit{strictness analysis}
\cite{mycroft-theory-1980,wadler-projections-1987},
allowing them to reorder code to evaluate eagerly when they can prove that doing so will not change observable behavior.
In pure settings,
this analysis enables performance optimizations that can massively reorder evaluation,
but such improvements are not limited only to languages like Haskell;
strictness analysis can still unlock speedups in impure languages when
the compiler is able to ensure that it will not reorder any side effects \cite{mell2025}.

Strictness analysis, however, comes with its own set of wrinkles.
Allowing the compiler to reorder evaluation can result in unpredictable behavior,
where modifying the usage of values in one place can change the
performance \cite{neilmitchell-2013} and even correctness \cite{johann-2004} of code elsewhere.
Additionally, the inherent imprecision of strictness analysis can result in
expressions that seem as though they should evaluate strictly but do not,
requiring the programmer to tweak their code to coax the analyzer into
triggering the desired optimization \cite{Haskell-seq}.
Exacerbating these issues is the fact that
the analysis' model of strictness
is specialized to compilation
and awkward as a reasoning tool:
as we explain in Section \ref{sec:motivation},
this model is overly extensional and can
misrepresent how variables are used.

To clarify the traditional model's murkiness regarding variable usage,
we propose a new, static notion of strictness, dubbed \textit{intensional strictness},
that enables more direct reasoning about how programs use variables;
any given occurrence of a variable can be either strict, lazy, or statically unknown.
We present intensional strictness type-theoretically
via a \textit{type-and-effect} system \cite{lucassen-polymorphic-1988},
taking additional inspiration from the literature on \textit{coeffects} \cite{petricek-coeffects-2013}.
Effect and coeffect annotations extend type systems to describe how programs interact with the world;
the standard intuition is that effects describe what a program \textit{does} to the world,
while coeffects describe what a program \textit{requires} from the world.
The interplay between effects and coeffects is
an active research area \cite{dal-lago-2022, gaboardi-combining-2016,torczon-effects-2024,McDermottMycroft-2018},
and we adopt mechanisms from these works to characterize the way that strict usage
both imposes a demand on a program's environment
(by requiring that a value be successfully produced),
and modifies that environment
(by triggering the evaluation of suspended computations).

Our new definition of strictness, while syntactic in nature like any type system,
captures semantic notions of strict and lazy usage that should be
intuitively familiar to users of Haskell and similar languages.
Furthermore, it refines the extensional characterization of strictness
offered by conventional strictness analysis, describing variable usage
more precisely and decoupling the definition from questions of ``demand'' or observable behavior.

We formalize our new notion of strictness via \cbnlangname,
a call-by-name (CBN) calculus with strictness effects,
and \cbpvlangname, an extension of call-by-push-value (CBPV)
\cite{levy-call-by-push-value-1999} with similar annotations.
CBPV is a convenient foundation for the study of strictness because it specifies explicitly where computation occurs by
separating computations from values and including explicit
constructs that suspend and resume execution.
Furthermore, a canonical translation exists from CBN into CBPV \cite{levy-call-by-push-value-2006},
allowing us to use \cbpvlangname as a tool to
understand strictness in CBN languages.

Our primary motivation, throughout, is conceptual: our proposed definition is
intended as a tool for better understanding and reasoning about
strictness, not yet as a basis for practical implementation or
surface-level design of non-strictly evaluated languages.
All of our proofs \cite{artifact} are mechanized in Rocq \cite{rocq}.

Concretely, our contributions are:

\begin{itemize}
  \item We propose a new notion of \textit{intensional strictness} and
    argue that it refines the extensional definition used by traditional strictness analysis (Section \ref{sec:motivation}).
  \item We present \cbnlangname,
    a \textit{type-and-effect system} for a call-by-name language
    that embodies this new form of strictness (Section \ref{sec:cbn}).
  \item We present \cbpvlangname, a variant of CBPV
    that also embodies intensional strictness.
    We instrument a big-step operational semantics with strictness attributes and
    prove the soundness of \cbpvlangname typing with respect to this
    semantics, thus providing a sound semantics for
    \cbnlangname via a
    translation to \cbpvlangname that preserves strictness annotations (Section \ref{sec:details}).
    \cbpvlangname allows us to factor our proofs about \cbnlangname
    through a lower-level intermediary,
    simplifying the proofs themselves and potentially offering a
    setting for reasoning about strictness in other languages with
    translations into \cbpvlangname.
  \item We prove that the strictness attributes computed by \cbpvlangname truly reflect strict and lazy usage:
    a well-typed program \textit{can} be run in an environment
    without a binding for any lazily-used variable
    and \textit{cannot} be run in an environment
    lacking any strictly-used variable.
    These proofs show that intensional strictness,
    as modeled by \cbnlangname and \cbpvlangname,
    refines the original extensional definition (Section \ref{sec:bot}).
    The proofs make use of a pair of logical relations and
    involve a rather delicate treatment of variables and scoping.
  \item We enrich \cbpvlangname and \cbnlangname with \textit{unused variable tracking},
    an extension that is surprisingly \ifextended \newpage \else \fi simple in the former language and surprisingly complex in the latter.
    All the previously proven theorems hold for this extension (Section \ref{sec:absence}).
\end{itemize}
Section \ref{sec:related} discusses related work.
Section \ref{sec:future} concludes and outlines possible future work.

\section{What Is Intensional Strictness?} \label{sec:motivation}

In this section,
we offer a new, \textit{intensional} definition of strictness
and examine how it differs from the extensional version used by
strictness analysis.
We build an intuition for how intensional strictness works
and explain why it is natural to model it type-theoretically.

We describe each syntactic occurrence of a variable
as either a \emph{strict} or a \emph{lazy} usage of that variable.
A strict variable usage scrutinizes the value stored in that variable
(e.g., it pattern matches on a pair or applies a function),
while a lazy usage does not (e.g., it places the variable into a cons cell of a lazy list).
A helpful analogy compares variables to boxes:
a strict use opens the box to observe its contents,
while a lazy use passes the box along unopened.

We might try to lift this terminology to call-by-name
functions---``a strict function is one that scrutinizes its argument''---but
this na\"{i}ve definition is ambiguous about
how the return values of functions are used.
It is unclear from this definition what the strictness of
the identity function $\lambda x.x$ should be, for example:
it does not directly scrutinize its argument, but
scrutinizing the result of a call to the identity \emph{would} scrutinize
the argument as well.
Instead of considering functions in isolation,
we need a more precise definition that allows us to
reason about the contexts in which functions are called and the demands made on their results.

The strictness-analysis literature \cite{mycroft-theory-1980,wadler-projections-1987}
attempts to provide this precision by defining strict functions
as those that fail to evaluate whenever their arguments fail to evaluate
and lazy ones as those that (might) succeed even if their argument fails.
This definition posits a hypothetical $\bot$ term,
intuitively representing the result of an erroring or non-terminating computation,
and says that a function $f$ is strict if $f \bot=\bot$; that is,
a strict function preserves failure.
Because it is defined in terms of observable behavior, we refer to
this notion of strictness as \textit{extensional strictness}.

Strictness-analysis algorithms based on extensional strictness
can mitigate much of the performance overhead of laziness
\cite{kubiak-implementing-1992,sergey-theory-2014}.
Compilers perform strictness analysis in an optimization pass,
wherein they try to determine if it is safe to evaluate an expression eagerly
and potentially save space by not allocating a thunk.
By ``safe,'' we mean that eager evaluation would not change the
observable behavior of the program,
compared to evaluating that same expression lazily.
This use case shows why the strictness-analysis literature uses
the definition it does: if $f \bot=\bot$, then $f$'s argument
can be evaluated eagerly, since, if the argument fails to evaluate,
the call will too.

As an example of how strictness analysis can improve performance,
consider an implementation of $\texttt{sum}$ for lists using a tail-recursive fold in
a Haskell-like language:
\texttt{sum lst = \textbf{foldl} (+) 0 lst}.
A programmer familiar with tail recursion might
expect this function to use constant stack space,
but, when evaluated lazily, it uses space linear in the length of the input.

\begin{wrapfigure}{r}{0.50\textwidth}
  \centering
  \ifextended \vspace*{-0.75\baselineskip} \else  \vspace*{-1.1\baselineskip}  \fi
  \begin{lstlisting}
    foldl (+) 0 [1 .. n]                            - ->
    foldl (+) (0 + 1) [2 .. n]                   - ->
    foldl (+) ((0 + 1) + 2) [3 .. n]           - ->
    foldl (+) ((((0 + 1) + 2) + ...) + n) []
  \end{lstlisting}
  \ifextended \vspace*{-1\baselineskip} \else \vspace*{-2.2\baselineskip} \fi
  \caption{Lazy evaluation of list sum using foldl}
  \vspace*{-0.75\baselineskip}
  \label{fig:strict-example-fold}
\end{wrapfigure}

To see why, let's step through a lazy evaluation of \texttt{sum} in Figure \ref{fig:strict-example-fold}.
Evaluation begins by applying \texttt{+} to the accumulator and the first element of \texttt{[1 .. n]},
the input to the \texttt{\textbf{foldl}}.
However, rather than completing that application and yielding an integer value as the new accumulator,
evaluation instead yields a thunk referencing both the old accumulator and the list element,
resulting in an ever-growing chain of nested thunks.
This chain is only forced once evaluation reaches the end of the list
and the result of the call to \texttt{sum} is used;
in the meantime, the entire contents of the list will be materialized within it.

If we recognize that the whole contents of the list will eventually be scrutinized by the
\texttt{+} operator, we can instead eagerly evaluate the
application to a value and maintain constant space without changing observable behavior;
if computation of any element of the list fails, we could never have computed the list's sum,
regardless of when that computation occurred.
Compilers like GHC use strictness analysis to
improve performance by exploiting this realization to
reorder evaluation
\cite{Haskell-strictness}.

However, despite its usefulness to compilers,
extensional strictness is otherwise unsatisfying as a tool for human reasoning about programs.
In particular, it does not directly say anything about how values are used!
A function that always fails (i.e., always returns $\bot$), for example,
is considered strict even if it never mentions its input,
despite the fact that strictly evaluating a call to
such a function can in fact change program behavior.
This characterization is counterintuitive;
to better understand strict and lazy usage,
we would like a model of strictness that does not conflate a function's
observable behavior with the way that it uses its argument.

To capture this additional nuance,
we propose a definition of \textit{intensional strictness} that directly describes how variables are used
within the bodies of functions. We assume that the results of function calls are scrutinized;
if the result of a call is used lazily, then the call will not be evaluated at all.
Intensional strictness is presented informally here in terms of functions;
later we will formalize it and generalize to open terms via the type systems in
Sections \ref{sec:cbn} and \ref{sec:details}.
\begin{definition}[Intensional Strictness] Functions can exhibit one of three kinds of strictness:
  \label{def:fn-strict}
  \begin{enumerate}
    \item A \textit{strict function} is one that, on all possible execution paths through its body,
      scrutinizes its argument whenever its return value is scrutinized.
    \item  A \textit{lazy function} is one that
      does not scrutinize its argument, regardless of how its return value is used.
    \item An \textit{indeterminate function} may or may not scrutinize its argument;
      the argument's usage is either statically unknown or
      varies depending on which execution path is taken when the function is called.
  \end{enumerate}
\end{definition}

\begin{figure}
  \begin{lstlisting}
    -- x is strict, a is indeterminate                  -- z is strict                     -- u is lazy                      -- y is lazy
    f1 x a = if x then a + 1 else 2                     f2 z = z                             f3 u = loop                      f4 y = Just y
  \end{lstlisting}
  \ifextended \vspace*{-1.5\baselineskip} \else \vspace*{-2\baselineskip} \fi
  \caption{Simple examples of variable usage}
  \ifextended \vspace*{-0.5\baselineskip} \else \vspace*{-1\baselineskip} \fi
  \label{fig:strict-example-1}
\end{figure}

To get a feel for this definition,
let's consider the code snippets in Figure \ref{fig:strict-example-1}.
What is the strictness of each of the parameters of these four functions,
assuming that their results are used strictly?

Scrutiny of \texttt{f1}'s result means we must know its value, and thus
we must evaluate \texttt{x}
to choose the correct arm of the conditional. Hence, \texttt{f1} is strict in \texttt{x}.
Similarly, \texttt{f2} is strict in \texttt{z} because it returns its argument:
if the result of a call to \texttt{f2} is scrutinized, then its argument is too.
The extensional definition agrees with these labels:
\texttt{f1}'s argument \texttt{x} can be evaluated eagerly without changing the program's observable
behavior, since its value will always be needed by \texttt{f1}'s body.
Similarly, the identity function \texttt{f2} is extensionally strict:
given $\bot$ as its argument, its result will also be $\bot$.

The definitions disagree, however, on the strictness of \texttt{u} in \texttt{f3}.
The $\bot$ term used by the extensional definition represents the result of a
failing or non-terminating computation,
and, according to this model, \texttt{f3} always returns $\bot$, as it will never terminate.
Accordingly, the extensional definition of strictness would describe \texttt{f3} as strict in its argument:
if \texttt{f3 u} = $\bot$ for all \texttt{u}, then \texttt{f3} $\bot$ = $\bot$ trivially.
However, \texttt{f3} does not actually mention \texttt{u} anywhere in its body, let alone use it strictly,
so the intensional definition would describe it as lazy,
more accurately describing how the function interacts with its argument.
This difference can matter in practice:
consider an application of \texttt{f3} to an argument that throws an exception.
Eagerly evaluating that argument would change the behavior of the function call,
causing an error where the call would otherwise run forever.
The extensional definition does not capture this nuance, and
the additional precision afforded to us in this case by reasoning
syntactically exemplifies the intensional nature of our new definition.

Our new definition of strictness also provides an opportunity to be more precise about
what lazy usage means by splitting the ``lazy'' label into two separate characterizations.
We describe \texttt{f4} as lazy in \texttt{y},
since constructors in non-strictly-evaluated languages do not scrutinize their arguments,
and scrutiny of \texttt{f4}'s return value
only requires evaluating to the top-level
constructor (in Haskell and similar languages, strict usage requires evaluation to weak head normal form \cite{Haskell-whnf}).
The traditional definition would also describe \texttt{f4} as lazy, but
laziness in the traditional sense only means that \texttt{f4}'s argument
\emph{might} not be scrutinized and hence cannot be pre-evaluated.
Definition \ref{def:fn-strict}, by comparison, provides more clarity by
telling us that \texttt{f4}'s argument is \emph{definitely} not scrutinized,
and thus that a call to \texttt{f4} will still succeed
even if its argument fails to produce a value.

In \texttt{f1}, on the other hand,
the argument \texttt{a} is scrutinized by \texttt{+},
but only in the success branch of the conditional;
in the failure branch, it is not used at all.
Definition \ref{def:fn-strict} tells us that \texttt{f1} is not strict in
\texttt{a},
as \texttt{a} is not used strictly on all possible execution paths.
However, \texttt{f1} is not lazy in \texttt{a} either, since
\texttt{a} may be scrutinized if \texttt{f1}'s return value is.
Traditional strictness analysis would label
\texttt{f1} as lazy in \texttt{a},
since eagerly evaluating \texttt{a} might change the function's behavior.
But this is misleading; some calls to \texttt{f1} may actually scrutinize \texttt{a}.
It is more precise to say that \texttt{f1}'s strictness is \textit{indeterminate}
with respect to \texttt{a}, reflecting the fact that \texttt{a} may be
used differently by different calls to \texttt{f1}.
In general, the ``indeterminate'' classification used by intensional
strictness provides the same information as the traditional ``lazy'' label,
while the intensional ``lazy'' label provides additional precision.

These examples illustrate the ways in which intensional strictness
refines the extensional definition embodied in existing strictness-analysis algorithms:
it more precisely characterizes cases that the traditional definition would label lazy,
splitting the usual notion of lazy usage into ``indeterminate'' use and ``definitely lazy'' use.
Furthermore, it decouples reasoning about
strictness from orthogonal questions of extensional behavior,
describing variable use intensionally and distinguishing between the different ways that computation can fail.

Definition \ref{def:fn-strict} is informal, however:
we still need a rigorous and formal model to undergird it.
To generalize our reasoning about variables and functions to
arbitrary terms and positions, the model must be compositional and syntactic,
so a type-theoretic approach is naturally suited for this task.
However, it is not clear whether existing type systems like usage typing \cite{turner-1995, wansbrough-1999, wansbrough-2005, girard-linear, Petricek-coeffect-2014}
and information-flow typing \cite{denning-denning-info,volpano-secure-flow,Zdancewic-2001,zdancewic-secure-002,Palsberg-flow}
are capable of modeling intensional strictness (for reasons detailed in Section \ref{sec:related}),
so we will need to develop a type system of our own.

\vspace*{-0.5em}

\section{A Type System for Strictness}  \label{sec:cbn}

\newcommand{\cbnvdash}{\vdash_{\texttt{CBN}}}

\begin{figure}
  \[
    \begin{array}{llcl}
      \coeffcolor{\textit{strictness attributes }} & \coeffsymbol & \defas & \qstrict \alt \qlazy \alt \qunknown \\
      \coeffcolor{\textit{strictness effects }} & \coeffcolor{\gamma} & \defas & \coeffcolor{\cdot} \alt \coeffcolor{\gamma, x : \coeffsymbol} \\
      \textit{types } & \tau & \defas & \texttt{unit} \alt
      \tau_1^{\coeffcolor{\gamma_1}} \times \tau_2^{\coeffcolor{\gamma_2}} \alt
      \tau_1^{\coeffcolor{\gamma_1}} + \tau_2^{\coeffcolor{\gamma_2}} \alt
      (x :^{\coeffsymbol} \tau^{\coeffcolor{\gamma_1}}) \xrightarrow{\coeffcolor{\gamma_2}} \tau\\
      \textit{contexts } & \Gamma & \defas & \cdot \alt \Gamma, x : \tau^{\coeffcolor{\gamma}} \\
      & & & \\
      \textit{expressions } & e & \defas & () \alt x \alt \textbf{inl } e \alt
      \textbf{inr } e \alt (e, e) \alt \lambda x. e \alt e_1\;e_2 \alt \textbf{let } x = e \textbf{ in } e  \alt \textbf{sub } e \\
      & & & \alt e; e \alt \textbf{let } (x_1, x_2) = e \textbf{ in } e
      \alt \textbf{case } e \textbf{ of inl } x_1 \rightarrow e, \textbf{ inr } x_2 \rightarrow e
    \end{array}
  \]
  \ifextended  \vspace*{-0.5\baselineskip} \else \vspace*{-0.5\baselineskip} \fi
  \caption{Types and Syntax of \cbnlangname}
  \ifextended \vspace*{-1.5\baselineskip} \else \vspace*{-0.5\baselineskip} \fi
  \label{fig:cbn-syntax}
\end{figure}

We present a call-by-name language, \cbnlangname,
that tracks intensional strictness in its type system using effects.
The types and syntax are given in Figure \ref{fig:cbn-syntax}.
As a visual aid,
we typeset the parts of \cbnlangname that are specific to strictness in \coeffcolor{orange},
since these are the novel aspects of the system.
Parts in black are standard.
Along with the usual type introduction and elimination forms,
the language includes a sequencing operation $e_1;\;e_2$, which strictly uses the \texttt{unit}-typed
result of $e_1$ before continuing with $e_2$.

\begin{wrapfigure}{r}{0.30\textwidth}
  \centering

  \vspace*{-1\baselineskip}

  \begin{tikzpicture}

    \node at (0,0)    (U)  {$\qunknown$};
    \node at (0.75,0.75)    (S)  {$\qstrict$};
    \node at (-0.75,0.75)   (L)  {$\qlazy$};

    \draw (U)   -- (L);
    \draw (U)   -- (S);
  \end{tikzpicture}

  \vspace*{-0.75\baselineskip}

  \caption{Strictness semilattice}
  \label{fig:lattice}
  \vspace*{-0.75\baselineskip}
\end{wrapfigure}

\cbnlangname assigns every variable one of three \textit{strictness attributes} to track its usage,
ordered according to the semilattice depicted in Figure
\ref{fig:lattice}.  The metavariable $\coeffsymbol$ ranges over
strictness attributes.
The $\qstrict$ attribute asserts that a variable is definitely used
strictly at least once on every execution path.
The $\qlazy$ attribute asserts the opposite: that a variable is used
only lazily (or not at all) on every execution path.\footnote{
  To simplify the type system, we assign both lazily-used and unused variables the $\qlazy$ attribute. Later in Section \ref{sec:absence}
  we will show how to distinguish these two kinds of use at the cost of some additional complexity.
}
The $\qunknown$ attribute asserts nothing; a variable with this annotation
may be used strictly, or lazily, or sometimes one and sometimes the
other (on different execution paths), or not at all.
We need the last of these because static strictness tracking
(via types or otherwise) is inherently imprecise;
in the presence of branching it is not always possible to be certain
of how a variable will be used.

However, these attributes alone are not sufficient to track strictness.
As \citet{McDermottMycroft-2018} explain,
scrutinizing variables in a non-strictly evaluated language can cause
other code to be evaluated,
and we must track how this evaluation uses variables as well.
Accordingly, besides the attributes that appear on \cbnlangname types and variables
to describe their usage,
types also include \textit{attribute vectors} (denoted with the metavariable $\coeffcolor{\gamma}$)
that describe how all in-scope variables are used by the
suspended computations that inhabit a given type.
These vectors $\coeffcolor{\gamma}$ are \textit{effects}
modeling the additional evaluation triggered by variable scrutiny;
the type system tracks these effects by associating types and variables
with $\coeffcolor{\gamma}$s describing what will happen when their values are scrutinized.
(Further discussion of this terminology can be found in Section \ref{subsec:naming}.)
We write the vector that maps all in-scope variables to $\qlazy$ as $\lazyvec$, and,
to reduce clutter, omit $\coeffcolor{\gamma}$s in examples when they are $\lazyvec$.
Additionally, we freely omit any variables with $\qlazy$ attributes
from $\coeffcolor{\gamma}$s. That is,
$\coeffcolor{x:\qstrict, y :\qlazy}$ and $\coeffcolor{x:\qstrict}$ denote the same vector.

\subsection{A Few Examples} \label{subsec:examples} Before exploring the typing rules of \cbnlangname in detail,
let's look at a few examples to build intuition.
For each example term $e$, we'll describe how \cbnlangname types the term using a
judgment of the form $\Gamma \cbnvdash e :^{\coeffcolor{\gamma}} \tau$.
The $\Gamma$ in this judgment is a typing context,
$\tau$ is the type of the term,
and $\coeffcolor{\gamma}$ maps each
free variable in $e$ to a strictness attribute describing how it is used.

\subsubsection*{Addition}

As an introductory example, consider the term $x + 1$.
This term uses $x$ strictly, as addition requires the value of $x$ to be known in order to add to it.
\cbnlangname checks this term using the judgment
$$x:\texttt{Int}  \cbnvdash x + 1 :^{\coeffcolor{x:\qstrict}} \texttt{Int}.$$
The orange $\coeffcolor{x:\qstrict}$ that appears on the $:$ is where we find the strictness
information for the term being checked; $x$ is assigned an $\qstrict$
attribute, reflecting its strict usage.

\subsubsection*{Pairs and $\eta$-Laws}

Next, we consider more complex data such as pairs.
Because constructors in \cbnlangname
do not require their arguments to be values, the types for such data
must contain information about how their components use their free variables.
Thus, \cbnlangname's types store separate $\coeffcolor{\gamma}$s for each of their components
describing how they would use any in-scope variables if evaluated.

Consider a pair $(z, \texttt{true})$ in a context where $z$ has the type $\texttt{Bool}$.
This term itself uses $z$ lazily (i.e., with effect $\coeffcolor{z:\qlazy}$),
but we also need to know how the components of this term would use $z$,
were we to scrutinize them.
\cbnlangname assigns this term the type
$\texttt{Bool}^{\coeffcolor{z:\qstrict}} \times \texttt{Bool}$---let's
call it $P$ for short.

$P$ describes pairs where
each side contains a thunk that, when evaluated,
will produce a boolean.
The orange vectors in $P$
tell us that the two sides of the product use the in-scope
variable $z$ differently. If $P$'s first element is used strictly,
$z$ will be used strictly, but if its second element is used strictly,
$z$ will be used only lazily.
\cbnlangname types this term with the judgment
$$z:\texttt{Bool} \cbnvdash (z, \texttt{true}) :
\texttt{Bool}^{\coeffcolor{z:\qstrict}} \times
\texttt{Bool}.$$
In general, the attributes on a judgment's :
describe how a term uses its free variables \textit{now}, while
those in the term's type describe how
it might use its free variables \textit{later} as more of it is scrutinized.

Pairs and other structured types in \cbnlangname exhibit a crucial difference from other
type systems for CBN calculi.
Consider the terms $x$ and $( \textbf{fst } x, \textbf{snd } x)$
in a context where $x$ has type $\texttt{Bool} \times \texttt{Bool}$.
In a pure CBN calculus, these two terms are considered
$\eta$-equivalent and have the same type.
In effectful settings like this one, however, the $\eta$-equivalence of these terms breaks down;
any effects that would result from evaluating $x$
occur only when scrutinizing the first term.
The pair constructor in the second term suspends the uses of $x$,
so $x$ would not be evaluated at all if this term's top-level constructor were scrutinized.
In short, the first term uses $x$ strictly, while the second uses it lazily.
\cbnlangname reflects this by requiring the two terms to be typed
differently: $x$ is typed with
$$x:\texttt{Bool} \times \texttt{Bool}
\cbnvdash x :^{\coeffcolor{x:\qstrict}} \texttt{Bool} \times \texttt{Bool},$$
whereas $( \textbf{fst } x, \textbf{snd } x)$ is typed with
$$x:\texttt{Bool} \times \texttt{Bool}
\cbnvdash ( \textbf{fst } x, \textbf{snd } x) :
\texttt{Bool}^{\coeffcolor{x:\qstrict}} \times \texttt{Bool}^{\coeffcolor{x:\qstrict}}.$$
While the first term uses $x$ strictly,
the second term uses $x$ lazily and actually has a different type, one that
also says that it uses $x$ strictly in each of its components.

\subsubsection*{Contexts, Functions, and Latent Effects}

So far, we have not considered the fact that, in non-strictly evaluated languages, accessing
a variable may cause a thunk to be evaluated and thus other variables
to be scrutinized.
To track this deferred usage, typing contexts $\Gamma$ map variables to both types $\tau$
and vectors $\coeffcolor{\gamma}$; these $\coeffcolor{\gamma}$s describe
the strictness with which any in-scope variables will be used if the thunked expression
stored in a given variable is strictly used.
Consider a program
$\textbf{let } x = \textbf{if } y \textbf{ then } \texttt{true} \textbf{ else }\texttt{false} \textbf{ in } x$.
Assuming $y$ is in scope with type $\texttt{Bool}$,
$x$ is also a \texttt{Bool}, but one that uses $y$ strictly if evaluated.
Zooming in on the \textbf{let}'s body, \cbnlangname checks
$$y: \texttt{Bool}, x: \texttt{Bool}^{\coeffcolor{y:\qstrict}} \cbnvdash
x :^{\coeffcolor{y:\qstrict, x:\qstrict}} \texttt{Bool}.$$
In the context, $x$ is associated both with its type and with a
vector $\coeffcolor{y:\qstrict}$, capturing the fact that any strict use of $x$
will cause a strict use of $y$.
This fact is also reflected in the judgment's output effect,
which tells us that strict use of the term being checked will cause
a strict use of $y$.

Extending this principle from open terms to functions, it is apparent that
function types in \cbnlangname must also describe how their
arguments use other variables.
Function types in \cbnlangname thus
have the shape $(x :^{\coeffsymbol} \tau_1^{\coeffcolor{\gamma_1}}) \xrightarrow{\coeffcolor{\gamma_2}} \tau_2$.
The $\coeffsymbol$ (either $\qstrict$, $\qlazy$ or $\qunknown$) above the $:$ describes how the function uses its argument $x$,
while the vector $\coeffcolor{\gamma_2}$ describes the ``latent effect''
that will be produced when the function is called.
The vector $\coeffcolor{\gamma_1}$, on the other hand,
describes the effect that
the function's argument can have, i.e., how other variables will be used when the argument is evaluated.
Note also that this function type is dependent, in the sense that $\tau_2$ may mention $x$ in its attribute vectors.
A more detailed explanation of why this is necessary can be found in Section \ref{subsec:cbntyping}.

So, for example, one possible judgment for the identity function would be
$$y: \texttt{Bool} \cbnvdash
\lambda x.\;x :
(x:^{\qstrict} \texttt{Bool}^{\coeffcolor{y:\qstrict}}) \xrightarrow{\coeffcolor{y:\qstrict}} \texttt{Bool}.$$
This typing allows the identity to accept a boolean argument that uses $y$ strictly when evaluated,
and it tells us that the function uses its argument strictly and will use $y$ strictly when called.
The judgment also tells us that the function definition itself uses $y$ lazily,
since the use of the function's argument in the body is not evaluated if the function is not
called.\footnote{Note that this typing also would mean that this identity function could not accept
  an argument that used $y$ lazily, as the function's type would then be inaccurate
  about how its body uses $y$.
  Requiring function types to declare the
  effects associated with their arguments is restrictive,
  but prior work \cite{McDermottMycroft-2018}
  has addressed this limitation via effect polymorphism \cite{rytz-lightweight-2012,lucassen-polymorphic-1988},
  and, as we discuss in Section \ref{sec:future},
  the same approach should work here.
  Polymorphism would be an orthogonal addition to \cbnlangname,
  however, and
  we focus on monomorphic type systems here for clarity and
simplicity.}

\subsubsection*{Unpacking a Pair}
Consider the term $\textbf{let } (a, b) = x \textbf{ in } b$
in a context where $z$ has type $\texttt{Bool}$
and $x$ has type $P$ (that is,
  $\texttt{Bool}^{\coeffcolor{z:\qstrict}} \times
\texttt{Bool}$).
Assume also that $x$ does not use $z$ strictly when evaluated.
The following term unpacks a pair and returns its second element, and
\cbnlangname types it with
$$ z:\texttt{Bool}, x:P \cbnvdash
\textbf{let } (a, b) = x \textbf{ in } b :^{\coeffcolor{x:\qstrict}} \texttt{Bool}.$$
This judgment tells us that the term uses $x$ strictly
(since it is destructured by let-binding) but uses $z$ lazily.
This latter fact is a result of $x$'s type $P$,
which asserts that
strict use of its second element results in lazy usage of $z$.
When the term uses $b$ (the second element of $x$) strictly,
that usage therefore results in a lazy usage of $z$.
Note that, as the variable $b$ itself is local to this term,
we do not need to report $b$'s usage in the final typing
judgment.

Now consider, instead, a term that returns $x$'s first element:
$\textbf{let } (a, b) = x \textbf{ in } a$.
We could not type this term with the same judgment we previously used.
$P$ tells us that strict use of $a$ causes a strict use of $z$,
and therefore this term would have to be typed with the judgment
$$ z:\texttt{Bool}, x:P \cbnvdash
\textbf{let } (a, b) = x \textbf{ in } a :^{\coeffcolor{z:\qstrict, x:\qstrict}} \texttt{Bool}.$$

\subsubsection*{Indeterminate use}
Finally, consider a term that returns either the first or second element of
a $P$-typed variable $x$, depending on the value of another variable $y$:
$$\textbf{let } (a, b) = x \textbf{ in } \textbf{if } y \textbf{ then } a \textbf{ else } b.$$
What type can we give to this term?
Clearly it uses both $y$ and $x$ strictly,
but the way it uses $z$ depends on which branch of the $\textbf{if}$ statement is taken.
The success branch uses $z$ strictly, while the failure branch uses it lazily,
so we say that this term is \textit{indeterminate} in its usage of $z$.
As written, we cannot assign it a type,
but we can make a small modification and produce another term like so:
$$\textbf{let } (a, b) = x \textbf{ in }
\textbf{if } y \textbf{ then } (\textbf{sub }a) \textbf{ else } (\textbf{sub }b).$$
The \textbf{sub} annotation makes explicit that we are
using \textit{subsumption} when typing this term;
we forget information about how $z$ is used in each branch
(moving down the semilattice in Figure \ref{fig:lattice})
to derive
$$
z:\texttt{Bool}, y:\texttt{Bool}, x:P \cbnvdash
\textbf{let } (a, b) = x \textbf{ in }
\textbf{if } y \textbf{ then } (\textbf{sub }a) \textbf{ else } (\textbf{sub }b)
:^{\coeffcolor{y:\qstrict,x:\qstrict, z:\qunknown}} \texttt{Bool}.
$$
This judgment asserts that the term uses $y$ and $x$ strictly,
but says that the usage of $z$ is not known to be either strict or lazy.
The explicit \textbf{sub} syntax is useful for the proof of Lemma \ref{lem:fundamental-lemma-strict},
which requires inversion of the semantic judgment; without it the language would lose its
syntax-directed character and would require additional work to make its semantic rules invertible.
However, this syntax is included purely for technical convenience and
does not change the system's expressiveness.


\subsection{\cbnlangname Overview} \label{subsec:cbntyping}

\newcommand{\cbntypinginlinr}{
  $$
  \infer[\rlabel*{T-CBN-Inl}]
  {\Gamma \cbnvdash \textbf{inl } e :^{\lazyvec} \tau_1^{\coeffcolor{\gamma_1}} + \tau_2^{\coeffcolor{\gamma_2}}}
  {\Gamma \cbnvdash e :^{\coeffcolor{\gamma_1}} \tau_1 } \qquad
  \infer[\rlabel*{T-CBN-Inr}]
  {\Gamma \cbnvdash \textbf{inr } e :^{\lazyvec} \tau_1^{\coeffcolor{\gamma_1}} + \tau_2^{\coeffcolor{\gamma_2}}}
  {\Gamma \cbnvdash e :^{\coeffcolor{\gamma_2}} \tau_2 }
  \vspace{2.75ex}
  $$
}

\newcommand{\cbntypingpairseqsplit}{
  $$
  \infer[\rlabel*{T-CBN-Unit}]
  {\Gamma \cbnvdash () :^{\lazyvec} \texttt{unit}}
  {} \qquad
  \infer[\rlabel*{T-CBN-Seq}]
  {\Gamma \cbnvdash e_1 ; e_2 :^{\coeffcolor{\gamma_1 + \gamma_2}} : \tau}
  {\Gamma \cbnvdash e_1 :^{\coeffcolor{\gamma_1}} \texttt{unit} &
  \Gamma \cbnvdash e_2 :^{\coeffcolor{\gamma_2}} \tau}
  \vspace{2.75ex}
  $$
  $$
  \infer[\rlabel*{T-CBN-Split}]
  {\Gamma \cbnvdash \textbf{let } (x_1, x_2) = e_1 \textbf{ in } e_2
  :^{\coeffcolor{\gamma_1 + (\downshift{x_1} \downshift{x_2} \gamma_2)}} \downshift{x_1} \downshift{x_2} \tau}
  {\Gamma \cbnvdash e_1 :^{\coeffcolor{\gamma_1}} \tau_1^{\coeffcolor{\gamma_1'}} \times \tau_2^{\coeffcolor{\gamma_2'}} &
    \Gamma, x_1 : \tau_1^{\coeffcolor{\gamma_1'}}, x_2 : \tau_2^{\coeffcolor{\gamma_2'}}
  \cbnvdash e_2 :^{\coeffcolor{\gamma_2}} \tau}
  \vspace{2.75ex}
  $$
}

Having worked through some examples to build intuition,
we can begin to discuss the details of how \cbnlangname derives typing judgments.
Strictness attributes are elements of a \textit{preordered monoid}
equipped with a $\coeffcolor{+}$ operator to combine attributes and a $\coeffcolor{\leq}$ comparator to order them.
The $\coeffcolor{+}$ operator has $\qlazy$ as its identity and is defined in Table \ref{table:plus}.

\begin{wraptable}{r}{0.3\textwidth}
  \ifextended \else \vspace*{-1.75\baselineskip} \fi
  \ifextended \else \caption{Definition of $\coeffcolor{+}$}\label{table:plus} \fi 
  \ifextended \else \vspace*{-1\baselineskip} \fi
  \centering
  \begin{tabular}{l|llll}
    $\coeffcolor{+}$ & $\qstrict$ & $\qunknown$ & $\qlazy$ \\ \cline{1-4}
    $\qstrict$ & $\qstrict$ & $\qstrict$ & $\qstrict$  \\
    $\qunknown$ & $\qstrict$ & $\qunknown$ & $\qunknown$  \\
    $\qlazy$ & $\qstrict$ & $\qunknown$ & $\qlazy$
  \end{tabular}

  \captionsetup{justification=centering}
  \ifextended \caption{Definition of $\coeffcolor{+}$}\label{table:plus} \fi 
  \ifextended \vspace*{-2\baselineskip} \else \vspace*{-0.75\baselineskip} \fi
\end{wraptable}

The $\coeffcolor{\leq}$ comparator orders attributes according to the semilattice depicted in Figure \ref{fig:lattice},
which is notably different from the order induced by $\coeffcolor{+}$.
Instead, $\coeffcolor{\leq}$ orders attributes by information: one attribute is greater than
another when it gives us more certainty about how a variable will be used.

We lift $\coeffcolor{+}$ and $\coeffcolor{\leq}$ to operate over vectors $\coeffcolor{\gamma}$ pointwise,
noting that $\coeffcolor{\leq}$ forms a preorder over $\coeffcolor{\gamma}$s,
where a vector $\coeffcolor{\gamma_1}$ is only considered $\coeffcolor{\leq}$ a vector $\coeffcolor{\gamma_2}$
if every variable in $\coeffcolor{\gamma_1}$ has an attribute that is $\coeffcolor{\leq}$
the corresponding attribute in $\coeffcolor{\gamma_2}$.
So, for example, the vectors $\coeffcolor{x:\qunknown, y :\qlazy}$ and $\coeffcolor{x:\qlazy, y :\qunknown}$
are unrelated.\footnote{This still endows $\coeffcolor{\gamma}s$ with sufficient structure to be elements of a preordered monoid,
a requirement for them to behave like effects \cite{katsumata-2014}.}

\begin{figure}
  $$
  \infer[\rlabel*{T-CBN-Var}]
  {\Gamma \cbnvdash x :^{\coeffcolor{\gamma + x:\qstrict}} \tau}
  {x : \tau^{\coeffcolor{\gamma}} \in \Gamma} \qquad
  \infer[\rlabel*{T-CBN-Pair}]
  {\Gamma \cbnvdash (e_1, e_2) :^{\lazyvec}
  \tau_1^{\coeffcolor{\gamma_1}} \times \tau_2^{\coeffcolor{\gamma_2}}}
  {\Gamma \cbnvdash e_1 :^{\coeffcolor{\gamma_1}} \tau_1 &
  \Gamma \cbnvdash e_2 :^{\coeffcolor{\gamma_2}} \tau_2}
  \vspace{2.75ex}
  $$
  \ifextended
  \cbntypinginlinr
  \fi
  $$
  \infer[\rlabel*{T-CBN-Sub}]
  {\Gamma \cbnvdash  \textbf{sub } e :^{\coeffcolor{\gamma}} \tau}
  {\Gamma \cbnvdash e :^{\coeffcolor{\gamma'}} \tau & \coeffcolor{\gamma \leq \gamma'}} \qquad
  \infer[\rlabel*{T-CBN-Let}]
  {\Gamma \cbnvdash \textbf{let } x = e_1 \textbf{ in } e_2 :^{\coeffcolor{(\downshift{x}\gamma_2)}} \downshift{x} \tau_2}
  {\Gamma \cbnvdash e_1 :^{\coeffcolor{\gamma_1}} \tau_1 &
  \Gamma, x : \tau_1^{\coeffcolor{\gamma_1}} \cbnvdash e_2 :^{\coeffcolor{\gamma_2}} \tau_2}
  \vspace{2.75ex}
  $$
  $$
  \infer[\rlabel*{T-CBN-Abs}]
  {\Gamma \cbnvdash \lambda x. e :^{\lazyvec}
  (x :^{\coeffsymbol} \tau_1^{\coeffcolor{\gamma_1}}) \xrightarrow{\coeffcolor{\gamma_2}} \tau_2}
  {\Gamma, x :^{\coeffcolor{\gamma_1}} \tau_1 \cbnvdash e :^{\coeffcolor{\gamma_2, x : \coeffsymbol}} \tau_2 } \qquad
  \infer[\rlabel*{T-CBN-App}]
  {\Gamma \cbnvdash e_1\;e_2 :^{\coeffcolor{\gamma_2 + \gamma_3}}  \downshift{x} \tau_2}
  {\deduce{\Gamma \cbnvdash e_1 :^{\coeffcolor{\gamma_3}}
    (x :^{\coeffsymbol} \tau_1^{\coeffcolor{\gamma_1}}) \xrightarrow{\coeffcolor{\gamma_2}} \tau_2}
    {\Gamma \cbnvdash e_2 :^{\coeffcolor{\gamma_1}} \tau_1}
  }
  \vspace{2.75ex}
  $$
  \ifextended
  \cbntypingpairseqsplit
  \fi
  $$
  \infer[\rlabel*{T-CBN-Case}]
  {\Gamma \cbnvdash \textbf{case } e_1 \textbf{ of inl } x_1 \rightarrow e_2,
  \textbf{ inr } x_2 \rightarrow e_3 :^{\coeffcolor{\gamma_1 + \gamma_2}} \tau}
  {\deduce
    {\deduce
      {}
      {\Gamma, x_1: \tau_1^{\coeffcolor{\gamma_1'}}, \cbnvdash e_2 :^{\coeffcolor{\gamma_2, x_1:\coeffsymbol_1}} \tau_1' \;\;&\;\;
        \Gamma, x_2: \tau_2^{\coeffcolor{\gamma_2'}}, \cbnvdash e_3 :^{\coeffcolor{\gamma_2, x_2:\coeffsymbol_2} } \tau_2'
      }
    }
    {\Gamma \cbnvdash e_1 :^{\coeffcolor{\gamma_1}} \tau_1^{\coeffcolor{\gamma_1'}} + \tau_2^{\coeffcolor{\gamma_2'}} \;\;&\;\;
    \coeffcolor{\tau = \downshift{x_1} \tau_1' = \downshift{x_2} \tau_2'} }
  }
  $$
  \caption{Typing rules for \cbnlangname}
  \ifextended \vspace*{-0.75\baselineskip} \else \fi
  \label{fig:cbn-typing}
\end{figure}

\cbnlangname's type system incorporates effects $\coeffcolor{\gamma}$ into its typing judgment to track
the strictness of each variable in the context;
\ifextended \else a selection of \fi the typing rules can be found in Figure \ref{fig:cbn-typing}.
We give a named presentation of these typing rules here,
but our Rocq mechanization \cite{artifact}
uses a de Bruijn representation.

The simplest rule of interest is \rref{T-CBN-Var},
which looks up $x$ in the context $\Gamma$ and produces the effect $\coeffcolor{\gamma}$ that was latent there.
In addition to this effect, the rule also adds a strict usage of $x$ to its output effect
to reflect that the value stored in $x$ was strictly used.

Next, since let-binding is lazy,
the \rref{T-CBN-Let} rule does not realize the $\coeffcolor{\gamma_1}$ effects of the let-bound expression $e_1$.
Rather, it associates those effects with the bound variable $x$ in the context,
to be produced if that variable is strictly used later.
The rule realizes only the effects $\coeffcolor{\gamma_2}$ associated with $e_2$,
as any strict uses of $x$ in $e_2$ will already have produced $\coeffcolor{\gamma_1}$,
which will be included in $\coeffcolor{\gamma_2}$.

Also of note is how \rref{T-CBN-Let} handles scoping in its result.
During the checking of $e_2$, we must have an effect in the context for $x$.
However, once $x$ goes out of scope,
its usage is no longer of interest, and we drop its attributes from $\coeffcolor{\gamma_2}$ and $\tau_2$
via the $\coeffcolor{\downshift{x}}$ operator.
This operator traverses the structure of a type and removes any mappings for the variable $x$ from
all vectors in the type.
To understand why this is reasonable, consider that our goal in modeling
Definition \ref{def:fn-strict} is to know the strictness of each function's
argument---that is, whether an argument is used strictly or lazily by a function's body.
Given this,
the strict or lazy usage of a function's local variable is not of
external interest
and can be discarded when that variable goes out of scope.
Similarly, once we leave the scope of a let expression, we no longer
care to track the strictness of the variable it binds.
A detailed example of this behavior can be found in Section \ref{subsec:typing}.

The function introduction and elimination rules, $\rref{T-CBN-Abs}$ and $\rref{T-CBN-App}$,
are where we achieve the primary desideratum of the system:
the ability to annotate arrow types with the strictness of the functions they classify.
In the \rref{T-CBN-Abs} rule,
the body of the function is checked in a context where the input effect
$\coeffcolor{\gamma_1}$ is associated with the parameter $x$.
This will realize effects for the body,
which we can split into the portion $\coeffcolor{\gamma_2}$ not mentioning $x$
plus whatever $\coeffsymbol$ is associated with $x$ directly,
which tells us how $x$ is used in the function body.

The function type used in this rule is dependent because \cbnlangname function types
need to mention the variables they close over. Consider
the function $\lambda x.\;\lambda y.\;x$, where $x$ has type $\tau_1$ and $y$ type $\tau_2$.
This function is lazy in its argument; when partially applied it will yield the function
$\lambda y. x$ and will not evaluate its argument.
However, the inner function is strict in its free variable $x$, and its type must reflect that.
It therefore must have type $(y:^{\qlazy} \tau_2) \xrightarrow{\coeffcolor{x:\qstrict}} \tau_1$,
which then requires us to type the overall function at $(x:^{\qlazy} \tau_1) \xrightarrow{}
(y:^{\qlazy} \tau_2) \xrightarrow{\coeffcolor{x:\qstrict}} \tau_1$.
We see here that the return type of the overall function
(that is, the type of the inner function) must be able to mention the variable that the
overall function abstracts in order to properly describe its usage.
On the other hand, $x$ does \textit{not} need to appear
on the $\xrightarrow{}$ portion of the outer function type,
as $x$'s lazy usage in the outer function body is already
reflected in the $\qlazy$ associated with the argument.

The \rref{T-CBN-App} rule
enforces that the effects realized by the argument expression $e_2$,
namely $\coeffcolor{\gamma_1}$, match the effects expected by the function.
These effects are already included in the latent effects $\coeffcolor{\gamma_2}$ of the function type,
however. Therefore, as in the \rref{T-CBN-Let} rule, they do not need to be added
to the resulting effects for this rule.
Instead, the rule adds together only the effects realized by the body of the function
and the effects realized by the computation of the function $e_1$.

The \rref{T-CBN-App} rule also manages its scope similarly to
the \rref{T-CBN-Let} rule by dropping the abstracted variable from the result of the
function's return type after the call.
The $\coeffsymbol$ on the applied function type does not appear in
the rule's conclusion for similar reasons:
the calling scope does not need to track how the function body uses its argument.

The last point of note concerns effect subsumption.
Some imprecision is unavoidable in \cbnlangname as a result of the branching
in the \textbf{case} expression. We could soundly restrict
this imprecision to the \rref{T-CBN-Case} rule, but we instead allow
subsumption to happen anywhere via the
\rref{T-CBN-Sub} rule,
as it allows us to type more programs.

Consider, for example, the standard Church encoding of the booleans,
$\lambda x. \lambda y. x$ and $\lambda x. \lambda y. y$ \cite{church-boolean}.
To be a \textit{usable} encoding of the booleans,
both of these functions must inhabit the same type,
but without subsumption we would be forced to assign them the
types
\ifextended \else \vspace*{-.25\baselineskip} \fi
$$\lambda x. \lambda y. x \in (x:^{\qlazy} A) \xrightarrow{}
(y:^{\qlazy} A) \xrightarrow{\coeffcolor{x:\qstrict}} A
\;\;\;\;\;\text{and}\;\;\;\;\;
\lambda x. \lambda y. y \in (x:^{\qlazy} A) \xrightarrow{}
(y:^{\qstrict} A) \xrightarrow{\coeffcolor{x:\qlazy}} A.$$
The outer function type uses $x$ lazily in both cases
because the inner function suspends all usages
in its body; if, for example, we were to partially apply the first function,
this would not result in a strict usage of $x$ because the result
of that partial application would be a function that uses $x$ in its body.
But the types disagree
on the strictness of $y$
and on how the inner function uses $x$ in its body.

The difference makes these types
unsuitable for a boolean encoding.
With subsumption, however,
we can encode the Church booleans as $\lambda x. \lambda y. \textbf{sub } x$ and $\lambda x. \lambda y. \textbf{sub } y$
and assign both the less exact, but more useful, type%
\footnote{An effect-polymorphic extension of \cbnlangname could instead type both terms polymorphically.}
$$(x:^{\qlazy} A) \xrightarrow{}
(y:^{\qunknown} A) \xrightarrow{\coeffcolor{x:\qunknown}} A.$$

The other rules \ifextended \else (available in the extended version of this paper \cite{extended-version}) \fi are straightforward, keeping in mind that in \cbnlangname
all constructors are lazy.
Hence, all the introduction rules have an $\lazyvec$ effect
and store any latent effects on the type of the value they produce.

\section{Strictness in Call-By-Push-Value} \label{sec:details}

We would like to prove that the type system of \cbnlangname is sound
and yields the desired guarantees about usage.
However, carrying out these proofs directly in \cbnlangname is  awkward:
many language features in call-by-name languages can suspend computation,
so proofs about \cbnlangname will involve significant duplicated work.
In particular, the logical relations used in Sections \ref{subsec:semtyping} and \ref{sec:bot}
must enforce certain invariants relating types and values whenever a computation is thunked.
Instead of repeating these checks for every type, we would rather
carry out our metatheoretic
analysis in a language that isolates the thunking operation in
one syntactic construct with a dedicated type to represent it.
We can then translate \cbnlangname to this language in a type-preserving way.

Levy's call-by-push-value (CBPV) \cite{levy-call-by-push-value-1999,levy-call-by-push-value-2001,levy-cbpv-2003,Levy-cbpv-2022,levy-call-by-push-value-2006}
has precisely the features we need.
CBPV makes an explicit distinction between values and computations,
separating the two into different syntactic classes according to the slogan
``a value is, a computation does'' \cite{levy-cbpv-2003}.
In particular, it features a thunk value former,
written $\{M\}$, that suspends a computation $M$ as a value,
and a forcing operation, written $V!$, that forces a thunk $V$ by
running its suspended computation.
$\utype{} B$ is the type of thunks that produce computations of type $B$ when forced.
The thunking construct allows CBPV, despite being strictly evaluated,
to model lazily evaluated languages like \cbnlangname.

This section describes \cbpvlangname,
an extension of CBPV with strictness tracking,
and establishes the metatheoretic properties that make it a useful
model of intensional strictness.

\subsection{Syntax of \cbpvlangname} \label{subsec:syntax}

\begin{figure}

  \[
    \begin{array}{llcl}
      \coeffcolor{\textit{strictness attributes }} & \coeffsymbol & \defas & \qstrict \alt \qlazy \alt \qunknown \\
      \coeffcolor{\textit{attribute vectors }} & \coeffcolor{\gamma} & \defas & \coeffcolor{\cdot} \alt \coeffcolor{\gamma, x : \coeffsymbol} \\
      \textit{value types } & A &\defas & \texttt{unit} \alt \utype{\gamma} B \alt A_1 \times A_2 \alt A_1 + A_2 \\
      \textit{computation types } & B &\defas & A^{\coeffsymbol} \rightarrow B \alt \textbf{F} A \\
      \textit{contexts } & \Gamma &\defas& \cdot \alt \Gamma, x : A \\
      & & & \\
      \textit{values } & V &\defas &() \alt x \alt \{M\} \alt \textbf{inl } V \alt \textbf{inr } V \alt (V_1, V_2) \\
      \textit{computations } &M &\defas &\lambda x.\; M \alt M\;V \alt V! \alt  x \leftarrow M_1 \textbf{ in } M_2 \alt (x_1, x_2) \leftarrow V \textbf{ in } M \alt \textbf{sub } M \\
      & & &\alt\textbf{ret } V \alt V; M \alt \textbf{case } V \textbf{ of inl } x_1 \rightarrow M_1, \textbf{ inr } x_2 \rightarrow M_2
    \end{array}
  \]
  \vspace*{-0.5\baselineskip}
  \caption{Syntax of \cbpvlangname}
  \ifextended \vspace*{-1\baselineskip} \else \vspace*{-0.5\baselineskip} \fi
  \label{fig:CBPV-syntax}
\end{figure}

The primary difference between \cbpvlangname,
given in Figure \ref{fig:CBPV-syntax}, and standard presentations of CBPV
is the presence of an attribute vector $\coeffcolor{\gamma}$ on the
$\utype{}$ type former,
tracking how variables are used when a thunk is forced.
A variable annotated with an $\qlazy$ on a $\utype{}$ type will be
used lazily if a thunk inhabiting that type is forced, while
a variable with \qstrictarticle $\qstrict$ will be
used strictly. Function types also carry an attribute $\coeffsymbol$
describing how arguments are used when functions are applied.

Beyond the $\utype{}$ type, \cbpvlangname features two classes of types to
go with its two classes of terms.
{\em Positive types} (ranged over by $A$) describe values, while
{\em negative types} (written $B$) describe computations.
Value types include $\utype{}$ types along with the unit type, sums, and products.
Computation types include function types and $\ftype$ types, which are dual to $\utype{}$ types.
The \textit{returner type} $\ftype\;A$ describes computations returning values of type $A$.
As before, for technical expedience we extend the usual presentation of CBPV with a
\textbf{sub} term handling attribute subsumption.

\subsection{Typing Rules for \cbpvlangname} \label{subsec:typing}

\cbpvlangname has two typing judgments:
$\coeffcolor{\gamma} \cdot \Gamma \vdash V : A$ for values and
$\coeffcolor{\gamma} \cdot \Gamma \vdash M : B$ for computations.
The shapes of these judgments are somewhat different from the \cbnlangname typing judgment;
instead of appearing on the~ $:$ like an effect, $\coeffcolor{\gamma}$
is zipped with the context $\Gamma$ on the left of the $\vdash$.\footnote{
  Some readers may notice a similarity to coeffect systems in the judgment's shape;
we discuss this similarity in Section \ref{subsec:naming}. }
We can zip and unzip $\coeffcolor{\gamma} \cdot \Gamma$ freely, i.e., we write
$\coeffcolor{\gamma} \cdot \Gamma, x :^{\coeffsymbol} A$ interchangeably with $\coeffcolor{(\gamma, x:\coeffsymbol)} \cdot (\Gamma, x: A)$.
This is different from how \cbnlangname contexts work,
as those associate variables with both types and $\coeffcolor{\gamma}$s.
This difference reflects the fact that \cbpvlangname is evaluated strictly,
rather than lazily like \cbnlangname, so
the use of a variable in \cbpvlangname does not itself trigger any evaluation.
Instead, this triggering occurs in the thunking and forcing rules,
which suspend and resume computation.
Accordingly, whereas in \cbnlangname all type constructors must be graded with attributes
because all type constructors suspend computation,
in CBPV only the $\utype{}$ type is graded because only the $\utype{}$ type suspends computation.
We can think of \cbnlangname types as implicitly containing  $\utype{}$ types
(as is made explicit by translation in Figure \ref{fig:cbn-trans}),
and therefore all \cbnlangname types need to keep track of strictness attributes.

Note also that the definition of contexts requires that variables be bound to value types;
binding a computation to a variable requires it to be explicitly thunked.

As we consider the rules, recall the second example from Section \ref{subsec:examples},
where we described how \cbnlangname breaks the usual $\eta$-equivalence
between $x$ and $( \textbf{fst } x, \textbf{snd } x)$.
The pair constructor in \cbnlangname implicitly
thunks its two arguments;
this is the reason why $x$ is lazy in the latter term but not the former.
To capture this behavior in \cbpvlangname, the thunk value
must change the attributes associated with the computation it suspends,
breaking the $\eta$-equivalence between $V$ and $\{V!\}$.
In particular, we want \cbpvlangname to describe the term $x$ as using
the variable $x$ strictly
but $\{x!\}$ as using it lazily.

\begin{figure}
  \begin{center}
    \begin{tabularx}{1\textwidth}{
        >{\raggedright\arraybackslash}X
        >{\raggedleft\arraybackslash}X
      }
      \fbox{$\coeffcolor{\gamma}\cdot\Gamma \vdash V : A$} & \textit{(Value typing)}
    \end{tabularx}
  \end{center}
  $$
  \infer[\rlabel*{T-Var}]{\lazyvec \cdot \Gamma_1, x :^{\qstrict} A, \lazyvec \cdot \Gamma_2 \vdash x : A}{}
  \qquad
  \infer[\rlabel*{T-Thunk}]{\lazyvec \cdot \Gamma \vdash \{ M \} : \utype{\gamma} B}
  {\coeffcolor{\gamma} \cdot \Gamma \vdash M : B}
  $$

  \vspace{3mm}

  \begin{center}
    \begin{tabularx}{1\textwidth}{
        >{\raggedright\arraybackslash}X
        >{\raggedleft\arraybackslash}X
      }
      \fbox{$\coeffcolor{\gamma}\cdot\Gamma \vdash M : B$} & \textit{(Computation typing)}
    \end{tabularx}
  \end{center}

  $$
  \infer[\rlabel*{T-Return}]
  {\coeffcolor{\gamma} \cdot \Gamma \vdash \textbf{ret } V : \ftype A}
  {\coeffcolor{\gamma} \cdot \Gamma \vdash V : A} \qquad
  \infer[\rlabel*{T-Sub}]
  {\coeffcolor{\gamma} \cdot \Gamma \vdash \textbf{sub } M : B}
  {\coeffcolor{\gamma'} \cdot \Gamma \vdash M : B & \coeffcolor{\gamma \leq \gamma'}}
  \vspace{2.75ex}
  $$
  $$
  \infer[\rlabel*{T-Force}]{\coeffcolor{\gamma_1 + \gamma_2} \cdot \Gamma \vdash V! :  B}
  {\coeffcolor{\gamma_1} \cdot \Gamma \vdash V : \utype{\gamma_2} B} \qquad
  \infer[\rlabel*{T-Let}]
  {\coeffcolor{(\gamma_1 + \gamma_2)} \cdot \Gamma \vdash
  x \leftarrow M_1 \textbf{ in } M_2 : \coeffcolor{\downshift{x}} B}
  {\coeffcolor{\gamma_1} \cdot \Gamma \vdash M_1 : \ftype A &
    \coeffcolor{\gamma_2} \cdot \Gamma, x :^{\coeffsymbol} A \vdash M_2 : B &
  }
  \vspace{2.75ex}
  $$
  $$
  \infer[\rlabel*{T-Seq}]
  {\coeffcolor{(\gamma_1 + \gamma_2)} \cdot \Gamma \vdash V; M : B}
  {\coeffcolor{\gamma_1} \cdot \Gamma \vdash V : \texttt{unit} &
    \coeffcolor{\gamma_2} \cdot \Gamma \vdash M : B &
  }
  \vspace{2.75ex}
  $$
  $$
  \infer[\rlabel*{T-Abs}]
  {\coeffcolor{\gamma} \cdot \Gamma \vdash \lambda x. M : A^{\coeffsymbol} \rightarrow \coeffcolor{\downshift{x}} B}
  {\coeffcolor{\gamma} \cdot \Gamma, x :^{\coeffsymbol} A \vdash M : B} \qquad
  \infer[\rlabel*{T-App}]
  {\coeffcolor{\gamma_1 + \gamma_2} \cdot \Gamma \vdash M\;V :  B}
  {\coeffcolor{\gamma_1} \cdot \Gamma \vdash M :A^{\coeffsymbol} \rightarrow B&
    \coeffcolor{\gamma_2} \cdot \Gamma \vdash V : A
  }
  $$
  \caption{Main typing rules for \cbpvlangname}
  \vspace*{-0.75\baselineskip}
  \label{fig:cbpv-typing-important}
\end{figure}

To type $x$ properly, we must ensure that
variable lookup produces \qstrictarticle $\qstrict$ attribute.
This leads directly to the \rref{T-Var} rule in Figure \ref{fig:cbpv-typing-important}.
The rule requires that only the variable being used has a strict attribute,
isolating any imprecision in \cbpvlangname to the \rref{T-Sub} rule.

To type $\{x!\}$,
we need the thunking operation to make the use of $x$ inside the thunk lazy.
Accordingly, the \rref{T-Thunk} rule suspends all
the uses of the variables in $M$ represented by $\coeffcolor{\gamma}$ and packages them into the $\utype{}$ type,
to be released later if the thunk is forced.
The vector of attributes on the thunk itself is then set to the lazy
$\lazyvec$ vector, since
a thunk value uses nothing strictly.
So, for example, the value $\{\textbf{ret } x\}$
has the type $\utype{x:\qstrict} \textbf{ } \ftype \textbf{ } A$ when $x :^{\qlazy} A \in \coeffcolor{\gamma} \cdot\Gamma$.

The \rref{T-Force} rule is more complex.
Intuitively, the vector of attributes $\coeffcolor{\gamma_1}$ is
used to produce $V$, which has the type $\utype{\gamma_2} B$.
This $\coeffcolor{\gamma_2}$ represents the usages that will occur when $V$ is forced,
so those attributes are ``replayed'' here into the context by adding them
to $\coeffcolor{\gamma_1}$.
So, were we to add an $!$ to our example from before to produce the computation $\{\textbf{ret } x\}!$,
this new program would have the type $\ftype \; A$ when $x :^{\qstrict} A \in \coeffcolor{\gamma} \cdot\Gamma$.
The attributes from the thunk's type have been added
to those used to produce it (in this case $\lazyvec$).

With these three rules,
we can see how \cbpvlangname differentiates $x$ from $\{x!\}$.
Assuming $x$ is bound to a thunk value of type
$\utype{} B$, \cbpvlangname types the two programs with the derivations
$$
x:^{\qstrict} \utype{} B \vdash x : \utype{\coeffcolor{x:\qlazy}} B
\;\;\;\;\;\;\;\;\;\text{and}\;\;\;\;\;\;\;\;\;
x:^{\qlazy} \utype{} B \vdash \{x!\} : \utype{\coeffcolor{x:\qstrict}} B
$$
respectively.
The strict use of $x$ has been moved from the context in the former program onto the
$\utype{}$ type in the latter, describing how the use of $x$
has been suspended by the thunk.

The \rref{T-Abs} and \rref{T-App} rules handle function type introduction and elimination.
Lambda abstraction does not suspend the attributes of the body,
so the $\coeffcolor{\gamma}$ necessary to check the $M$ in $\lambda x. M$ ``passes through''
the function (e.g., $\lambda x. \textbf{ ret } y$ is considered to use $y$ strictly).
Instead, since functions need to
be thunked to be bound to variables,
their attributes will be suspended by the thunk rule.
This treatment of attributes in functions may appear surprising,
but it is common practice in the literature on effects and coeffects in CPBV
\cite{torczon-effects-2024, kammar-2013,kammar-2012}
and has the benefit of
isolating all reasoning about the suspension and resumption of attributes to the
rules for the $\utype{}$ type.
This behavior also means that, unlike \cbnlangname,
\cbpvlangname's arrows do not include their argument in the scope
of their return type.
Consider, for example, the function $\lambda x.\; \lambda y.\; \textbf{ret }x$.
The inner function uses $x$ strictly, and, because this strict usage
passes through the inner function to the outer, we can assign the latter the type
$A_1^{\qstrict} \rightarrow A_2^{\qlazy} \rightarrow \ftype\;A_1$.
Accordingly, when producing the result type in the
\rref{T-Abs} rule, we remove $x$ from the arrow's return type $B$.

For a more interesting example, a function $\lambda x. (x; \textbf{ret } y)$ can be typed as
$\texttt{unit}^{\coeffcolor{\qstrict}} \rightarrow \ftype \text{ } A$ in a context where $y$ has the type $A$ and attribute $\qstrict$,
since it uses both its argument and $y$ strictly.
Conversely, a function $\lambda x.\; \textbf{ret } \{ x; \textbf{ret } y \}$ can be assigned type
$\texttt{unit}^{\coeffcolor{\qlazy}} \rightarrow  \ftype \text{ } \utype{y:\qstrict} \;\ftype A$,
since it uses its argument and $y$ lazily and produces a thunk that
will use $y$ strictly if forced.
As discussed above, the usage of $x$ is not mentioned in the return type of the function;
an external caller only cares about how the result of the function uses
the variables that exist in the calling scope.

The \rref{T-Let} rule has a different structure from its \cbnlangname counterpart,
since \cbpvlangname evaluates strictly while \cbnlangname does not.
Unlike \rref{T-CBN-Let}, the vector $\coeffcolor{\gamma_1}$ used in the evaluation of $M_1$ is not implicitly
present in the context for the evaluation of $M_2$,
so the \rref{T-Let} rule must explicitly add it to the vector $\coeffcolor{\gamma_2}$
used by $M_2$ to yield the resulting attributes for the entire term.
The rule otherwise behaves like the \rref{T-CBN-Let} rule,
including in its handling of scoping.

To better understand this scoping,
consider the program
$
z \leftarrow (x \leftarrow \textbf{ret } () \textbf{ in } \textbf{ret } \{ x; \textbf{ret } y \}) \textbf{ in } z!
$,
assuming $y$ has type $A$.
We would like this program to have the type $\ftype \text{ } A$ and assign an $\qstrict$ attribute to $y$.
At the beginning of the program, the variable $x$ is assigned a $()$ value, and hence will type at $\texttt{unit}$.
Thus, the thunk value $\{ x; \textbf{ret } y \}$ has the type
$\utype{x:\qstrict, y:\qstrict} \textbf{ } \ftype \textbf{ } A$ when $x$ and $y$ have
an $\qlazy$ attribute.
However, by the time this thunk is forced, $x$ has gone out of scope.
Bearing in mind our goal of assigning $\qstrict$ to $y$, however, we can see that $x$'s usage inside the
thunk is irrelevant to $y$;
we can forget $x$ from the type
$\utype{x:\qstrict, y:\qstrict} \textbf{ } \ftype \textbf{ } A$ to yield
$\utype{y:\qstrict} \textbf{ } \ftype \textbf{ } A$,
which has exactly the behavior we want.

\newcommand{\figcbpvtypingextra}{

  \begin{center}
    \begin{tabularx}{1\textwidth}{
        >{\raggedright\arraybackslash}X
        >{\raggedleft\arraybackslash}X
      }
      \fbox{$\coeffcolor{\gamma}\cdot\Gamma \vdash V : A$} & \textit{(Value typing)}
    \end{tabularx}
  \end{center}

  $$
  \infer[\rlabel*{T-Unit}]
  {\lazyvec \cdot \Gamma \vdash () : \texttt{unit}}
  {} \qquad
  \infer[\rlabel*{T-Pair}]
  {\coeffcolor{(\gamma_1 + \gamma_2)} \cdot \Gamma \vdash (V_1, V_2) : A_1 \times A_2}
  {
    \coeffcolor{\gamma_1} \cdot \Gamma \vdash V_1 : A_1 &
    \coeffcolor{\gamma_2} \cdot \Gamma \vdash V_2 : A_2 &
  }
  \vspace{2.75ex}
  $$
  $$
  \infer[\rlabel*{T-Inl}]
  {\coeffcolor{\gamma} \cdot \Gamma \vdash \textbf{inl } V : A_1 + A_2}
  {\coeffcolor{\gamma} \cdot \Gamma \vdash V : A_1} \qquad
  \infer[\rlabel*{T-Inr}]
  {\coeffcolor{\gamma} \cdot \Gamma \vdash \textbf{inr } V : A_1 + A_2}
  {\coeffcolor{\gamma} \cdot \Gamma \vdash V : A_2}
  $$

  \vspace{5mm}

  \begin{center}
    \begin{tabularx}{1\textwidth}{
        >{\raggedright\arraybackslash}X
        >{\raggedleft\arraybackslash}X
      }
      \fbox{$\coeffcolor{\gamma}\cdot\Gamma \vdash M : B$} & \textit{(Computation typing)}
    \end{tabularx}
  \end{center}

  $$
  \infer[\rlabel*{T-Split}]
  {\coeffcolor{(\gamma_1 + \gamma_2)} \cdot \Gamma \vdash
  (x_1, x_2) \leftarrow V \textbf{ in } M : \coeffcolor{\downshift{x_1}\downshift{x_2}} B}
  {\coeffcolor{\gamma_1} \cdot \Gamma \vdash V : A_1 \times A_2 &
    \coeffcolor{\gamma_2} \cdot \Gamma, x_1:^{\coeffcolor{\coeffsymbol_1}} A_1, x_2:^{\coeffcolor{\coeffsymbol_2}} A_2 \vdash M : B &
  }
  \vspace{2.75ex}
  $$
  $$
  \infer[\rlabel*{T-Case}]
  {\coeffcolor{(\gamma_1 + \gamma_2)} \cdot \Gamma \vdash
  \textbf{case } V \textbf{ of inl } x_1 \rightarrow M_1, \textbf{ inr } x_2 \rightarrow M_2: B}
  {\deduce
    {\deduce
      {\deduce
        {}
        {\coeffcolor{\gamma_2} \cdot \Gamma, x_1:^{\coeffcolor{\coeffsymbol_1}} A_1 \vdash M_1 : B_1 \;\;&\;\; \coeffcolor{\gamma_2} \cdot \Gamma, x_2:^{\coeffcolor{\coeffsymbol_2}} A_2 \vdash M_2 : B_2}
      }
      {}
    }
    {\coeffcolor{\gamma_1} \cdot \Gamma \vdash V : A_1 + A_2 \;\;&\;\; \coeffcolor{B = \downshift{x_1} B_1 = \downshift{x_2} B_2}}
  }
  $$
}

\ifextended
\begin{figure}
  \figcbpvtypingextra
  \caption{Additional typing rules for \cbpvlangname}
  \label{fig:cbpv-misc-typing}
\end{figure}
\fi
The other rules are straightforward;
they can be found in \ifextended Figure \ref{fig:cbpv-misc-typing}. \else the extended version of this paper  \cite{extended-version}. \fi

\begin{figure}
  \[
    \begin{array}{llcl}
      \textit{closed terminal values } &W &\defas&() \alt (W, W) \alt \textbf{inl } W \alt \textbf{inr } W \alt \{\coeffcolor{\gamma}, \rho, M\}\\
      \textit{closed terminal computations } &T &\defas &\textbf{ret } W \alt \llangle \coeffcolor{\gamma}, \rho, \lambda x. M \rrangle \\
      \textit{environments } &\rho&\defas &\cdot \alt \rho, x \mapsto W
    \end{array}
  \]
  \ifextended \vspace*{-.5\baselineskip} \else \vspace*{-0.5\baselineskip} \fi
  \caption{Closed terminal values, computations and environments in \cbpvlangname}
  \label{fig:cbpv-values}
  \ifextended \vspace*{-1\baselineskip}  \else \vspace*{-0.75\baselineskip} \fi
\end{figure}

\subsection{Big-Step Semantics of \cbpvlangname} \label{subsec:semantics}

To prove that evaluation of \cbpvlangname reflects the strictness attributes computed by the type system,
we enrich the standard semantics of \cbpvlangname with strictness-attribute tracking.
We present the semantics in a big-step style,
defining terminal values $W$ and computations $T$ in Figure \ref{fig:cbpv-values}, along with environments $\rho$.
Like $\Gamma$s, $\rho$s can freely be zipped and unzipped with $\coeffcolor{\gamma}$s.

\cbpvlangname's two evaluation judgments have the form $\coeffcolor{\gamma} \cdot \rho \vdash V \Downarrow W$ for values
and $\coeffcolor{\gamma} \cdot \rho \vdash M \Downarrow T$ for computations. The rules
of particular relevance to strictness tracking are given in Figure \ref{fig:cbpv-semantics-important}.

\begin{figure}

  \begin{center}
    \begin{tabularx}{1\textwidth}{
        >{\raggedright\arraybackslash}X
        >{\raggedleft\arraybackslash}X
      }
      \fbox{$\coeffcolor{\gamma}\cdot\rho \vdash V \Downarrow W$} & \textit{(Value semantics)}
    \end{tabularx}
  \end{center}

  $$
  \infer[\rlabel*{E-Var}]
  {\lazyvec \cdot \rho_1, x \mapsto^{\qstrict} W, \lazyvec \cdot \rho_2 \vdash x \Downarrow W}
  {} \qquad
  \infer[\rlabel*{E-Thunk}]
  {\lazyvec \cdot \rho \vdash \{ M \} \Downarrow \{ \coeffcolor{\gamma}, \rho , M \}}
  {}
  $$

  \vspace{3mm}

  \begin{center}
    \begin{tabularx}{1\textwidth}{
        >{\raggedright\arraybackslash}X
        >{\raggedleft\arraybackslash}X
      }
      \fbox{$\coeffcolor{\gamma}\cdot\rho \vdash M \Downarrow T$} & \textit{(Computation semantics)}
    \end{tabularx}
  \end{center}

  $$
  \infer[\rlabel*{E-Sub}]
  {\coeffcolor{\gamma} \cdot \rho \vdash \textbf{sub } M \Downarrow T}
  {\coeffcolor{\gamma'} \cdot \rho \vdash M \Downarrow T & \coeffcolor{\gamma \leq \gamma'}} \qquad
  \infer[\rlabel*{E-Return}]
  {\coeffcolor{\gamma} \cdot \rho \vdash \textbf{ret } V \Downarrow \textbf{ret } W}
  {\coeffcolor{\gamma} \cdot \rho \vdash V \Downarrow W}
  \vspace{2.75ex}
  $$
  $$
  \infer[\rlabel*{E-Let}]
  {\coeffcolor{(\gamma_1 + \gamma_2)} \cdot \rho \vdash
  x \leftarrow M_1 \textbf{ in } M_2 \Downarrow T}
  {\deduce{\coeffcolor{\gamma_2} \cdot \rho, x \mapsto^{\coeffsymbol} W \vdash M_2 \Downarrow T}
    {\coeffcolor{\gamma_1} \cdot \rho \vdash M_1 \Downarrow \textbf{ret } W}
  } \qquad
  \infer[\rlabel*{E-Force}]
  {\coeffcolor{\gamma_1 + (\gamma_2 \mid_{\dom{\gamma_1}})} \cdot \rho \vdash V! \Downarrow  T}
  {\deduce {\coeffcolor{\gamma_2} \cdot \rho' \vdash M \Downarrow T}
    {\coeffcolor{\gamma_1} \cdot \rho \vdash V \Downarrow \{\coeffcolor{\gamma_2}, \rho', M \}}
  }
  \vspace{2.75ex}
  $$
  $$
  \infer[\rlabel*{E-Seq}]
  {\coeffcolor{(\gamma_1 + \gamma_2)} \cdot \rho \vdash V; M \Downarrow T}
  {\coeffcolor{\gamma_1} \cdot \rho \vdash V \Downarrow () &
    \coeffcolor{\gamma_2} \cdot \rho \vdash M \Downarrow T &
  } \qquad
  \infer[\rlabel*{E-Abs}]
  {\coeffcolor{\gamma} \cdot \rho \vdash \lambda x. M \Downarrow
  \llangle \coeffcolor{\gamma}, \rho, \lambda x. M \rrangle}
  {}
  \vspace{2.75ex}
  $$
  $$
  \infer[\rlabel*{E-App}]
  {\coeffcolor{\gamma_1 + \gamma_2} \cdot \rho \vdash M\;V \Downarrow T}
  { \coeffcolor{\gamma_1} \cdot \rho \vdash M \Downarrow \llangle \coeffcolor{\gamma_3}, \rho', \lambda x. M' \rrangle &
    \coeffcolor{\gamma_2} \cdot \rho \vdash V \Downarrow W  &
    \coeffcolor{\gamma_3} \cdot \rho', x \mapsto^{\coeffsymbol} W \vdash M' \Downarrow T
  }
  $$
  \caption{Main semantic rules for \cbpvlangname}
  \ifextended \else \vspace*{-0.5\baselineskip} \fi
  \label{fig:cbpv-semantics-important}
  \ifextended \vspace*{-\baselineskip} \else \fi
\end{figure}

\newcommand{\figcbpvmiscsemantics}{

  \begin{center}
    \begin{tabularx}{1\textwidth}{
        >{\raggedright\arraybackslash}X
        >{\raggedleft\arraybackslash}X
      }
      \fbox{$\coeffcolor{\gamma}\cdot\rho \vdash V \Downarrow W$} & \textit{(Value semantics)}
    \end{tabularx}
  \end{center}

  $$
  \infer[\rlabel*{E-Unit}]
  {\lazyvec \cdot \rho \vdash () \Downarrow ()}
  {} \qquad
  \infer[\rlabel*{E-Pair}]
  {\coeffcolor{(\gamma_1 + \gamma_2)} \cdot \rho \vdash (V_1, V_2) \Downarrow (W_1, W_2)}
  {
    \coeffcolor{\gamma_1} \cdot \rho \vdash V_1 \Downarrow W_1 &
    \coeffcolor{\gamma_2} \cdot \rho \vdash V_2 \Downarrow W_2
  }
  \vspace{2.75ex}
  $$
  $$
  \infer[\rlabel*{E-Inl}]
  {\coeffcolor{\gamma} \cdot \rho \vdash \textbf{inl } V \Downarrow \textbf{inl } W }
  {\coeffcolor{\gamma} \cdot \rho \vdash V \Downarrow W} \qquad
  \infer[\rlabel*{E-Inr}]
  {\coeffcolor{\gamma} \cdot \rho \vdash \textbf{inr } V \Downarrow \textbf{inr } W }
  {\coeffcolor{\gamma} \cdot \rho \vdash V \Downarrow W}
  $$

  \vspace{5mm}

  \begin{center}
    \begin{tabularx}{1\textwidth}{
        >{\raggedright\arraybackslash}X
        >{\raggedleft\arraybackslash}X
      }
      \fbox{$\coeffcolor{\gamma}\cdot\rho \vdash M \Downarrow T$} & \textit{(Computation semantics)}
    \end{tabularx}
  \end{center}

  $$
  \infer[\rlabel*{E-Split}]
  {\coeffcolor{(\gamma_1 + \gamma_2)} \cdot \rho \vdash
  (x_1, x_2) \leftarrow V \textbf{ in } M \Downarrow T}
  {\coeffcolor{\gamma_1} \cdot \rho \vdash V \Downarrow (W_1, W_2) &
    \coeffcolor{\gamma_2} \cdot \rho, x_1\mapsto^{\coeffcolor{\coeffsymbol_1}} W_1, x_2\mapsto^{\coeffcolor{\coeffsymbol_2}} W_2 \vdash M \Downarrow T &
  }
  \vspace{2.75ex}
  $$
  $$
  \infer[\rlabel*{E-Case-Inl}]
  {\coeffcolor{\gamma_1 + \gamma_2} \cdot \rho \vdash
  \textbf{case } V \textbf{ of inl } x_1 \rightarrow M_1, \textbf{ inr } x_2 \rightarrow M_2 \Downarrow T}
  {\coeffcolor{\gamma_1} \cdot \rho \vdash V \Downarrow \textbf{inl } W&
  \coeffcolor{\gamma_2} \cdot \rho, x_1 \mapsto^{\coeffsymbol} W \vdash M_1 \Downarrow T}
  \vspace{2.75ex}
  $$
  $$
  \infer[\rlabel*{E-Case-Inr}]
  {\coeffcolor{\gamma_1 + \gamma_2} \cdot \rho \vdash
  \textbf{case } V \textbf{ of inl } x_1 \rightarrow M_1, \textbf{ inr } x_2 \rightarrow M_2 \Downarrow T}
  {\coeffcolor{\gamma_1} \cdot \rho \vdash V \Downarrow \textbf{inr } W &
  \coeffcolor{\gamma_2} \cdot \rho, x_2 \mapsto^{\coeffsymbol} W \vdash M_2 \Downarrow T}
  $$
}

\ifextended
\begin{figure}
  \figcbpvmiscsemantics
  \caption{Additional semantic rules for \cbpvlangname}
  \vspace{-1\baselineskip}
  \label{fig:cbpv-misc-semantic}
\end{figure}
\fi

Most of the complexity in the semantics of \cbpvlangname arises from how thunk
values handle the scopes of their captured attribute vectors.
In the $\rref{E-Thunk}$ rule, where thunk values are created,
the existing environment is captured and a $\coeffcolor{\gamma}$ is chosen for the result.
This rule is unusual in that it allows the choice to be any
$\coeffcolor{\gamma}$---to see why, remember that we are adding
$\coeffcolor{\gamma}$ to the semantics
only to enable metatheoretic reasoning;
the semantics here is free to select whichever $\coeffcolor{\gamma}$ is necessary to
evaluate this thunk, should it later be forced.
However, by the time a thunk \textit{is} forced,
the scope of the vector it captured may no longer be the same as the
scope in which it is being forced.

Concretely, in the $\rref{E-Force}$ rule, there are two different scopes in play.
The outer scope, where $\coeffcolor{\gamma_1}$ and $\rho$ live,
is the same scope in which the typing judgment will view a force computation.
The inner scope, where $\coeffcolor{\gamma_2}$ and $\rho'$ live, is internal to the thunk value
and may be arbitrarily different from the outer scope.
Accordingly, the scope of $\coeffcolor{\gamma_2}$ must be adjusted when adding it to $\coeffcolor{\gamma_1}$ in the rule's conclusion;
this adjustment is exactly the restriction of $\coeffcolor{\gamma_2}$ to the domain of $\coeffcolor{\gamma_1}$
(written $\coeffcolor{\gamma_2 \mid_{\dom{\gamma_1}}}$).

As an example, recall the program from earlier:
$
z \leftarrow (x \leftarrow \textbf{ret } () \textbf{ in } \textbf{ret } \{ x; \textbf{ret } y \}) \textbf{ in } z!
$.
Assuming $\rho$ maps $y$ to some value $W$,
the bound term $\{ x; \textbf{ret } y \}$ reduces to a thunk value capturing its environment:
$\{\coeffcolor{(y : \qstrict, x: \qstrict)}, (x \mapsto (), y \mapsto W), x; \textbf{ret } y \}$.
When that value is forced, however, only $y$ is in scope,
so we take the restriction of $\coeffcolor{y : \qstrict, x: \qstrict}$ (i.e., $\coeffcolor{y : \qstrict}$)
and add it to the attributes used to produce the thunk ($\lazyvec$)
to get the result $\coeffcolor{y : \qstrict}$, which agrees with the typing rules, as desired.
Intuitively, $x$ is local to the thunk here, so
we do not care about its usage when forcing the thunk.

It is also worth pointing out the $\rref{E-App}$ rule,
which mirrors prior work in CBPV but is potentially confusing nonetheless.
As mentioned previously, \cbpvlangname functions do not suspend their attributes
(also seen here in the $\rref{E-Abs}$ rule).
Therefore, the application rule does not need to include
the attributes $\coeffcolor{\gamma_3}$ from the closure in the result of the application,
as they will already be included in the attributes $\coeffcolor{\gamma_1}$
needed to produce the closure in the first place.

The other rules are straightforward;
they can be found in
\ifextended Figure \ref{fig:cbpv-misc-semantic}.
\else the extended version of this paper  \cite{extended-version}. \fi

\subsection{Soundness of \cbpvlangname} \label{subsec:semtyping}

We can now prove that the attributes carried through the semantics
reflect the attributes computed by the typing judgment.
The proofs in this section, along with all those that follow,
have been mechanized in Rocq \cite{artifact}.
These proofs require a surprising amount of scope bookkeeping
and are most tractable with a logical relation.
This relation, presented in Figure \ref{fig:lr-semtyping},
resembles the one used by \citet{torczon-effects-2024} to prove coeffect soundness,
differing only in the thunk and function cases.

\newcommand{\lrv}[1]{\mathcal{W}\llbracket #1 \rrbracket}
\newcommand{\lrc}[1]{\mathcal{T}\llbracket #1 \rrbracket}
\newcommand{\lrm}[1]{\mathcal{M}\llbracket #1 \rrbracket}
\newcommand{\lrf}[1]{\mathcal{F}\llbracket #1 \rrbracket}
\newcommand{\set}[1]{\{#1\}}
\newcommand{\andtext}{\textit{\;and\;}}
\newcommand{\foralltext}{\textit{\;for\;all\;}}
\newcommand{\impliestext}{\;\;\implies\;\;}
\newcommand{\existstext}{\textit{\;there\;exists\;}}
\newcommand{\ortext}{\textit{\;or\;}}

In the thunk case,
we need to handle the fact that the scopes in the $\utype{}$ type and
in the thunk value may have become ``misaligned'';
that is, variables may have been introduced to (or removed from) the
type that are not present (or are still present) in the thunk value.
However, recalling the earlier examples we used when defining the evaluation rule \rref{E-Force},
we see that strictness tracking for values is only concerned with the attributes that
exist in the scope using a value, not the scope that defined it.
Thus, for the purposes of semantic typing,
we care only to check that the attributes in the value agree with those in the type,
when restricted to the domain that they share.\footnote{
  The mechanization of this proof uses well-scoped de Bruijn syntax, and as a result
  needs somewhat more elaborate bookkeeping in this case, since in a nameless setting it is
  not possible to determine the overlap between two arbitrary scopes.
  Instead, the scopes of values must be tracked explicitly and adjusted as they flow through
  the program.
More details about this can be found in the artifact associated with this paper \cite{artifact}.}

\begin{figure}
  \begin{align*}
    \lrv{\texttt{unit}} &\;\;=\;\; \set{()} \\
    \lrv{A_1 \times A_2} &\;\;=\;\; \set{(W_1, W_2) \alt W_1 \in \lrv{A_1} \andtext W_2 \in \lrv{A_2}} \\
    \lrv{A_1 + A_2} &\;\;=\;\; \set{\textbf{inl } W_1 \alt W_1 \in \lrv{A_1}} \cup \set{\textbf{inr } W_2 \alt W_2 \in \lrv{A_2}} \\
    \lrv{\utype{\gamma} B} &\;\;=\;\; \set{\{\coeffcolor{\gamma'}, \rho, M \} \alt
      \coeffcolor{\gamma \mid_{\dom{\gamma'}} \;=\; \gamma' \mid_{\dom{\gamma}}} \andtext
    \coeffcolor{\gamma'} \cdot \rho \vdash M \Downarrow T \andtext T \in \lrc{B}} \\
    \lrc{\ftype A } &\;\;=\;\; \set{\textbf{ret } W \alt W \in \lrv{A}} \\
    \lrc{A^{\coeffsymbol} \rightarrow B} &\;\;=\;\; \set{\llangle \coeffcolor{\gamma}, \rho, \lambda x. M \rrangle \alt
      \foralltext W \in \lrv{A},
    \coeffcolor{\gamma} \cdot \rho, x \mapsto^{\coeffsymbol} W \vdash M \Downarrow T \andtext T \in \lrc{B} }
  \end{align*}
  \begin{align*}
    \Gamma \vDash \rho &\;\;\;\triangleq\;\;\; x: A \in \Gamma \impliestext x \mapsto W \in \rho \andtext W \in \lrv{A} \\
    \coeffcolor{\gamma} \cdot \Gamma \vDash V : A &\;\;\;\triangleq\;\;\; \Gamma \vDash \rho \impliestext \coeffcolor{\gamma}  \cdot \rho \vdash V \Downarrow W \andtext W \in \lrv{A} \\
    \coeffcolor{\gamma}  \cdot \Gamma \vDash M : B &\;\;\;\triangleq\;\;\; \Gamma \vDash \rho \impliestext \coeffcolor{\gamma}  \cdot \rho \vdash M \Downarrow T \andtext T \in \lrc{B}
  \end{align*}
  \vspace*{-1.35\baselineskip}
  \caption{Logical Relation and Semantic Typing for Soundness}
  \label{fig:lr-semtyping}
  \vspace*{-0.65\baselineskip}
\end{figure}

With this logical relation in hand,
we can define the usual notion of semantic typing in Figure \ref{fig:lr-semtyping}.
We state and prove the fundamental lemma for this relation and derive
soundness as a corollary.

\begin{lemma}[Fundamental Lemma: Soundness]
  \label{lem:fundamental-lemma}
  For all $ \coeffcolor{\gamma} $ and $\Gamma$, if $ \coeffcolor{\gamma}  \cdot \Gamma \vdash V : A$, then $ \coeffcolor{\gamma}  \cdot \Gamma \vDash V : A$,
  and if $ \coeffcolor{\gamma}  \cdot \Gamma \vdash M : B$, then $ \coeffcolor{\gamma}  \cdot \Gamma \vDash M : B$.
\end{lemma}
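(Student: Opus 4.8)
The plan is to prove both implications at once, by mutual induction on the typing derivations $\coeffcolor{\gamma} \cdot \Gamma \vdash V : A$ and $\coeffcolor{\gamma} \cdot \Gamma \vdash M : B$. In each case I fix an environment $\rho$ with $\Gamma \vDash \rho$, invert the last typing rule, feed $\rho$ and the induction hypotheses to the subderivations to obtain evaluations together with membership in $\lrv{-}$ or $\lrc{-}$, reassemble these with the matching evaluation rule of Figure~\ref{fig:cbpv-semantics-important} (and its extended counterpart), and finally discharge the logical-relation membership of the result by unfolding the appropriate clause of the relation. The relation is defined by recursion on value/computation types with every recursive occurrence at a structurally smaller type, so there is no well-foundedness subtlety to worry about.

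Most cases follow this recipe mechanically. For \rref{T-Var}, the clause $\Gamma \vDash \rho$ supplies the bound value together with its membership, and \rref{E-Var} applies because $\vDash$ is structural, so the $\lazyvec$ padding and the single $\qstrict$ position line up. For \rref{T-Unit}, \rref{T-Pair}, \rref{T-Inl}, \rref{T-Inr}, \rref{T-Return}, \rref{T-Seq}, and \rref{T-App} I just apply the induction hypotheses to the subterms and close with the corresponding evaluation rule; the conclusion's type carries no $\downshift{x}$, so the resulting membership is exactly what the induction hypotheses already give. \rref{T-Sub} is immediate: the induction hypothesis gives an evaluation under $\coeffcolor{\gamma'}$, and \rref{E-Sub} re-derives it under $\coeffcolor{\gamma}$ from the same side condition $\coeffcolor{\gamma \leq \gamma'}$.

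The cases that actually require thought concern the $\utype{}$ type. For \rref{T-Thunk} I instantiate the otherwise-unconstrained vector in \rref{E-Thunk} with exactly the type-level vector $\coeffcolor{\gamma}$; then the scope-alignment side condition in $\lrv{\utype{\gamma} B}$ holds trivially (the value's vector and the type's vector are literally equal), and the inner evaluation and membership come straight from the induction hypothesis on the body. For \rref{T-Force} the induction hypothesis yields a thunk value $\{\coeffcolor{\gamma'}, \rho', M\}$ whose captured vector $\coeffcolor{\gamma'}$ agrees, on the domain they share, with the $\utype{}$-type's vector; applying \rref{E-Force} then produces the output vector $\coeffcolor{\gamma_1 + (\gamma' \mid_{\dom{\gamma_1}})}$, and I must show this coincides with the $\coeffcolor{\gamma_1 + \gamma_2}$ demanded by the typing rule. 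That is a short calculation once one knows that all of these vectors live over $\Gamma$'s scope and that a $\utype{}$-type's vector is $\qlazy$ on any variable outside the thunk's captured scope (an invariant preserved by the weakening and $\downshift{x}$ machinery) --- this is precisely what the restriction in \rref{E-Force} is there to make work.

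The genuinely delicate step, and the one I expect to be the main obstacle, is transporting a result from $\lrc{B}$ to $\lrc{\downshift{x} B}$ in the rules that introduce and then discard a variable: \rref{T-Let}, \rref{T-Abs}, \rref{T-Split}, and \rref{T-Case}. There the induction hypothesis on the body --- applied after extending $\rho$ with the value(s) bound by the rule --- delivers a terminal computation $T \in \lrc{B}$, whereas the rule's conclusion requires $T \in \lrc{\downshift{x} B}$. Because terminal values still carry scoped data --- the vector inside every thunk value --- this is not a triviality; in particular the naive statement ``$\lrc{B} \subseteq \lrc{\downshift{x} B}$'' fails at function types, since dropping $x$ from the argument type enlarges the set of arguments one must handle. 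What is true, and what I would prove as a carefully scoped auxiliary lemma, is that membership is preserved under $\downshift{x}$ when the type is downshifted consistently with the scope of the value in play (and symmetrically under the weakening that reintroduces fresh $\qlazy$ variables), so that the $x$-components of every thunk stay ``anchored'' to a genuine $\lrv{A}$-value and are only ever inspected through restrictions that forget $x$. Getting this bookkeeping precise is exactly the ``delicate treatment of variables and scoping'' the introduction warns about, and the reason the mechanization works over nameless, well-scoped syntax. Once that lemma and the small \rref{T-Force} calculation are in hand, the remaining cases are mechanical, and soundness follows as the stated corollary by specializing to a closed, well-formed environment.
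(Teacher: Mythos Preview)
Your proposal is correct and follows the same approach as the paper: mutual induction on the typing derivations for values and computations. The paper gives only a one-line proof and defers all detail to the Rocq mechanization, while you have spelled out the case analysis and, crucially, correctly identified the two places where real work happens---the choice of vector in \rref{E-Thunk}, the restriction calculation in \rref{T-Force}, and the $\coeffcolor{\downshift{x}}$ transport lemma needed for the binder-introducing rules---which is precisely the ``delicate treatment of variables and scoping'' the paper flags (and its footnote about the de Bruijn mechanization confirms).
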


\begin{proof}
  By mutual induction on the typing derivations for values and computations.
\end{proof}

\begin{theorem}[Soundness]
  \label{thm:soundness}
  Given $\coeffcolor{\gamma} $, $\Gamma$, and $\rho$ such that $\Gamma \vDash \rho$,
  \begin{enumerate}
    \item if $\coeffcolor{\gamma}  \cdot \Gamma \vdash V : A$, then there exists some $W$ such that $\coeffcolor{\gamma} \cdot \rho \vdash V \Downarrow W$;
    \item if $\coeffcolor{\gamma}  \cdot \Gamma \vdash M : B$, then there exists some $T$ such that $\coeffcolor{\gamma} \cdot \rho \vdash M \Downarrow T$.
  \end{enumerate}
\end{theorem}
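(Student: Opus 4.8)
The plan is to obtain Soundness directly as a corollary of the Fundamental Lemma (Lemma~\ref{lem:fundamental-lemma}), with essentially no additional work. First I would apply Lemma~\ref{lem:fundamental-lemma} to the hypotheses: from $\coeffcolor{\gamma} \cdot \Gamma \vdash V : A$ it delivers the semantic judgment $\coeffcolor{\gamma} \cdot \Gamma \vDash V : A$, and from $\coeffcolor{\gamma} \cdot \Gamma \vdash M : B$ it delivers $\coeffcolor{\gamma} \cdot \Gamma \vDash M : B$.

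Next I would unfold the definitions of semantic typing from Figure~\ref{fig:lr-semtyping}. By definition, $\coeffcolor{\gamma} \cdot \Gamma \vDash V : A$ says that whenever $\Gamma \vDash \rho$ there exists a $W$ with $\coeffcolor{\gamma} \cdot \rho \vdash V \Downarrow W$ \emph{and} $W \in \lrv{A}$. Since $\Gamma \vDash \rho$ is assumed in the theorem, instantiating this definition at our particular $\rho$ produces such a $W$; forgetting the conjunct $W \in \lrv{A}$ leaves exactly the required existential, that some $W$ satisfies $\coeffcolor{\gamma} \cdot \rho \vdash V \Downarrow W$. The computation clause is completely analogous: unfolding $\coeffcolor{\gamma} \cdot \Gamma \vDash M : B$ at $\rho$ yields a $T$ with $\coeffcolor{\gamma} \cdot \rho \vdash M \Downarrow T$ and $T \in \lrc{B}$, and we again discard the second conjunct. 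The two parts can be dispatched in sequence, each being a one-line specialization; no induction is needed here, because all of the inductive content has already been absorbed into Lemma~\ref{lem:fundamental-lemma}.

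The only real difficulty therefore lies not in this corollary but in the Fundamental Lemma itself, whose proof proceeds by mutual induction on the value and computation typing derivations. The delicate cases are \rref{T-Thunk} and \rref{T-Force}, where the argument must reconcile the attribute vector recorded on a $\utype{}$ type with the one captured inside the corresponding thunk value --- the clause for $\lrv{\utype{\gamma} B}$ only requires these to agree on the domain they share --- and \rref{T-App}, where one exploits the fact that the closure's latent vector is already subsumed by the vector used to evaluate the function. Once those cases are handled and Lemma~\ref{lem:fundamental-lemma} is in hand, Soundness follows by the pure unfolding described above.
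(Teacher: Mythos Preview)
Your proposal is correct and matches the paper's approach exactly: the paper's proof simply reads ``Follows directly from the fundamental lemma,'' and your unfolding of the semantic typing definitions to extract the required evaluation witnesses is precisely what that one line means. Your additional commentary on where the inductive work lives in Lemma~\ref{lem:fundamental-lemma} is accurate but beyond what the corollary itself requires.
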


\begin{proof}
  Follows directly from the fundamental lemma.
\end{proof}

We can also show that the
$\coeffsymbol$s we place on arrow types accurately describe the strictness of function arguments.
In other words, if a computation has a function type,
then it must evaluate to a terminal closure. This closure, furthermore, will successfully evaluate
in an extended environment where the $\coeffsymbol$ on the function type
marks the strictness of the argument.

\begin{theorem}[Function Type Soundness]
  \label{thm:fn-soundness}
  Given $\coeffcolor{\gamma} $, $\Gamma$, $A$, $B$, $\coeffsymbol$, and $\rho$ such that $\Gamma \vDash \rho$,
  if $\coeffcolor{\gamma}  \cdot \Gamma \vdash M : A^{\coeffsymbol} \rightarrow B$,
  then there exists some $\coeffcolor{\gamma'}$, $\rho'$ and $M'$ such that
  $\coeffcolor{\gamma} \cdot \rho \vdash M \Downarrow \llangle \coeffcolor{\gamma'}, \rho', \lambda x. M' \rrangle$.
  Additionally, for any $W$ such that $W \in \lrv{A}$,
  there exists some $T$ such that $\coeffcolor{\gamma'} \cdot \rho', x \mapsto^{\coeffsymbol} W \vdash M' \Downarrow T$.
\end{theorem}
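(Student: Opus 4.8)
The plan is to obtain this theorem as a direct corollary of the Fundamental Lemma (Lemma~\ref{lem:fundamental-lemma}), essentially just by unfolding the logical relation at an arrow type. First I would apply Lemma~\ref{lem:fundamental-lemma} to the hypothesis $\coeffcolor{\gamma} \cdot \Gamma \vdash M : A^{\coeffsymbol} \rightarrow B$ to get semantic typing $\coeffcolor{\gamma} \cdot \Gamma \vDash M : A^{\coeffsymbol} \rightarrow B$. Unfolding the definition of semantic typing at the given environment $\rho$ and discharging its premise using the assumption $\Gamma \vDash \rho$, I obtain a terminal computation $T$ together with a derivation $\coeffcolor{\gamma} \cdot \rho \vdash M \Downarrow T$ and a proof that $T \in \lrc{A^{\coeffsymbol} \rightarrow B}$.

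Next I would unfold the definition of $\lrc{A^{\coeffsymbol} \rightarrow B}$ from Figure~\ref{fig:lr-semtyping}. Every element of that set is, by definition, a terminal closure of the form $\llangle \coeffcolor{\gamma'}, \rho', \lambda x. M' \rrangle$, so $T$ must have exactly this shape; combined with the evaluation derivation from the previous step, this yields the first conclusion, $\coeffcolor{\gamma} \cdot \rho \vdash M \Downarrow \llangle \coeffcolor{\gamma'}, \rho', \lambda x. M' \rrangle$. For the second part, the same unfolding of $T \in \lrc{A^{\coeffsymbol} \rightarrow B}$ directly provides, for every $W \in \lrv{A}$, a terminal computation $T'$ with $\coeffcolor{\gamma'} \cdot \rho', x \mapsto^{\coeffsymbol} W \vdash M' \Downarrow T'$ and $T' \in \lrc{B}$. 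Taking this $T'$ as the witness demanded by the theorem (and simply forgetting the extra fact that $T' \in \lrc{B}$) completes the argument.

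Since all the real work is hidden inside the Fundamental Lemma, there is no genuine obstacle here; the only point requiring a moment's care is that it is precisely membership in $\lrc{A^{\coeffsymbol} \rightarrow B}$ that forces the evaluation result to be a closure rather than a $\textbf{ret } W$, but this is immediate from the shape of the defining clause. If one wanted to point to where the difficulty actually lies, it is in the $\rref{T-Abs}$, $\rref{T-Thunk}$, and $\rref{T-Force}$ cases of Lemma~\ref{lem:fundamental-lemma}, where the scope restriction $\coeffcolor{\gamma_2 \mid_{\dom{\gamma_1}}}$ and the $\coeffcolor{\downshift{x}}$ operators must be shown to preserve the logical-relation invariants; but that bookkeeping is already discharged by the time we invoke the lemma here, so the proof of Function Type Soundness itself is short.
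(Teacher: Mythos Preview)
Your proposal is correct and matches the paper's own proof, which simply records that the result ``follows directly from the fundamental lemma.'' Your unfolding of semantic typing and of $\lrc{A^{\coeffsymbol} \rightarrow B}$ is exactly the content hidden behind that one line.
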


\begin{proof}
  Follows directly from the fundamental lemma.
\end{proof}

\subsection{\cbnlangname Translation}\label{subsec:translation}

Having proved the soundness of \cbpvlangname,
we argue that it can be used as a model of strictness
in \cbnlangname, the call-by-name system from Section \ref{sec:cbn}.
We give semantics to
\cbnlangname via Levy's
standard translation \cite{Levy-cbpv-2022}
from the CBN lambda calculus to CBPV,
which we extend to \cbnlangname and \cbpvlangname.

\newcommand{\translate}[1]{\llbracket #1 \rrbracket}

\newcommand{\syntaxtranslation}{
  \[
    \begin{array}{llccl}
      \translate{()} &= \textbf{ret } () & & \translate{x} &= x! \\
      \translate{\lambda x.e} &= \lambda x. \translate{e} & &
      \translate{e_1\;e_2} &= \translate{e_1} \{\translate{e_2}\}\\
      \translate{\textbf{let } x = e_1 \textbf{ in } e_2} &=
      x \leftarrow \textbf{ret }\set{\translate{e_1}} \textbf{ in } \translate{e_2} & &
      \translate{e_1; e_2} &= x \leftarrow \translate{e_1} \textbf{ in } x; \translate{e_2} \\
      \translate{(e_1, e_2)} &= \textbf{ret }(\{\translate{e_1}\}, \{\translate{e_2}\}) & &
      \translate{\textbf{sub } e} &= \textbf{sub } \translate{e} \\
      \translate{\textbf{inl } e } &= \textbf{ret } (\textbf{inl }\{\translate{e}\}) & &
      \translate{\textbf{inr } e } &= \textbf{ret } (\textbf{inr }\{\translate{e}\}) \\
    \end{array}
  \]
  \vspace*{-0.75\baselineskip}
  \begin{align*}
    \translate{\textbf{let } (x_1, x_2) =  e_1 \texttt{ in } e_2} &=
    x \leftarrow \translate{e_1} \textbf{ in } (x_1, x_2) \leftarrow x \textbf{ in } \translate{e_2} \\
    \translate{\textbf{case} \texttt{ } e_1 \textbf{ of inl } x_1 \rightarrow e_2 \textbf{ inr } x_2 \rightarrow e_3} &=
    x \leftarrow \translate{e_1} \textbf{ in } \textbf{case} \texttt{ } x \textbf{ of inl } x_1 \rightarrow \translate{e_2} \textbf{ inr } x_2 \rightarrow \translate{e_3}\\
  \end{align*}
}

\begin{figure}
  \begin{align*}
    \translate{\texttt{unit}} &= (\textbf{F } \texttt{unit}, \coeffcolor{\coeffoverline{\qlazy}}) \\
    \translate{ (x :^{\coeffsymbol} \tau_1^{\coeffcolor{\gamma_1}}) \xrightarrow{\coeffcolor{\gamma_2}} \tau_2} &=
    ((\utype{\gamma_1 + \gamma_1'} B_1)^{\coeffcolor{\coeffsymbol + \gamma_2'(x)}} \rightarrow  \downshift{x} B_2, \coeffcolor{\gamma_2 + \downshift{x} \gamma_2'}) \textit{ where }
    \translate{\tau_1} = (B_1, \coeffcolor{\gamma_1'}) \textit{ and } \translate{\tau_2} = (B_2, \coeffcolor{\gamma_2'}) \\
    \translate{\tau_1^{\coeffcolor{\gamma_1}} \times \tau_2^{\coeffcolor{\gamma_2}}} &= (\textbf{F }
    (\utype{\gamma_1 + \gamma_1'} B_1 \times \utype{\gamma_2 + \gamma_2'} B_2), \coeffcolor{\coeffoverline{\qlazy}})  \textit{ where }
    \translate{\tau_1} = (B_1, \coeffcolor{\gamma_1'}) \textit{ and } \translate{\tau_2} = (B_2, \coeffcolor{\gamma_2'}) \\
    \translate{\tau_1^{\coeffcolor{\gamma_1}} + \tau_2^{\coeffcolor{\gamma_2}}} &= (\textbf{F }
    (\utype{\gamma_1 + \gamma_1'} B_1 + \utype{\gamma_2 + \gamma_2'} B_2), \coeffcolor{\coeffoverline{\qlazy}})  \textit{ where }
    \translate{\tau_1} = (B_1, \coeffcolor{\gamma_1'}) \textit{ and } \translate{\tau_2} = (B_2, \coeffcolor{\gamma_2'})
  \end{align*}
  \vspace*{-0.75\baselineskip}
  \begin{align*}
    \translate{\cdot} = \cdot \;\;\;\;\;\;\;\;\;\;\;\;\;\;\;\;
    \translate{\Gamma, x:^{\coeffcolor{\gamma}} \tau} =
    \translate{\Gamma}, x : \utype{\gamma + \gamma'} B \textit{ where }
    \translate{\tau} = (B, \coeffcolor{\gamma'})
  \end{align*}

  \syntaxtranslation

  \ifextended \vspace*{-1.5\baselineskip} \else \vspace*{-1.5\baselineskip} \fi

  \caption{Translation from \cbnlangname to \cbpvlangname}

  \ifextended \vspace*{-0.75\baselineskip} \else \vspace*{-.5\baselineskip} \fi
  \label{fig:cbn-trans}
\end{figure}

The \cbnlangname typing rules are related those of \cbpvlangname by
the translation in Figure \ref{fig:cbn-trans},
which is identical to Levy's save for the inclusion of $\coeffcolor{\gamma}$s
and some additional bookkeeping in the type translation.
Because of the difference in strictness tracking behavior between arrow types in \cbnlangname and \cbpvlangname,
we need to track the attributes suspended by the \cbnlangname arrow type separately,
so we place them on any outer $\utype{}$ that may occur in the type translation.
Therefore, the type translation is a function $\tau \rightarrow B \times \coeffcolor{\gamma}$,
where the $\coeffcolor{\gamma}$ tracks any effects suspended in the \cbnlangname types that become
unsuspended vectors in the resulting \cbpvlangname derivation.

We now state and prove our main lemma:

\begin{lemma}[\cbnlangname Translation Correctness]
  \label{lem:translation-sound}
  If $\Gamma \cbnvdash e :^{\coeffcolor{\gamma}} \tau$ and
  $\translate{\tau} = (B, \coeffcolor{\gamma'})$, then
  $\coeffcolor{\gamma + \gamma'} \cdot \translate{\Gamma} \vdash \translate{e} : B$.
\end{lemma}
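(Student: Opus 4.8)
The plan is to prove the lemma by induction on the derivation of $\Gamma \cbnvdash e :^{\coeffcolor{\gamma}} \tau$, with a case analysis on the last \cbnlangname typing rule. For each rule I will read off the corresponding clause of the syntax translation in Figure~\ref{fig:cbn-trans}, apply the induction hypothesis to the premises, and reassemble a \cbpvlangname derivation using the matching \cbpvlangname rules --- roughly: \rref{T-CBN-Var} threads through \rref{T-Var} then \rref{T-Force}; \rref{T-CBN-App} through \rref{T-Thunk} then \rref{T-App}; \rref{T-CBN-Let} through \rref{T-Thunk}, \rref{T-Return}, \rref{T-Let}; \rref{T-CBN-Split} and \rref{T-CBN-Case} through an outer \rref{T-Let} (binding the fresh intermediate of the translation) followed by \rref{T-Split}/\rref{T-Case}; and \rref{T-CBN-Sub} through \rref{T-Sub}. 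The rule choices are forced; all of the genuine work is in checking that the attribute vectors agree, since both judgments thread $\coeffcolor{\gamma}$s through $\coeffcolor{+}$ and the $\downshift{}$ operator and these must be shown \emph{equal} (or $\coeffcolor{\leq}$-related) after translation.

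To keep this manageable I will first establish a few structural and algebraic lemmas. (i) \emph{The type translation commutes with down-shifting}: if $\translate{\tau}=(B,\coeffcolor{\gamma'})$ then $\translate{\downshift{x}\tau}=(\downshift{x}B,\coeffcolor{\downshift{x}\gamma'})$. This is an easy induction on $\tau$, using that $\downshift{x}$ distributes over pointwise $\coeffcolor{+}$, that $\downshift{x}$ and $\downshift{y}$ commute for $x\neq y$, and that $(\coeffcolor{\downshift{x}\gamma})(y)=\coeffcolor{\gamma}(y)$ for $y\neq x$; the arrow clause also uses these facts on the $\downshift{x}$ already present in the translation. (ii) Standard weakening/exchange for \cbpvlangname typing, needed in the \rref{T-CBN-Split}, \rref{T-CBN-Case}, and \rref{T-CBN-Seq} cases, where the translation introduces a fresh intermediate variable that does not occur in the translated subterms. (iii) Monotonicity of $\coeffcolor{+}$ with respect to $\coeffcolor{\leq}$ (checkable directly from Table~\ref{table:plus} and Figure~\ref{fig:lattice}), so that the side condition $\coeffcolor{\gamma\leq\gamma'}$ of \rref{T-CBN-Sub} survives adding the translation's $\coeffcolor{\gamma'}$ on both sides before appealing to \rref{T-Sub}.

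With these the routine cases are mechanical. For \rref{T-CBN-Var}: from $x:\tau^{\coeffcolor{\gamma_0}}\in\Gamma$ we get $x:\utype{\gamma_0+\gamma'}B\in\translate{\Gamma}$, so \rref{T-Var} types $x$ with a strict use of $x$ at $\utype{\gamma_0+\gamma'}B$, and \rref{T-Force} then releases $\coeffcolor{\gamma_0+\gamma'}$ into the context, giving output effect $\coeffcolor{(x:\qstrict)+\gamma_0+\gamma'}=\coeffcolor{(\gamma_0+x:\qstrict)+\gamma'}$, exactly $\coeffcolor{\gamma+\gamma'}$ with $\coeffcolor{\gamma}=\coeffcolor{\gamma_0+x:\qstrict}$. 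For \rref{T-CBN-Let}, \rref{T-CBN-Seq}, \rref{T-CBN-Pair}, \rref{T-CBN-Inl}/\rref{T-CBN-Inr}, \rref{T-CBN-Split}, and \rref{T-CBN-Case}, one combines the induction hypotheses for the subterms, inserts \rref{T-Thunk} wherever the translation boxes a subterm (so the suspended vector lands on the new $\utype{}$ type --- which is precisely the $\coeffcolor{\gamma'}$-bookkeeping the type translation records, matching the translated context entry $\utype{\gamma_0+\gamma_0'}B_0$), and then matches the $\downshift{x}$ in the \cbnlangname conclusion against the $\downshift{x}$ in the corresponding \cbpvlangname rule via lemma~(i); lemma~(ii) supplies the fresh-variable weakening, and commutativity/associativity of $\coeffcolor{+}$ with $\lazyvec$ as identity closes each remaining vector equation.

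The main obstacle is the function case, \rref{T-CBN-Abs} (and, dually, \rref{T-CBN-App}), because \cbnlangname arrows are dependent and suspend the body's effects whereas \cbpvlangname arrows are non-dependent and let them pass through. Here $\translate{(x:^{\coeffsymbol}\tau_1^{\coeffcolor{\gamma_1}})\xrightarrow{\coeffcolor{\gamma_2}}\tau_2}=\big((\utype{\gamma_1+\gamma_1'}B_1)^{\coeffcolor{\coeffsymbol+\gamma_2'(x)}}\rightarrow\downshift{x}B_2,\ \coeffcolor{\gamma_2+\downshift{x}\gamma_2'}\big)$, where $\translate{\tau_1}=(B_1,\coeffcolor{\gamma_1'})$ and $\translate{\tau_2}=(B_2,\coeffcolor{\gamma_2'})$. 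The induction hypothesis on the body premise $\Gamma,x:^{\coeffcolor{\gamma_1}}\tau_1\cbnvdash e:^{\coeffcolor{\gamma_2,x:\coeffsymbol}}\tau_2$ gives $\coeffcolor{(\gamma_2,x:\coeffsymbol)+\gamma_2'}\cdot\translate{\Gamma},x:\utype{\gamma_1+\gamma_1'}B_1\vdash\translate{e}:B_2$. Reading this vector entrywise, its $x$-component is $\coeffcolor{\coeffsymbol+\gamma_2'(x)}$ and for $y\neq x$ it is $\coeffcolor{\gamma_2(y)+\gamma_2'(y)}=(\coeffcolor{\gamma_2+\downshift{x}\gamma_2'})(y)$; so the zipped judgment is exactly of the form $\coeffcolor{(\gamma_2+\downshift{x}\gamma_2')}\cdot\translate{\Gamma},x:^{\coeffcolor{\coeffsymbol+\gamma_2'(x)}}\utype{\gamma_1+\gamma_1'}B_1\vdash\translate{e}:B_2$, which \rref{T-Abs} consumes, peeling the attribute $\coeffcolor{\coeffsymbol+\gamma_2'(x)}$ onto the arrow and deleting $x$ from $B_2$ to yield return type $\downshift{x}B_2$ and output effect $\coeffcolor{\gamma_2+\downshift{x}\gamma_2'}$ --- i.e.\ $\coeffcolor{\gamma+\gamma'}$ with $\coeffcolor{\gamma}=\lazyvec$, matching the $\lazyvec$ in the \rref{T-CBN-Abs} conclusion. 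The dual check for \rref{T-CBN-App} must reconcile the fact that on the \cbnlangname side the argument's latent effect $\coeffcolor{\gamma_1}$ is already folded into the function type's $\coeffcolor{\gamma_2}$, whereas \rref{T-App} merely sums the effects of $M$ and $V$; the resolution is that the translation thunks the argument ($\translate{e_1}\{\translate{e_2}\}$), so \rref{T-Thunk} deposits $\coeffcolor{\gamma_1}$ onto the $\utype{}$ type where \rref{T-App} expects it, after which associativity and commutativity of $\coeffcolor{+}$ together with lemma~(i) finish the case. I expect essentially all the difficulty --- and all the error-prone bookkeeping --- to live in these two cases' vector identities; every other case follows the same template with lighter algebra. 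No substitution lemma is needed, since the translation introduces no non-trivial substitutions, only thunking and fresh lets.
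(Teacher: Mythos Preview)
Your proposal is correct and follows exactly the approach the paper takes: induction on the \cbnlangname typing derivation, case-splitting on the last rule and reassembling the translated term via the matching \cbpvlangname rules. The auxiliary facts you isolate---commutation of the type translation with $\downshift{x}$, weakening for the fresh intermediates introduced by the translation, and monotonicity of $\coeffcolor{+}$ for the \rref{T-Sub} case---are precisely the bookkeeping lemmas the mechanized proof relies on, and your detailed vector-component calculation in the \rref{T-CBN-Abs} case is the crux of why the translation puts $\coeffcolor{\coeffsymbol+\gamma_2'(x)}$ on the arrow and $\coeffcolor{\gamma_2+\downshift{x}\gamma_2'}$ in the output.
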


\begin{proof}
  By induction on the \cbnlangname typing derivation.
\end{proof}

The difference in the behavior of functions between \cbnlangname and \cbpvlangname
causes a slight discrepancy in the translation:
as seen in the statement of Lemma \ref{lem:translation-sound},
the $\coeffcolor{\gamma'}$ from the translation of $\tau$ must be added to the $\coeffcolor{\gamma}$
in the \cbnlangname typing derivation to yield the strictness computed by
\cbpvlangname.
However, inspection of the translation for types in Figure \ref{fig:cbn-trans}
shows that, at returner types, $\coeffcolor{\gamma'}$ is always $\lazyvec$, the unit of our effect algebra.
We thus state our main theorem at $\ftype$ types.

\begin{theorem}[\cbnlangname Translation Preserves Strictness]
  \label{thm:translation-sound}
  If $\Gamma \cbnvdash e :^{\coeffcolor{\gamma}} \tau$ and
  $\translate{\tau} = (\ftype\;A, \coeffcolor{\gamma'})$, then
  $\coeffcolor{\gamma} \cdot \translate{\Gamma} \vdash \translate{e} : \ftype\; A$.
\end{theorem}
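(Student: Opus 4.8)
The plan is to obtain this theorem as an almost immediate consequence of Lemma~\ref{lem:translation-sound}. Instantiating that lemma with $B = \ftype\;A$ and the given $\coeffcolor{\gamma'}$, we already have $\coeffcolor{\gamma + \gamma'} \cdot \translate{\Gamma} \vdash \translate{e} : \ftype\;A$. All that remains is to show that the residual effect $\coeffcolor{\gamma'}$ contributed by the type translation is trivial in this case, i.e.\ $\coeffcolor{\gamma'} = \lazyvec$, so that $\coeffcolor{\gamma + \gamma'}$ and $\coeffcolor{\gamma}$ coincide.

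To establish $\coeffcolor{\gamma'} = \lazyvec$, I would prove a small auxiliary fact by case analysis on $\tau$, equivalently by inspecting the four clauses of the type translation in Figure~\ref{fig:cbn-trans}: whenever $\translate{\tau} = (\ftype\;A, \coeffcolor{\gamma'})$, necessarily $\coeffcolor{\gamma'} = \lazyvec$. In the $\texttt{unit}$, product, and sum cases the translation literally returns a pair of the form $(\ftype\;(\cdots), \lazyvec)$, so the claim holds on the nose. In the arrow case the translation returns a \emph{computation} type of the shape $(\cdots)^{\coeffsymbol}\rightarrow(\cdots)$ in its first component, which is not an $\ftype$ type; hence the hypothesis $\translate{\tau} = (\ftype\;A,\coeffcolor{\gamma'})$ is unsatisfiable and the case is vacuous. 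Note also that the term translation $\translate{e}$ is syntax-directed on $e$ alone and does not depend on $\tau$, so no further bookkeeping is needed to relate $\translate{e}$ across the two typings.

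Finally, because $\qlazy$ is the unit of the $\coeffcolor{+}$ operator (Table~\ref{table:plus}) and both $\coeffcolor{+}$ and $\lazyvec$ are lifted pointwise, we have $\coeffcolor{\gamma + \lazyvec} = \coeffcolor{\gamma}$. Rewriting with this equation in the conclusion of Lemma~\ref{lem:translation-sound} yields precisely $\coeffcolor{\gamma} \cdot \translate{\Gamma} \vdash \translate{e} : \ftype\;A$, as desired.

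I do not expect a genuine obstacle here: the substantive work lives entirely in Lemma~\ref{lem:translation-sound}, and this theorem is a corollary of it together with the identity law for $\coeffcolor{+}$. The only point requiring a modicum of care, especially in the well-scoped de Bruijn mechanization, is ensuring that the $\lazyvec$ appearing in ``$\translate{\tau} = (\ftype\;A, \lazyvec)$'' is taken at the same scope at which $\coeffcolor{\gamma + \lazyvec} = \coeffcolor{\gamma}$ is applied, so that the two vectors have matching domains and the rewrite goes through cleanly.
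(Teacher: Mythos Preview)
Your proposal is correct and matches the paper's approach: the paper's proof is simply ``Using Lemma~\ref{lem:translation-sound} and case analysis on $\tau$,'' and the text preceding the theorem explicitly notes that inspection of the type translation shows $\coeffcolor{\gamma'} = \lazyvec$ at returner types. Your write-up spells out exactly these steps, including the identity law for $\coeffcolor{+}$ and the vacuous arrow case, so there is nothing to add.
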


\begin{proof}
  Using Lemma \ref{lem:translation-sound} and case analysis on $\tau$.
\end{proof}

We can also show that the semantics of \cbpvlangname ``interpret'' \cbnlangname in a $\coeffcolor{\gamma}$-preserving way.

\begin{theorem}[\cbnlangname Interpretation]
  \label{thm:interpretation}
  If $\Gamma \cbnvdash e :^{\coeffcolor{\gamma}} \tau$ and $\translate{\tau} = (\ftype\;A, \coeffcolor{\gamma'})$,
  then, for any $\rho$ such that $\rho \vDash \translate{\Gamma}$, there is some $W$ such that
  $\coeffcolor{\gamma} \cdot \rho \vdash \translate{e} \Downarrow \textbf{ret }W$.
\end{theorem}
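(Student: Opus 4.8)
The plan is to derive this as a straightforward consequence of the translation-correctness theorem together with the soundness machinery for \cbpvlangname. Given $\Gamma \cbnvdash e :^{\coeffcolor{\gamma}} \tau$ with $\translate{\tau} = (\ftype\;A, \coeffcolor{\gamma'})$, I would first apply Theorem~\ref{thm:translation-sound} to obtain $\coeffcolor{\gamma} \cdot \translate{\Gamma} \vdash \translate{e} : \ftype\;A$. This step already bakes in the observation (noted just before that theorem) that the residual vector $\coeffcolor{\gamma'}$ produced by the type translation is $\lazyvec$ at returner types, so no extra correction term appears and the effect on the \cbpvlangname judgment is the bare $\coeffcolor{\gamma}$.

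Next I would feed this \cbpvlangname typing derivation into the Fundamental Lemma for soundness (Lemma~\ref{lem:fundamental-lemma}), yielding the semantic judgment $\coeffcolor{\gamma} \cdot \translate{\Gamma} \vDash \translate{e} : \ftype\;A$. Instantiating the definition of computation semantic typing with the given $\rho$ (which satisfies $\translate{\Gamma} \vDash \rho$, the same condition the theorem writes as $\rho \vDash \translate{\Gamma}$) produces a terminal computation $T$ with $\coeffcolor{\gamma} \cdot \rho \vdash \translate{e} \Downarrow T$ and $T \in \lrc{\ftype\;A}$. Unfolding the logical relation at the returner type, $\lrc{\ftype\;A} = \set{\textbf{ret } W \alt W \in \lrv{A}}$, so $T$ must have the form $\textbf{ret } W$ for some closed terminal value $W$, which is exactly the required conclusion. (Equivalently, one can obtain the existence of $T$ from Theorem~\ref{thm:soundness}(2) and then invoke the Fundamental Lemma only to pin down the shape of $T$.)

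I do not anticipate a genuine obstacle: the real content lives in Theorems~\ref{thm:translation-sound} and~\ref{thm:soundness} and in the logical relation of Figure~\ref{fig:lr-semtyping}, all of which are already established. The only things requiring a little care are the scoping bookkeeping — $\rho$ ranges over the translated context $\translate{\Gamma}$ rather than $\Gamma$ — and the legitimacy of stating the conclusion with $\coeffcolor{\gamma}$ instead of $\coeffcolor{\gamma + \gamma'}$, which holds precisely because $\translate{\tau}$ yields $\lazyvec$ whenever the computation type is an $\ftype$ type; both points are handled by the cited results, so the proof is essentially a two-line composition.
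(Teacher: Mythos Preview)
Your proposal is correct and matches the paper's own proof, which is stated simply as a composition of Lemma~\ref{lem:fundamental-lemma} and Theorem~\ref{thm:translation-sound}. You have merely spelled out the details of that composition (including the unfolding of $\lrc{\ftype\;A}$ to extract the $\textbf{ret } W$ shape), which is exactly what the one-line proof in the paper leaves implicit.
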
 %

\begin{proof}
  Simple composition of Lemma \ref{lem:fundamental-lemma} and Theorem \ref{thm:translation-sound}.
\end{proof}

With this, we have a strong claim that the type system we have developed
for \cbpvlangname can model the \cbnlangname type-and-effect system
we laid out in Section \ref{sec:cbn}.

\subsection{Aside: Are Attribute Vectors Effects?} \label{subsec:naming}

One unusual facet of the relationship between \cbnlangname and \cbpvlangname
is that $\coeffcolor{\gamma}$s behave like effects in \cbnlangname
but lose this flavor when translated to \cbpvlangname.
This seems strange on its face, and it is worth discussing why we describe
$\coeffcolor{\gamma}$s as effects in \cbnlangname but
not in \cbpvlangname.

\subsubsection*{Effects in \cbnlangname}

CBN languages generally prefer to isolate effects via monads to maintain purity,
so effect systems for such languages are relatively uncommon.
\cbnlangname does, however, resemble the call-by-name effect system \textsc{Name} described by
\citet{McDermottMycroft-2018}.
In particular, while \textsc{Name} is a general effect system,
if one were to instantiate that system with the $\coeffcolor{\gamma}$ effects
we have described, the result would be strikingly similar to \cbnlangname.
Much like \cbnlangname, \textsc{Name} has typing contexts that associate variables with
both types and effects, produces latent effects in its variable rule,
and has effect annotations on both function parameters and bodies.
The main difference between the systems is that the $\rref{T-CBN-Var}$ rule
marks its variable as strict in the effect it
produces---necessary for the specific effect that \cbnlangname is tracking.

Additionally, as noted above, \cbnlangname's $\coeffcolor{\gamma}$s
form a preordered monoid,
which, according to \citet{katsumata-2014} is necessary for them to be effects.
The preorder is required to compare effects, while the monoid operator
is necessary to combine them in function applications.

\subsubsection*{Coeffects in \cbpvlangname?}

The rules of \cbpvlangname, on the other hand,
mostly do not resemble effect rules.
The most obvious reason for this is that the typing judgment for values
contains $\coeffcolor{\gamma}$s;
recalling Levy's slogan for CBPV---``a value is, a computation
does''---it would be quite strange for a value to produce an effect.
On the other hand,
the rules for \cbpvlangname do resemble the coeffect rules in prior work by \citet{torczon-effects-2024},
except that \cbpvlangname tracks $\coeffcolor{\gamma}$s on the $\utype{}$ type rather than the $\ftype$.

Evidence for this connection can be found in the way our type system degenerates when
instantiated for a call-by-value language without thunks.
Since \cbpvlangname can also model CBV evaluation \cite{levy-call-by-push-value-1999},
one might wonder what the pre-image of the translation of CBV into \cbpvlangname would look like.
However, since CBV does not suspend computation, a type system tracking lazy usage devolves into a relevance type system,
which is known to be an instance of a coeffect system \cite{abel-coeffects,Petricek-coeffect-2014, brady-idris-2021}.

Another technical difference between general graded type systems and
ours relates to the structure of the associated modality.
In CBPV, the graded Box and Diamond modalities can be derived by composing the $\utype{}$ and $\ftype$ types.
Prior work \cite{torczon-effects-2024} grades coeffects using the $\ftype$ type and effects using the $\utype{}$ type.
In contrast, our system tracks variable usage like a coeffect but grades the $\utype{}$ type like an effect.
It could be, however, that there is flexibility where grades appear
(e.g. other prior work \cite{rioux-computation-2020} grades the $\ftype$ type with effects),
so the relationship is not completely clear.

Lastly, coeffect algebras feature a multiplication operator to scale coeffects,
along with an addition operator to combine them, but
\cbpvlangname lacks such a scaling operation.
This is because the strictness algebra of \cbpvlangname is
\textit{non-quantitative} \cite{Bianchini-resource-2023} (or \textit{informational} \cite{abel-unified-2020})
and defines $\coeffcolor{+}$ idempotently. Multiplication becomes degenerate in such a system, since it
is just iterated addition.
To make \cbpvlangname's rules more closely resemble coeffect rules,
we could define a scaling operation $\coeffcolor{\cdot}$ with $\qstrict$ as its unit and
$\qlazy$ as its annihilator. However,
including explicit scaling would not change any attributes computed
by \cbpvlangname, so
there does not seem to be much point.

\section{Semantic Properties of \cbpvlangname} \label{sec:bot}

Our next job is to show that the intensional, type-theoretic
characterization of strictness
provided by \cbpvlangname refines the commonly understood extensional definition.
To do this, we allow programs to be evaluated in environments that
may not have bindings for every variable in the typing context.
Such missing variables cannot be scrutinized in any valid semantic derivation tree
(since the premise of the $\rref{E-Var}$ rule cannot be satisfied),
so programs cannot use an unbound value strictly.
(We deliberately remain agnostic as to how such variables may end up missing
  from the environment, so a variable
without a binding can be used to model one given a $\bot$ value in the extensional model.)

We establish two different properties: \textit{lazy
soundness} and \textit{strict failure}.
First, any well-typed program can be run in an environment lacking bindings
for any variables with an $\qlazy$ attribute and still produce a result.
Second, a well-typed program with an $\ftype$ type is guaranteed to fail when run in an environment lacking a value
for any variable with an $\qstrict$ attribute.

\newcommand{\scope}{X}

\newcommand{\observer}{\coeffcolor{\gamma_\texttt{Obs}}}
\newcommand{\vDashlazy}{\vDash^{\qlazy}}
\newcommand{\vDashstrict}{\vDash^{\qstrict}}
\newcommand{\vDashunknown}{\vDash^{\qunknown}}
\newcommand{\fail}{\coeffcolor{\lightning}}
\newcommand{\vDashlighting}{\vDash^{\fail}}

\subsection{Lazy Soundness}

\begin{figure}
  \label{fig:lrb}
  \begin{align*}
    \lrv{\texttt{unit}}_\scope^{\observer} &\;=\; \set{()} \\
    \lrv{A_1 \times A_2}_\scope^{\observer} &\;=\; \set{(W_1, W_2) \alt W_1 \in \lrv{A_1}_\scope^{\observer} \andtext W_2 \in \lrv{A_2}_\scope^{\observer}} \\
    \lrv{A_1 + A_2}_\scope^{\observer} &\;=\; \set{\textbf{inl } W_1 \alt W_1 \in \lrv{A_1}_\scope^{\observer}} \cup \set{\textbf{inr } W_2 \alt W_2 \in \lrv{A_2}_\scope^{\observer}} \\
    \lrv{\textbf{U}_{\coeffcolor{\gamma}} B}_\scope^{\observer} &\;=\; \set{\{{\coeffcolor{\gamma'}}, \rho, M \} \alt
      \coeffcolor{\gamma \mid_{\dom{\gamma'}} = \gamma' \mid_{\dom{\gamma}}} \andtext \\
      &\;\;\;\;\;\;\;\;\;\;(\coeffcolor{\observer} \mid_{\scope} \;\;\coeffcolor{\leq_S}\; \coeffcolor{\gamma} \mid_{\scope} \impliestext
    {\coeffcolor{\gamma'}} \cdot \rho \vdash M \Downarrow T \andtext T \in \lrc{B}_\scope^{\observer})} \\
    \lrc{\textbf{F} A }_\scope^{\observer} &\;=\; \set{\textbf{ret } W \alt W \in \lrv{A}_\scope^{\observer}} \\
    \lrc{A^{\coeffsymbol} \rightarrow B}_\scope^{\observer} &\;=\; \set{\llangle \coeffcolor{\gamma}, \rho, \lambda x. M \rrangle \alt
      \foralltext W \in \lrv{A}_\scope^{\observer},
    \coeffcolor{\gamma}\cdot \rho, x \mapsto^{\coeffsymbol} W\ \vdash M \Downarrow T \andtext T \in \lrc{B}_\scope^{\observer}}
  \end{align*}
  \vspace*{-1\baselineskip}
  \caption{Lazy Logical Relation}
  \ifextended \vspace*{-1\baselineskip} \else \vspace*{-1\baselineskip} \fi
  \label{fig:lrb}
\end{figure}

To prove lazy soundness---that unbound lazy variables do not prevent successful execution---we
use the logical relation in Figure \ref{fig:lrb},
which is indexed by a set of variables $\scope$ and a vector $\observer$.
The set $\scope$ represents the outermost scope of the program,
i.e., the scope of the top-level program declaration.
This is the smallest scope that will ever exist during evaluation,
since variables that are free at the top level can never go out of scope.
We define also a relation $\coeffcolor{\leq_S}$ on vectors that
compares attributes pointwise and relates two vectors if the attributes
in the lesser vector are ``stricter'' than those in the greater.
That is, anywhere the greater vector is non-lazy, the lesser must be as well.

The $\observer$ functions like an ``observer'' context,
describing how a variable can be used (as opposed to how it is actually used).
This gives rise to the requirement that $\observer$ always be stricter than
the $\coeffcolor{\gamma}$ on the type in the $\utype{}$ case of the relation.
Thunks in this relation may not evaluate to a value,
but, if they do, they will use variables less strictly than what the $\observer$ vector allows.
For example,
the thunk $\{\textbf{ret } x\} \in \lrv{\utype{x:\qstrict} \ftype A}_{\{x\}}^{\coeffcolor{x:\qlazy}}$,
even if $x$ is missing a value.
The observer vector in this case tells us that $x$ will not be used strictly
at any point, so this thunk is never forced and accordingly is still
``well typed'' by this relation
because $\coeffcolor{x:\qlazy \nleq_S x:\qstrict}$.

We define semantic typing for environments, values, and computations in
Figure \ref{fig:bot-semtyping-lazy}.
Intuitively, an environment $\rho$ is semantically well typed---by a vector $\coeffcolor{\gamma}$ and
context $\Gamma$ at a scope $X$ and observer $\observer$---if
all the values associated with variables that have attributes other than $\qlazy$
are well typed at their associated types in $\Gamma$ at the same scope and observer.

A value or computation is semantically well typed by vector $\coeffcolor{\gamma}$ and context $\Gamma$
if for every observer $\observer$, environment $\rho$, and scope $\scope$,
the value or computation can be evaluated to a well-typed result as long as $\rho$ is well typed,
$\observer$ is stricter than the vector $\coeffcolor{\gamma}$ of attributes
used by evaluation,
and every variable in $\observer$ that is not present in $\scope$
is assigned an attribute of $\qstrict$, allowing it to be used.

\begin{figure}
  \begin{align*}
    \coeffcolor{\gamma} \cdot \Gamma \vDash_\scope^{\observer} \rho &\;\;\;\triangleq\;\;\;
    x :^{\coeffsymbol} A \in (\coeffcolor{\gamma} \cdot\Gamma) \andtext \coeffcolor{\coeffsymbol\neq \qlazy} \impliestext
    x \mapsto W \in \rho \andtext W \in \lrv{A}_\scope^{\observer} \\
    \coeffcolor{\gamma}  \cdot \Gamma \vDashlazy \rho &\;\;\;\triangleq\;\;\; \coeffcolor{\gamma}  \cdot \Gamma \vDash_{\coeffcolor{\dom{\gamma}} }^{\coeffcolor{\gamma} } \rho \\
    X \vDashstrict \observer &\;\;\;\triangleq\;\;\; x \notin X \impliestext x: \qstrict \in\coeffcolor{\observer} \\
    \coeffcolor{\gamma} \cdot \Gamma \vDashlazy V : A &\;\;\;\triangleq\;\;
    \foralltext X \andtext \observer, (\observer \cdot \Gamma \vDash_\scope^{\observer} \rho \andtext
      \coeffcolor{\observer} \mid_{\scope} \;\;\coeffcolor{\leq_S}\; \coeffcolor{\gamma} \mid_{\scope} \andtext
    X \vDashstrict \observer) \impliestext \\
    &\;\;\;\;\;\;\;\;\;\;\coeffcolor{\gamma}  \cdot \rho \vdash V \Downarrow W \andtext W \in \lrv{A}_\scope^{\observer} \\
    \coeffcolor{\gamma}  \cdot \Gamma \vDashlazy M : B &\;\;\;\triangleq\;\;
    \foralltext X \andtext \observer, (\observer \cdot \Gamma \vDash_\scope^{\observer} \rho \andtext
      \coeffcolor{\observer} \mid_{\scope} \;\;\coeffcolor{\leq_S}\; \coeffcolor{\gamma} \mid_{\scope} \andtext
    X \vDashstrict \observer) \impliestext \\
    &\;\;\;\;\;\;\;\;\;\;\coeffcolor{\gamma}  \cdot \rho \vdash M \Downarrow T \andtext T \in \lrc{B}_\scope^{\observer}
  \end{align*}
  \vspace*{-1\baselineskip}
  \caption{Lazy Semantic Typing}
  \vspace*{-0.5\baselineskip}
  \label{fig:bot-semtyping-lazy}
\end{figure}

This last requirement deserves some additional explanation.
As mentioned, $\scope$ describes the outermost
scope of a program,
and in the proof of Theorem \ref{thm:soundness-bot} we will choose $\scope$ to be this scope.
For the purposes of induction, however, we must generalize to an arbitrary scope
that exactly describes the variables that might be missing values in $\rho$.
Every variable not in $\scope$ must have been added to $\rho$
during evaluation, and thus cannot be missing
and can have an $\qstrict$ attribute in $\observer$.

Now we can state and prove the fundamental lemma,
from which soundness follows directly.

\begin{lemma}[Lazy Fundamental Lemma]
  \label{lem:fundamental-lemma-bot}
  For all $\coeffcolor{\gamma}$ and $\Gamma$,
  if $\coeffcolor{\gamma}  \cdot \Gamma \vdash V : A$,
  then $\coeffcolor{\gamma}  \cdot \Gamma \vDashlazy V : A$,
  and if $\coeffcolor{\gamma}  \cdot \Gamma \vdash M : B$,
  then $\coeffcolor{\gamma}  \cdot \Gamma \vDashlazy M : B$.
\end{lemma}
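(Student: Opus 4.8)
The plan is to prove both implications simultaneously by mutual induction on the derivations of $\coeffcolor{\gamma}\cdot\Gamma\vdash V:A$ and $\coeffcolor{\gamma}\cdot\Gamma\vdash M:B$. For each typing rule I would unfold the definition of lazy semantic typing from Figure~\ref{fig:bot-semtyping-lazy}: fix an arbitrary outermost scope $\scope$, observer $\observer$, and environment $\rho$ satisfying the three standing hypotheses $\observer\cdot\Gamma\vDash_\scope^{\observer}\rho$, $\observer\mid_{\scope}\;\coeffcolor{\leq_S}\;\coeffcolor{\gamma}\mid_{\scope}$, and $\scope\vDashstrict\observer$, and then exhibit an evaluation $\coeffcolor{\gamma}\cdot\rho\vdash V\Downarrow W$ (resp.\ $\coeffcolor{\gamma}\cdot\rho\vdash M\Downarrow T$) together with $W\in\lrv{A}_\scope^{\observer}$ (resp.\ $T\in\lrc{B}_\scope^{\observer}$). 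The scope $\scope$ is threaded unchanged into every recursive appeal to the induction hypothesis, while the observer is extended with a $\qstrict$ attribute for each freshly bound variable; since such variables never lie in $\scope$ (they only enter $\rho$ during evaluation), this both preserves $\scope\vDashstrict\observer$ and leaves the value and computation relations unchanged, as those depend on $\observer$ only through $\observer\mid_{\scope}$.

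Before the induction I would prove two small monotonicity facts about $\coeffcolor{\leq_S}$. First, $\coeffcolor{+}$ is monotone for the ``stricter-than'' ordering: from $\observer\mid_{\scope}\;\coeffcolor{\leq_S}\;\coeffcolor{(\gamma_1+\gamma_2)}\mid_{\scope}$ we obtain both $\observer\mid_{\scope}\;\coeffcolor{\leq_S}\;\coeffcolor{\gamma_1}\mid_{\scope}$ and $\observer\mid_{\scope}\;\coeffcolor{\leq_S}\;\coeffcolor{\gamma_2}\mid_{\scope}$, because by Table~\ref{table:plus} a sum is non-lazy at a variable whenever either summand is. This is exactly what lets the observer hypothesis descend to the premises of every rule whose effect is a sum ($\rref{T-Force}$, $\rref{T-Let}$, $\rref{T-App}$, $\rref{T-Seq}$, $\rref{T-Pair}$, $\rref{T-Split}$, $\rref{T-Case}$). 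Second, the information preorder $\coeffcolor{\leq}$ is compatible with $\coeffcolor{\leq_S}$: if $\coeffcolor{\gamma\leq\gamma'}$ and $\observer\;\coeffcolor{\leq_S}\;\coeffcolor{\gamma}$ then $\observer\;\coeffcolor{\leq_S}\;\coeffcolor{\gamma'}$, since $\coeffcolor{\gamma\leq\gamma'}$ forces $\coeffcolor{\gamma}$ to be non-lazy wherever $\coeffcolor{\gamma'}$ is. This discharges the $\rref{T-Sub}$ case: apply the induction hypothesis at the same $\scope$ and $\observer$, obtain the evaluation at $\coeffcolor{\gamma'}$, and close with $\rref{E-Sub}$.

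With these facts in hand, most cases are mechanical: instantiate the induction hypothesis for each premise at $\scope$ and the suitably extended $\observer$, discharge the three side conditions using the lemmas above and the scope-independence remark, apply the corresponding operational rule, and repackage the result. The case that genuinely uses $\scope\vDashstrict\observer$ is $\rref{T-Var}$: the variable $x$ being read is non-lazy in the typing vector, so either $x\in\scope$, in which case $\observer(x)$ is non-lazy by $\observer\mid_{\scope}\;\coeffcolor{\leq_S}$ the typing vector, or $x\notin\scope$, in which case $\observer(x)=\qstrict$; either way $x$ is non-$\qlazy$ in $\observer$, so $\observer\cdot\Gamma\vDash_\scope^{\observer}\rho$ supplies $x\mapsto W\in\rho$ with $W\in\lrv{A}_\scope^{\observer}$, and $\rref{E-Var}$ applies. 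The $\rref{T-Thunk}$ case is also clean: step by $\rref{E-Thunk}$, choosing the vector recorded on the $\utype{}$ type so the scope-alignment clause of the $\utype{}$ relation holds trivially; and when the observer permits a strict use, the captured computation runs by the induction hypothesis for $M$, whose three side conditions are precisely our standing hypotheses.

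The part I expect to be the main obstacle is the $\rref{T-Force}$ case, together with the $\downshift{x}$-induced scope adjustments in $\rref{T-Let}$, $\rref{T-Abs}$, $\rref{T-Split}$, and $\rref{T-Case}$. For $\rref{T-Force}$ on $V!$, the induction hypothesis for $V$ yields a thunk $\{\coeffcolor{\gamma_2'},\rho',M\}\in\lrv{\utype{\gamma_2}B}_\scope^{\observer}$ whose captured vector $\coeffcolor{\gamma_2'}$ agrees with the type's $\coeffcolor{\gamma_2}$ only on their shared domain, whereas $\rref{E-Force}$ evaluates $V!$ with effect $\coeffcolor{\gamma_1+(\gamma_2'\mid_{\dom{\gamma_1}})}$ while typing predicts $\coeffcolor{\gamma_1+\gamma_2}$; reconciling these requires the scope-alignment clause of the $\utype{}$ relation plus an auxiliary invariant that a terminal value's captured vector and the vector on its type agree over whatever scope is actually in play, which is the bookkeeping the mechanization carries explicitly in its well-scoped de~Bruijn representation. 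The $\downshift{x}$ cases need the companion fact that forgetting a just-bound variable from a result type (and from the observer) commutes with membership in the relations, which again leans on their scope-independence outside $\scope$. Everything else follows the standard fundamental-lemma template, so I expect these scoping reconciliations to be where essentially all of the real work lies.
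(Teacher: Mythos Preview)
Your proposal is correct and takes essentially the same approach as the paper: the paper's proof is the single line ``By induction on the typing derivation,'' with all details deferred to the Rocq mechanization. Your breakdown of the cases, the two monotonicity facts for $\coeffcolor{\leq_S}$, the handling of freshly bound variables via $\qstrict$ extensions of the observer, and your identification of \rref{T-Force} and the $\coeffcolor{\downshift{x}}$ scope adjustments as the loci of real bookkeeping all line up with what the mechanization actually does (and with the paper's own footnote acknowledging that the de~Bruijn representation carries extra scope-tracking machinery precisely here).
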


\begin{proof}
  By induction on the typing derivation.
\end{proof}

\begin{theorem}[Lazy Soundness]
  \label{thm:soundness-bot}
  For all $\coeffcolor{\gamma} $, $\Gamma$, and $\rho$ such that
  $\coeffcolor{\gamma}  \cdot \Gamma \vDashlazy \rho$
  and all $x$ such that $x : \qlazy \in \coeffcolor{\gamma}$,
  \begin{enumerate}
    \item if $\coeffcolor{\gamma}  \cdot \Gamma \vdash V : A$,
      then there is some $W$ such that
      $\coeffcolor{\gamma}  \cdot (\rho-x) \vdash V \Downarrow W$;
    \item if $\coeffcolor{\gamma}  \cdot \Gamma \vdash M : B$,
      then there is some $T$ such that
      $\coeffcolor{\gamma}  \cdot (\rho-x) \vdash M \Downarrow T$,
  \end{enumerate}
  \hspace{\parindent} where $\rho-x$ denotes the environment equivalent to $\rho$ with the binding for $x$ removed.
\end{theorem}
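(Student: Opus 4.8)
The plan is to derive the theorem directly from the Lazy Fundamental Lemma (Lemma~\ref{lem:fundamental-lemma-bot}) by instantiating the semantic typing judgment it produces at a carefully chosen scope and observer. First I would apply Lemma~\ref{lem:fundamental-lemma-bot} to the syntactic typing hypothesis to obtain $\coeffcolor{\gamma} \cdot \Gamma \vDashlazy V : A$ (respectively $\coeffcolor{\gamma} \cdot \Gamma \vDashlazy M : B$). Unfolding $\vDashlazy$, this says: for every environment $\rho'$, scope $\scope$, and observer $\observer$ with $\observer \cdot \Gamma \vDash_{\scope}^{\observer} \rho'$, $\coeffcolor{\observer} \mid_{\scope} \coeffcolor{\leq_S} \coeffcolor{\gamma} \mid_{\scope}$, and $\scope \vDashstrict \observer$, there is a $W$ with $\coeffcolor{\gamma} \cdot \rho' \vdash V \Downarrow W$ and $W \in \lrv{A}_{\scope}^{\observer}$. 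I would feed this with $\rho' := \rho - x$, $\scope := \coeffcolor{\dom{\gamma}}$, and $\observer := \coeffcolor{\gamma}$ itself.

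Two of the three side conditions are then immediate: $\coeffcolor{\gamma} \mid_{\coeffcolor{\dom{\gamma}}} \coeffcolor{\leq_S} \coeffcolor{\gamma} \mid_{\coeffcolor{\dom{\gamma}}}$ holds by reflexivity of $\coeffcolor{\leq_S}$, and $\coeffcolor{\dom{\gamma}} \vDashstrict \coeffcolor{\gamma}$ is vacuous, as it constrains only variables outside $\coeffcolor{\dom{\gamma}}$ (in the de Bruijn mechanization this is exactly the ambient scope, so there are none). The real content is the remaining obligation $\coeffcolor{\gamma} \cdot \Gamma \vDash_{\coeffcolor{\dom{\gamma}}}^{\coeffcolor{\gamma}} (\rho - x)$. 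Here I would invoke the hypothesis $\coeffcolor{\gamma} \cdot \Gamma \vDashlazy \rho$, which by definition is precisely $\coeffcolor{\gamma} \cdot \Gamma \vDash_{\coeffcolor{\dom{\gamma}}}^{\coeffcolor{\gamma}} \rho$, and then prove a one-line sublemma: removing from an environment a variable that is graded $\qlazy$ preserves the relation $\vDash_{\scope}^{\observer}$. This holds because semantic environment typing demands a binding only for variables whose attribute is not $\qlazy$, and we are given $x : \qlazy \in \coeffcolor{\gamma}$; the $\lrv{\cdot}_{\scope}^{\observer}$-memberships of the surviving bindings are untouched, since $\lrv{\cdot}_{\scope}^{\observer}$ is a predicate on closed terminal values and thunks carry their own captured environments, so shrinking the ambient $\rho$ is invisible to them.

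Discharging these three conditions yields a $W$ (respectively a $T$) with $\coeffcolor{\gamma} \cdot (\rho - x) \vdash V \Downarrow W$ (respectively $\coeffcolor{\gamma} \cdot (\rho - x) \vdash M \Downarrow T$), and dropping the now-superfluous $\lrv{A}_{\scope}^{\observer}$ / $\lrc{B}_{\scope}^{\observer}$ clause gives exactly the statement of Theorem~\ref{thm:soundness-bot}; the value and computation cases are handled uniformly, differing only in which half of Lemma~\ref{lem:fundamental-lemma-bot} is cited.

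I expect that the genuinely hard part is already behind us, inside Lemma~\ref{lem:fundamental-lemma-bot}, whose proof must thread $\observer$, the ordering $\coeffcolor{\leq_S}$, and the scope index $\scope$ through every typing rule, with the delicate reasoning concentrated in the cases for $\utype{}$ types (\rref{T-Thunk} and \rref{T-Force}): aligning a thunk's captured vector with the $\coeffcolor{\gamma}$ on its $\utype{}$ type over their shared domain, and using the observer to certify that an $\qlazy$-graded thunk is never actually forced. For the present theorem the only step needing care is the choice of instantiation: $\scope := \coeffcolor{\dom{\gamma}}$ and $\observer := \coeffcolor{\gamma}$ are forced essentially uniquely, because a smaller scope such as $\set{x}$ would, via $\scope \vDashstrict \observer$, compel every other variable of $\coeffcolor{\gamma}$ to be $\qstrict$ in $\observer$ and hence demand bindings in $\rho$ that $\coeffcolor{\gamma} \cdot \Gamma \vDashlazy \rho$ does not guarantee for the variables graded $\qlazy$.
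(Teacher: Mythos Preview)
Your proposal is correct and mirrors the paper's own proof essentially verbatim: the paper also instantiates the Lazy Fundamental Lemma with $\observer := \coeffcolor{\gamma}$ and $\scope := \coeffcolor{\dom{\gamma}}$, and singles out as the one nontrivial step the fact that $\coeffcolor{\gamma} \cdot \Gamma \vDashlazy \rho - x$ whenever $x : \qlazy \in \coeffcolor{\gamma}$. Your elaboration of why the other two side conditions are immediate, and your remark that the hard work lives inside Lemma~\ref{lem:fundamental-lemma-bot}, are accurate and simply spell out what the paper leaves implicit.
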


\begin{proof}
  Using the fundamental lemma,
  choosing $\observer$ to be $\coeffcolor{\gamma}$ and $\scope$ to be $\coeffcolor{\dom{\gamma}}$.
  We also rely on the fact that $\coeffcolor{\gamma}  \cdot \Gamma \vDashlazy \rho-x$ when
  $x : \qlazy \in \coeffcolor{\gamma}$.
\end{proof}

This theorem tells us that
if \cbpvlangname assigns a lazy attribute to a variable in some program,
we can still evaluate that program in an environment where that variable is not bound to a value.

\subsection{Strict Failure}

\newcommand{\coeffequiv}{\coeffcolor{\equiv}}

\newcommand{\equivdef}{
  $$
  \infer
  {() \;\coeffequiv\; ()}{} \qquad
  \infer
  {\textbf{inl } W \;\coeffequiv\; \textbf{inl } W'}{W \;\coeffequiv\; W'} \qquad
  \infer
  {\textbf{inr } W \;\coeffequiv\; \textbf{inr } W'}{W \;\coeffequiv\; W'} \qquad
  \infer
  {(W_1, W_2) \;\coeffequiv\; (W_1', W_2')}{W_1 \;\coeffequiv\; W_1' & W_2 \;\coeffequiv\; W_2'}
  \vspace{2.75ex}
  $$
  $$
  \infer
  {\set{\coeffcolor{\gamma}, \rho, M} \;\coeffequiv\; \set{\coeffcolor{\gamma'}, \rho', M}}{\rho \;\coeffequiv\; \rho'} \qquad
  \infer
  {\textbf{ret } W \;\coeffequiv\; \textbf{ret } W'}{W \;\coeffequiv\; W'} \qquad
  \infer
  {\llangle\coeffcolor{\gamma}, \rho, M\rrangle \;\coeffequiv\; \llangle\coeffcolor{\gamma'}, \rho', M\rrangle}{\rho \;\coeffequiv\; \rho'} \qquad
  \infer
  {\rho \;\coeffequiv\; \rho'}{\forall x, \rho(x) \;\coeffequiv\; \rho'(x)}
  $$
}

\begin{figure}
  \equivdef
  \caption{Equivalence modulo $\coeffcolor{\gamma}$}
  \ifextended \vspace*{-0.5\baselineskip} \else \vspace*{-0.75\baselineskip} \fi
  \label{fig:equal-erase}
\end{figure}

To show that strictly using a missing value always causes evaluation to fail,
we need a different, more complex logical relation.
We also need a precise notion of what it means for evaluation to fail.
In particular, it is possible for a semantic derivation in \cbpvlangname
to fail because the wrong attributes were chosen for a non-deterministic
rule like \rref{E-Thunk} or \rref{E-Sub}, rather than because a value is absent.
This kind of failure is uninteresting,
in the sense that evaluation could simply have made a different choice and produced a valid derivation tree;
we want to argue that strict usage of a variable whose value is missing
cannot \emph{ever} produce a valid semantic derivation, regardless of how attributes are chosen.
We denote this kind of failure with the $\fail$ symbol.

We formalize a notion of derivations that are equivalent up to their
choice of attributes by defining a relation $\coeffequiv$ that relates
two closed terminals if they are equivalent modulo their attributes.
This definition is given in Figure \ref{fig:equal-erase}.
\ifextended

Using this definition, we can observe a fact about \cbpvlangname:
while its semantics is not deterministic in general (in particular, whenever the \rref{E-Sub} rule is used),
it \textit{is} deterministic modulo $\coeffcolor{\gamma}$s. Formally:

\begin{lemma}[Determinism]   \label{lem:deterministic}
  Given $\rho_1$ and $\rho_2$ such that $\rho_1 \coeffequiv\rho_2$,
  \begin{enumerate}
    \item If $\coeffcolor{\gamma_1} \cdot\rho_1 \vdash V \Downarrow W_1$ and
      $\coeffcolor{\gamma_2} \cdot\rho_2 \vdash V \Downarrow W_2$ then $W_1\coeffequiv W_2$;
    \item If $\coeffcolor{\gamma_1} \cdot\rho_1 \vdash M \Downarrow T_1$ and
      $\coeffcolor{\gamma_2} \cdot\rho_2 \vdash M \Downarrow T_2$ then $T_1\coeffequiv T_2$
  \end{enumerate}
\end{lemma}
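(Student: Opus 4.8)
The plan is to prove both parts by a single mutual induction on the derivations of $\coeffcolor{\gamma_1}\cdot\rho_1\vdash V\Downarrow W_1$ and $\coeffcolor{\gamma_1}\cdot\rho_1\vdash M\Downarrow T_1$, inverting the companion ($\coeffcolor{\gamma_2}$) derivation in each case. The key structural fact is that the big-step semantics of \cbpvlangname is \emph{syntax-directed up to the choice of attribute vectors}: for every term former there is exactly one applicable evaluation rule, with the sole exception of \rref{E-Case-Inl} versus \rref{E-Case-Inr}. Hence, after inverting the second derivation on the same term, in all but the \textbf{case} case the two derivations end with the same rule, and it remains to relate their subderivations via the inductive hypothesis. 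The attribute vectors chosen by the two derivations may well differ --- this already happens with \rref{E-Thunk} and \rref{E-Sub}, both of which pick their vectors nondeterministically --- but $\coeffequiv$ is by design insensitive to this slack: it ignores the $\coeffcolor{\gamma}$ component of thunks and closures entirely.

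For the leaves: \rref{E-Var} forces $W_1=\rho_1(x)$ and $W_2=\rho_2(x)$, which are related by the hypothesis $\rho_1\coeffequiv\rho_2$; the attribute side-conditions constrain $\coeffcolor{\gamma_1}$ and $\coeffcolor{\gamma_2}$ but not the looked-up value. \rref{E-Unit} is trivial, and \rref{E-Abs} and \rref{E-Thunk} follow because the term component of the resulting closure/thunk is syntactically fixed while the captured environments are $\coeffequiv$ by hypothesis. For the congruence cases (\rref{E-Return}, \rref{E-Seq}, \rref{E-Let}, \rref{E-Split}, \rref{E-App}, \rref{E-Force}, and the product/sum introductions), we apply the inductive hypothesis to matching subderivations. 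Wherever a rule extends the environment --- \rref{E-Let}, \rref{E-Split}, \rref{E-App}, \rref{E-Case-Inl} --- we use the immediate fact that $\coeffequiv$ on environments is closed under extension by $\coeffequiv$-related values carrying arbitrary (possibly differing) attributes, straight from the definition in Figure \ref{fig:equal-erase}. In \rref{E-App} one first applies the inductive hypothesis to the function subterm to learn that both derivations reach closures with the \emph{same} body (the closure rule of $\coeffequiv$ requires the term components to coincide) and $\coeffequiv$-related environments; together with $\coeffequiv$-related arguments, the hypothesis then applies to the two body subderivations. \rref{E-Force} is handled identically, using the thunk rule of $\coeffequiv$.

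The one case that carries any content is \textbf{case}, and it is also the only place where the argument could conceivably break, since \rref{E-Case-Inl} and \rref{E-Case-Inr} share a conclusion term. Suppose the first derivation ends in \rref{E-Case-Inl}, so its scrutinee evaluates to $\textbf{inl } W$. By the inductive hypothesis for values, the scrutinee evaluates in the second derivation to some $W'$ with $\textbf{inl } W\coeffequiv W'$; but $\coeffequiv$ relates an $\textbf{inl}$ only to another $\textbf{inl}$, so the second derivation must likewise be an \rref{E-Case-Inl} with $W\coeffequiv W'$, and we conclude by the inductive hypothesis on the selected branch after the environment-extension step above. The \rref{E-Case-Inr} subcase is symmetric. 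I expect this to be the main (and only mildly delicate) obstacle; every other case is bookkeeping, and the whole argument transcribes directly to the well-scoped de Bruijn presentation used in the mechanization.
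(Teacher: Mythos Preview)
Your proposal is correct and is exactly the natural argument: mutual induction on the first derivation, inversion on the second, with the \textbf{case} elimination being the only spot where two rules share a conclusion and where the inductive hypothesis on the scrutinee is needed to force the same branch. The paper itself does not spell out a proof for this lemma---it is stated as an observation and deferred to the Rocq mechanization---so there is nothing further to compare; your write-up matches what the mechanized proof does.
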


\else \fi
We then use this notion of equivalence modulo attributes to precisely express what it means
for there to be no possible valid derivation tree for a given expression and
environment.

\begin{definition}[Semantic Failure]   \label{def:failure}
  \begin{align*}
    \rho \vdash V \fail &\;\;\;\triangleq\;\;\; \foralltext \coeffcolor{\gamma} \andtext \rho',\textit{\;\;if\;\;}
    \rho \;\coeffequiv\; \rho' \textit{\;\;then\;\;} \nexists W, \coeffcolor{\gamma} \cdot \rho' \vdash V \Downarrow W \\
    \rho \vdash M \fail &\;\;\;\triangleq\;\;\; \foralltext \coeffcolor{\gamma} \andtext \rho', \textit{\;\;if\;\;}
    \rho \;\coeffequiv\; \rho' \textit{\;\;then\;\;} \nexists T, \coeffcolor{\gamma} \cdot \rho' \vdash M \Downarrow T
  \end{align*}
\end{definition}
Intuitively, this definition says that a value or computation
fails to evaluate in an environment $\rho$ if, for any choice of
attributes---both those in the derivation itself and also those appearing in the
environment---there is no closed terminal that can be produced by any derivation.
This rules out trivial failures of evaluation that occur due to an incorrect choice of attributes
and instead tells us about terms that fail to evaluate because they attempt to use
a missing or ill-typed value.

\begin{figure}
  \begin{align*}
    \lrv{\texttt{unit}}_x &= \set{()} \\
    \lrv{A_1 \times A_2}_x &= \set{(W_1, W_2) \alt W_1 \in \lrv{A_1}_x \andtext W_2 \in \lrv{A_2}_x} \\
    \lrv{A_1 + A_2}_x &= \set{\textbf{inl } W_1 \alt W_1 \in \lrv{A_1}_x} \cup \set{\textbf{inr } W_2 \alt W_2 \in \lrv{A_2}_x} \\
    \lrv{\textbf{U}_{\coeffcolor{\gamma}} B}_x &= \set{\{{\coeffcolor{\gamma'}}, \rho, M \} \alt
      \coeffcolor{\gamma \mid_{\dom{\gamma'}} = \gamma' \mid_{\dom{\gamma}}} \andtext
    (\coeffcolor{\gamma'}, \rho, M) \in\lrm{B}_x} \\
    \lrc{\textbf{F} A }_x &= \set{\textbf{ret } W \alt W \in \lrv{A}_x} \\
    \lrc{A^{\coeffsymbol} \rightarrow B}_x &= \set{\llangle \coeffcolor{\gamma}, \rho, \lambda x. M \rrangle \alt
      \foralltext W \in \lrv{A}_x, \\
      &\;\;\;\;\;\;\;\;\rho, x \mapsto W \vdash M \fail \ortext
    \coeffcolor{\gamma}\cdot \rho, x \mapsto^{\coeffsymbol} W\ \vdash M \Downarrow T \andtext T \in \lrc{B}_x} \\
  \end{align*}
  \vspace*{-1.5\baselineskip}
  \begin{align*}
    \lrm{\ftype A}_x &= \set{(\coeffcolor{\gamma}, \rho, M) \alt
      \coeffcolor{x : \qstrict \in\gamma} \impliestext \rho \vdash M \fail \andtext \\
      &\;\;\;\;\;\;\;\;\;\;\;\;\;\;\;\;\;\;\;\;\;\;\;\coeffcolor{x : \qstrict \notin\gamma} \impliestext
      \rho \vdash M \fail \ortext
    \coeffcolor{\gamma}\cdot \rho \vdash M \Downarrow T \andtext T \in \lrc{\ftype A}_x} \\
    \lrm{A^{\coeffsymbol} \rightarrow B}_x &= \set{(\coeffcolor{\gamma}, \rho,  M) \alt
      \coeffcolor{x : \qstrict \in\gamma} \impliestext
      \rho \vdash M \fail \ortext
      (\coeffcolor{\gamma}\cdot \rho \vdash M \Downarrow T \andtext T \in \lrf{\coeffsymbol, A, B}_x) \andtext \\
      &\;\;\;\;\;\;\;\;\;\;\;\;\;\;\;\;\;\;\;\;\;\;\;\coeffcolor{x : \qstrict \notin\gamma} \impliestext
      \rho \vdash M \fail \ortext
    \coeffcolor{\gamma}\cdot \rho \vdash M \Downarrow T \andtext T \in \lrc{A^{\coeffsymbol} \rightarrow B}_x} \\
  \end{align*}
  \vspace*{-1.5\baselineskip}
  \begin{align*}
    \lrf{(\coeffsymbol, A, \ftype A')}_x &= \set{\llangle \coeffcolor{\gamma}, \rho, \lambda x. M \rrangle \alt
    \foralltext W \in \lrv{A}_x, \;\;\;\;\rho, x \mapsto W\ \vdash M \fail} \\
    \lrf{(\coeffsymbol, A, A'^{\coeffcolor{\coeffsymbol'}} \rightarrow B)}_x &= \set{\llangle \coeffcolor{\gamma}, \rho, \lambda x. M \rrangle \alt
      \foralltext W \in \lrv{A}_x, \rho, x \mapsto W\ \vdash M \fail \ortext \\
      &\;\;\;\;\;\;\;\;
    \coeffcolor{\gamma}\cdot \rho, x \mapsto^{\coeffsymbol} W\ \vdash M \Downarrow T \andtext T \in \lrf{(\coeffcolor{\coeffsymbol'}, A', B)}_x}
  \end{align*}
  \vspace*{-\baselineskip}
  \caption{Strict Logical Relation}
  \vspace*{-\baselineskip}
  \label{fig:lrstrict}
\end{figure}

We define the logical relation for the proof of strict failure
in Figure \ref{fig:lrstrict}.
Unlike the lazy relation, the strict relation is indexed not by an entire
attribute vector but rather by a single variable.
We can understand this logical relation as searching the environment for the site of execution failure:
the $x$ variable indexing the relation is the location of the variable missing its value,
and the relation guarantees that thunk values that are in it
will necessarily fail when forced, if their attributes use $x$ strictly.
If a thunk's attributes are not $\qstrict$ for $x$, the relation
instead guarantees that forcing the thunk will either fail
or produce a result that itself is in the relation.

Most of the complexity in this relation comes from the fact that
attributes ``pass through'' lambda abstractions in \cbpvlangname.
This has precedent elsewhere in the CBPV literature \cite{torczon-effects-2024, kammar-2013,kammar-2012},
but it comes with an unfortunate consequence:
despite the fact that, say, $\lambda x. \;\textbf{ret } y$
typechecks with attribute $\coeffcolor{y:\qstrict}$, evaluation of this program will not actually fail.
Instead, it will produce a closure.
In practice, to bind a function to a variable in \cbpvlangname,
it must be suspended via a thunk, which will restore the behavior we want.
Therefore, we take the standard approach in the CBPV literature,
stating and proving our top-level soundness theorem at returner types $\ftype A$ only.
However, for purposes of induction the actual relation itself
must be fully general.

\begin{figure}
  \begin{align*}
    \coeffcolor{\gamma} \cdot \Gamma\vDashlighting_x \rho &\;\;\;\triangleq\;\;\; \coeffcolor{x : \qstrict \in\gamma} \andtext x \notin\rho \andtext
    (y \neq x \impliestext \rho(y) \in\lrv{\Gamma(y)}_x) \\
    \coeffcolor{\gamma} \cdot \Gamma\vDashunknown_x \rho &\;\;\;\triangleq\;\;\; \coeffcolor{x : \qstrict \notin\gamma} \andtext x \notin\rho \andtext
    (y \neq x \impliestext \rho(y) \in\lrv{\Gamma(y)}_x) \\
    \coeffcolor{\gamma} \cdot \Gamma \vDashstrict V : A &\;\;\;\triangleq\;\;\;
    \coeffcolor{\gamma} \cdot \Gamma\vDashlighting_x \rho \impliestext
    \rho \vdash V \fail \andtext \\
    &\;\;\;\;\;\;\;\;\;\;\coeffcolor{\gamma} \cdot \Gamma\vDashunknown_x \rho \impliestext
    \rho \vdash V \fail \ortext
    \coeffcolor{\gamma}\cdot \rho \vdash V \Downarrow W \andtext W \in\lrv{A}_x \\
    \coeffcolor{\gamma} \cdot \Gamma \vDashstrict M : \ftype A &\;\;\;\triangleq\;\;\;
    \coeffcolor{\gamma} \cdot \Gamma\vDashlighting_x \rho \impliestext
    \rho \vdash M \fail \andtext \\
    &\;\;\;\;\;\;\;\;\;\;\coeffcolor{\gamma} \cdot \Gamma\vDashunknown_x \rho \impliestext
    \rho \vdash M \fail \ortext
    \coeffcolor{\gamma}\cdot \rho \vdash M \Downarrow T \andtext T \in\lrc{\ftype A}_x \\
    \coeffcolor{\gamma} \cdot \Gamma \vDashstrict M : A^{\coeffsymbol} \rightarrow B &\;\;\;\triangleq\;\;\;
    \coeffcolor{\gamma} \cdot \Gamma\vDashlighting_x \rho \impliestext
    \rho \vdash M \fail \ortext
    \coeffcolor{\gamma}\cdot \rho \vdash M \Downarrow T \andtext T \in\lrf{A^{\coeffsymbol} \rightarrow B}_x \andtext \\
    &\;\;\;\;\;\;\;\;\;\;\coeffcolor{\gamma} \cdot \Gamma\vDashunknown_x \rho \impliestext
    \rho \vdash M \fail \ortext
    \coeffcolor{\gamma}\cdot \rho \vdash M \Downarrow T \andtext T \in\lrc{A^{\coeffsymbol} \rightarrow B}_x
  \end{align*}
  \vspace*{-\baselineskip}
  \caption{Strict Semantic Typing}
  \ifextended \vspace*{-\baselineskip} \else \fi
  \label{fig:bot-semtyping-strict}
\end{figure}

To achieve this generality, we define two different helper relations
$\lrm{\cdot}_x$ (for \textit{maybe}) and $\lrf{\cdot, \cdot, \cdot}_x$ (for \textit{failure}).
The former describes computations that \emph{may} fail depending on their usage of $x$,
while the latter describes closures that \emph{definitely} fail due to their usage of $x$.
On function types returning $\ftype$ types,
this latter relation guarantees that evaluation of the closed-over function body actually fails,
while on function types returning other functions it instead guarantees that evaluation of the
function body will either fail or produce another closure that is also in
$\lrf{\cdot, \cdot, \cdot}_x$.

Figure \ref{fig:bot-semtyping-strict} defines strict semantic typing and formalizes the intuition for the logical relation.
An environment can either be $\vDashlighting_x$ (well typed, but missing a strictly-used value at $x$) or
$\vDashunknown_x$ (well typed, but missing a possibly-used value at $x$)
for a given $\coeffcolor{\gamma}$ and $\Gamma$.
A semantically well-typed value or computation will definitely fail
(or be in the failure relation, in the case of functions) if run with an
environment that is missing a strictly-used value at $x$,
and it will either fail or produce a well-typed result if evaluated
with an environment that is missing a possibly-used value at $x$.

\ifextended
The search-like character of this relation comes from the observation that
any $\rho$ such that $\coeffcolor{\gamma_1 + \gamma_2} \cdot \Gamma\vDashlighting_x \rho$
for some attribute vectors $\coeffcolor{\gamma_1}$ and $\coeffcolor{\gamma_2}$
is only known to be $\vDashlighting_x$ at one of them. Formally:

\begin{lemma}[Decomposition of $\vDashlighting$ and $\vDashunknown$]
  \label{lem:minesweeper} Given attribute vectors $\coeffcolor{\gamma_1}$ and $\coeffcolor{\gamma_2}$,
  context $\Gamma$, and environment $\rho$:
  \begin{enumerate}
    \item If $\coeffcolor{\gamma_1 + \gamma_2} \cdot \Gamma\vDashlighting_x \rho$ then either
      $\coeffcolor{\gamma_1} \cdot \Gamma\vDashlighting_x \rho$, or
      $\coeffcolor{\gamma_1} \cdot \Gamma\vDashunknown_x \rho$ and $\coeffcolor{\gamma_2} \cdot \Gamma\vDashlighting_x \rho$,
    \item If $\coeffcolor{\gamma_1 + \gamma_2} \cdot \Gamma\vDashunknown_x \rho$ then
      $\coeffcolor{\gamma_1} \cdot \Gamma\vDashunknown_x \rho$ and
      $\coeffcolor{\gamma_2} \cdot \Gamma\vDashunknown_x \rho$.
  \end{enumerate}
\end{lemma}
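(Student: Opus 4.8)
The plan is to prove both parts by straightforward unfolding of the definitions of $\vDashlighting_x$ and $\vDashunknown_x$ from Figure~\ref{fig:bot-semtyping-strict}, noting that the two relations differ only in their first conjunct: $\vDashlighting_x$ demands $\coeffcolor{x : \qstrict \in \gamma}$ while $\vDashunknown_x$ demands $\coeffcolor{x : \qstrict \notin \gamma}$. Their remaining two conjuncts---namely $x \notin \rho$, and $\rho(y) \in \lrv{\Gamma(y)}_x$ for every $y \neq x$---speak only of $\rho$, $\Gamma$, and $x$, never of the attribute vector, so they are shared verbatim by all four semantic-typing relations at any choice of vector and carry over for free. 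The whole argument therefore reduces to tracking where a $\qstrict$ attribute for $x$ can originate.

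The one algebraic fact I would isolate first, read off Table~\ref{table:plus} together with the pointwise lifting of $\coeffcolor{+}$ to vectors, is that $\coeffcolor{(\gamma_1 + \gamma_2)}$ assigns $x$ the attribute $\qstrict$ exactly when $\coeffcolor{\gamma_1}$ does or $\coeffcolor{\gamma_2}$ does; dually, $\coeffcolor{x : \qstrict \notin \gamma_1 + \gamma_2}$ holds iff $\coeffcolor{x : \qstrict \notin \gamma_1}$ and $\coeffcolor{x : \qstrict \notin \gamma_2}$ (recalling that omission of $x$ from a vector means $x$ maps to $\qlazy$). Part~(2) then follows immediately: from $\coeffcolor{\gamma_1 + \gamma_2} \cdot \Gamma \vDashunknown_x \rho$ we extract $\coeffcolor{x : \qstrict \notin \gamma_1 + \gamma_2}$, hence $\coeffcolor{x : \qstrict \notin \gamma_i}$ for $i \in \{1,2\}$, and pairing each with the shared conjuncts yields $\coeffcolor{\gamma_i} \cdot \Gamma \vDashunknown_x \rho$.

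For Part~(1), from $\coeffcolor{\gamma_1 + \gamma_2} \cdot \Gamma \vDashlighting_x \rho$ I would extract $\coeffcolor{x : \qstrict \in \gamma_1 + \gamma_2}$ together with the shared conjuncts, and then case-split on whether $\coeffcolor{x : \qstrict \in \gamma_1}$. In the positive case the shared conjuncts directly give $\coeffcolor{\gamma_1} \cdot \Gamma \vDashlighting_x \rho$, the first disjunct. In the negative case the algebraic fact forces $\coeffcolor{x : \qstrict \in \gamma_2}$, whence the shared conjuncts give both $\coeffcolor{\gamma_1} \cdot \Gamma \vDashunknown_x \rho$ (using $\coeffcolor{x : \qstrict \notin \gamma_1}$) and $\coeffcolor{\gamma_2} \cdot \Gamma \vDashlighting_x \rho$ (using $\coeffcolor{x : \qstrict \in \gamma_2}$), the second disjunct. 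There is no genuine obstacle here; the only point needing a moment's care is the bookkeeping around the convention that a $\coeffcolor{\gamma}$ may omit its lazily-used variables, so that ``$\coeffcolor{x : \qstrict \notin \gamma}$'' must be read as ``$\coeffcolor{\gamma}$ assigns $x$ either $\qlazy$ or $\qunknown$.'' In the Rocq development, where vectors are total functions over a de Bruijn scope, even that subtlety vanishes and the lemma becomes a direct computation on the finite addition table.
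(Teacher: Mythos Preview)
Your proposal is correct and matches the paper's own proof, which is simply ``by case analysis on $\coeffcolor{\gamma_1(x)}$ and $\coeffcolor{\gamma_2(x)}$.'' You have spelled out exactly that case analysis, together with the supporting observation that the non-attribute conjuncts of $\vDashlighting_x$ and $\vDashunknown_x$ are shared and the key algebraic fact about $\coeffcolor{+}$ from Table~\ref{table:plus}; there is nothing to correct.
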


\begin{proof}
  By case analysis on $\coeffcolor{\gamma_1(x)}$ and $\coeffcolor{\gamma_2(x)}$.
\end{proof}

Using this lemma, we can decompose typing rules
that are checked using the sum of two attribute vectors.
For example, the $\rref{T-Let}$ rule says
that $\coeffcolor{\gamma_1 + \gamma_2}$ can be used to check
$x \leftarrow M_1 \textbf{ in } M_2$ if
$\coeffcolor{\gamma_1}$ can check $M_1$ and
$\coeffcolor{\gamma_2}$ can check $M_2$.
If we assume that $\coeffcolor{\gamma_1 + \gamma_2}$ describes a strict use of a missing value,
Lemma \ref{lem:minesweeper} tells us that one of the two subexpressions of this let binding
will be responsible for this strict usage.
In this way we can search the evaluation derivation for the location of the actual
strict usage that fails,
and the resulting failure will cascade all the way through the tree.
\fi

As in the lazy case, the fundamental lemma derives the main theorem as a corollary.

\begin{lemma}[Strict Fundamental Lemma]
  \label{lem:fundamental-lemma-strict}
  For all $\coeffcolor{\gamma} $ and $\Gamma$,
  if $\coeffcolor{\gamma}  \cdot \Gamma \vdash V : A$
  then $\coeffcolor{\gamma}  \cdot \Gamma \vDashstrict V : A$,
  and if $\coeffcolor{\gamma}  \cdot \Gamma \vdash M : B$
  then $\coeffcolor{\gamma}  \cdot \Gamma \vDashstrict M : B$.
\end{lemma}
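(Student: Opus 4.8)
The plan is to proceed by mutual induction on the typing derivations for values and computations, in the same spirit as Lemma~\ref{lem:fundamental-lemma-bot} but now threading a single distinguished variable $x$ — the one that will be absent from the environment — through the argument in place of an observer vector. For each typing rule we unfold the relevant clause of $\vDashstrict$ from Figure~\ref{fig:bot-semtyping-strict}, fix an environment $\rho$ with either $\coeffcolor{\gamma} \cdot \Gamma \vDashlighting_x \rho$ or $\coeffcolor{\gamma} \cdot \Gamma \vDashunknown_x \rho$, and must then either produce a terminal lying in the appropriate relation of Figure~\ref{fig:lrstrict} or show $\rho \vdash V \fail$ (resp.\ $\rho \vdash M \fail$). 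Establishing such a failure is usually an immediate inversion — every evaluation rule that consumes a subterm must first evaluate it, so a failure of any immediate subterm cascades outward — though a handful of cases, notably when a failure sits inside the body of a closure reached by evaluating a subterm, additionally appeal to the determinism-modulo-attributes Lemma~\ref{lem:deterministic} to see that any evaluation of the enclosing term must route through that same failing body (possibly with different attributes).

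The base case is \rref{T-Var}, where failure originates: if the variable being looked up is $x$ itself, the $\vDashlighting_x$ hypothesis forces its attribute to $\qstrict$ and $x \notin \rho$, so \rref{E-Var} cannot fire on any $\coeffequiv$-equivalent environment and $\rho \vdash x \fail$; otherwise the variable is bound in $\rho$ and the environment hypothesis hands us a value in $\lrv{A}_x$. For the rules whose attribute vector is a sum $\coeffcolor{\gamma_1 + \gamma_2}$ — \rref{T-Let}, \rref{T-App}, \rref{T-Force}, \rref{T-Seq}, \rref{T-Split}, \rref{T-Case}, \rref{T-Pair} — the engine of the proof is the decomposition Lemma~\ref{lem:minesweeper}: from $\coeffcolor{\gamma_1 + \gamma_2} \cdot \Gamma \vDashlighting_x \rho$ it identifies one subterm responsible for the strict use of the missing value (with the other subterm only $\vDashunknown_x$), so we invoke the induction hypothesis there, get a failure, and cascade it; in the $\vDashunknown_x$ case both subterms are handled by their $\vDashunknown_x$ clauses and, if neither fails, the results are recombined into a terminal of the expected relation. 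The \rref{T-Thunk}/\rref{T-Force} pair bridges to the helper relation $\lrm{\cdot}_x$: \rref{T-Thunk} reads membership in $\lrm{B}_x$ off the induction hypothesis for the thunked computation (taking the result vector of \rref{E-Thunk} to agree with the type's vector so the scope-overlap side condition is trivial), while \rref{T-Force} consumes that membership alongside Lemma~\ref{lem:minesweeper}. The \rref{T-Abs} and \rref{T-App} cases exploit the ``pass-through'' of function attributes in \cbpvlangname: the closure built by \rref{E-Abs} must land in $\lrf{\cdot}_x$ whenever the ambient vector is strict at $x$, which we establish by a secondary induction on the shape of the return type, peeling off one arrow at a time until an $\ftype$ type is reached and the body genuinely fails.

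I expect the principal obstacle to be the scope bookkeeping forced by the $\coeffcolor{\downshift{\cdot}}$ operators and, in \rref{T-Force}, by the mismatch between the vector $\coeffcolor{\gamma_2}$ recorded on a $\utype{}$ type and the vector $\coeffcolor{\gamma_2'}$ actually captured inside the corresponding thunk value. Closing \rref{T-Force} means reconciling these through the overlap condition $\coeffcolor{\gamma_2 \mid_{\dom{\gamma_2'}} = \gamma_2' \mid_{\dom{\gamma_2}}}$ built into $\lrv{\utype{\gamma_2} B}_x$, and \rref{T-Abs}, \rref{T-Let}, \rref{T-Split}, and \rref{T-Case} additionally need auxiliary lemmas stating that $\coeffcolor{\downshift{\cdot}}$ commutes with the logical relations and with the $\vDashlighting_x$ and $\vDashunknown_x$ judgments provided the dropped variable differs from the index $x$ — precisely the delicate variable manipulation flagged in the introduction, and the part that is substantially heavier in the de Bruijn mechanization. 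A secondary wrinkle is \rref{T-Sub}: since $\coeffcolor{\gamma \leq \gamma'}$ does not propagate ``not strict at $x$'' downward (subsumption can turn $\qunknown$ into $\qstrict$), the $\vDashunknown_x$ case for $\textbf{sub } M$ must split on whether $\coeffcolor{\gamma'}$ is strict at $x$, appealing to either the $\vDashlighting_x$ or the $\vDashunknown_x$ clause of the induction hypothesis — both of which yield an acceptable conclusion once \rref{E-Sub} is replayed.
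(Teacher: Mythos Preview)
Your proposal is correct and follows the same route as the paper: induction on the typing derivation, with Lemma~\ref{lem:minesweeper} driving the sum-vector cases and Lemma~\ref{lem:deterministic} used to align alternative evaluation attempts with the one witnessing failure. Your additional detail---the case split in \rref{T-Sub}, the inclusion $\lrf{\cdot}_x \subseteq \lrc{\cdot}_x$ by induction on the return type, and the scope-commuting lemmas for $\coeffcolor{\downshift{\cdot}}$---accurately identifies the auxiliary facts that the paper's terse proof sketch leaves implicit but that the mechanization must discharge.
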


\begin{proof}
  \ifextended
  By induction on the typing derivation,
  making use of Lemma \ref{lem:deterministic} and Lemma \ref{lem:minesweeper}.
  \else
  By induction on the typing derivation.
  \fi
\end{proof}

\begin{theorem}[Strict Failure]
  \label{thm:bot-explosion}
  For all $x, \coeffcolor{\gamma} $, $\Gamma$, and $\rho$ such that $\coeffcolor{\gamma} \vDashlighting_x \rho$,
  \begin{enumerate}
    \item if $\coeffcolor{\gamma}  \cdot \Gamma \vdash V : A$ then
      $\rho \vdash V \fail$;
    \item if $\coeffcolor{\gamma}  \cdot \Gamma \vdash M : \ftype \text{ } A$ then
      $\rho \vdash M \fail$.
  \end{enumerate}
\end{theorem}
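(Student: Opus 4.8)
The plan is to obtain this theorem as an immediate corollary of the Strict Fundamental Lemma (Lemma \ref{lem:fundamental-lemma-strict}), exactly as Lazy Soundness (Theorem \ref{thm:soundness-bot}) follows from Lemma \ref{lem:fundamental-lemma-bot}. First I would apply Lemma \ref{lem:fundamental-lemma-strict} to the given typing derivation: from $\coeffcolor{\gamma} \cdot \Gamma \vdash V : A$ it yields $\coeffcolor{\gamma} \cdot \Gamma \vDashstrict V : A$, and from $\coeffcolor{\gamma} \cdot \Gamma \vdash M : \ftype\;A$ it yields $\coeffcolor{\gamma} \cdot \Gamma \vDashstrict M : \ftype\;A$.

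Next I would unfold the definition of strict semantic typing from Figure \ref{fig:bot-semtyping-strict}, reading the theorem's hypothesis ``$\coeffcolor{\gamma} \vDashlighting_x \rho$'' as $\coeffcolor{\gamma} \cdot \Gamma \vDashlighting_x \rho$ for the ambient $\Gamma$, and instantiating the implicitly-quantified environment at this $\rho$. For values, the definition of $\coeffcolor{\gamma} \cdot \Gamma \vDashstrict V : A$ has as its first conjunct ``$\coeffcolor{\gamma} \cdot \Gamma \vDashlighting_x \rho \implies \rho \vdash V \fail$'', so the hypothesis directly gives $\rho \vdash V \fail$, discharging part (1) at every value type $A$. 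For computations, the $\vDashstrict$ clause at a returner type $\ftype\;A$ similarly has the form ``$\coeffcolor{\gamma} \cdot \Gamma \vDashlighting_x \rho \implies \rho \vdash M \fail$'', so the same instantiation establishes part (2); the $\vDashunknown_x$ disjunct plays no role here since our hypothesis is $\vDashlighting_x$.

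It is worth flagging why part (2) is confined to returner types. The $\vDashstrict$ clause for a \emph{function}-typed computation does not promise failure, only that the resulting closure lands in $\lrf{\cdot,\cdot,\cdot}_x$; this is unavoidable because attributes ``pass through'' lambda abstractions in \cbpvlangname, so e.g.\ $\lambda x.\;\textbf{ret }y$ typechecks with a strict attribute on $y$ yet evaluates to a closure rather than failing, and only when such a function is thunked and later forced does the expected failure return. Hence the logical relation of Figure \ref{fig:lrstrict} and the semantic-typing relation of Figure \ref{fig:bot-semtyping-strict} are stated in full generality (with the auxiliary relations $\lrm{\cdot}_x$ and $\lrf{\cdot,\cdot,\cdot}_x$) for the sake of the induction in Lemma \ref{lem:fundamental-lemma-strict}, while the clean top-level statement lives at $\ftype$ types.

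The hard part has already been done: at the level of this theorem there is essentially no obstacle, since all the real work is in Lemma \ref{lem:fundamental-lemma-strict}, which we are entitled to assume (and which in turn leans on the Determinism lemma and the decomposition lemma for $\vDashlighting$ and $\vDashunknown$). What remains is only bookkeeping: checking that the context $\Gamma$ carried through the fundamental lemma coincides with the one in the hypothesis, and instantiating the implicit quantifiers over $\rho$ and the site variable $x$ in the definitions of $\vDashstrict$, $\vDashlighting_x$, and $\vDashunknown_x$ consistently. These are precisely the scoping details that, as the authors remark, the Rocq development tracks explicitly.
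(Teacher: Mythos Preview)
Your proposal is correct and matches the paper's own proof, which simply states that the theorem ``follows directly from the Fundamental Lemma~\ref{lem:fundamental-lemma-strict}.'' Your unfolding of the $\vDashstrict$ definitions and your remark about why part~(2) is restricted to $\ftype$ types are accurate elaborations of exactly that one-line argument.
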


\begin{proof}
  Follows directly from the Fundamental Lemma \ref{lem:fundamental-lemma-strict}.
\end{proof}

This gives us the opposite guarantee from Theorem \ref{thm:soundness-bot}, telling us
that any well-typed program will fail when run in an environment where one of its strictly-used
variables is missing a value.
We can also use this theorem to argue that arguments to strict functions can be evaluated eagerly.
Informally, if some strict function is called with a thunk argument
and produces a result, Theorem \ref{thm:bot-explosion}
tells us by contraposition that the function necessarily uses the value of its argument---the value cannot be missing.
Hence, forcing the thunked argument must produce a result, and
forcing this argument before applying the function would
correspond to a strict function call.

\subsection{A More Precise Characterization of Strictness}

Having proved that \cbpvlangname's type-theoretic characterization
of intensional strictness also captures the traditional, extensional notion,
we can move on to consider the {\em new} insight the intensional
definition affords us:
the ability to distinguish between functions
that fail when ``called on bottom'' because their argument is used strictly and those
that fail independently of their argument.

The traditional abstract interpretation \cite{mycroft-theory-1980, wadler-strictness-1987}
and projection-based definitions \cite{wadler-projections-1987} of extensional strictness
assert that a function $f$ is strict if $f \bot=\bot$,
where $\bot$ denotes the result of a non-terminating or aborting computation.
Usually no distinction is made between the possible ways $\bot$ can be produced;
the definition says only that
a strict function fails whenever its argument fails.

The properties described by Theorem \ref{thm:soundness-bot} and Theorem \ref{thm:bot-explosion}
are clearly related to this definition.
As no $\bot$ value can ever actually exist,
we can represent it with an environment that is missing a binding for the variable
to which the traditional approach would ascribe a $\bot$ value.
Theorem \ref{thm:bot-explosion} then gives us a definition of strictness that implies $f \bot=\bot$:
a strict function whose argument is missing its value will fail to evaluate.
Likewise, Theorem \ref{thm:soundness-bot} gives us a definition of
laziness that implies that lazy functions do not scrutinize their arguments:
a lazy function that is given an argument whose value is missing will still evaluate.

However, intensional strictness as described by
\cbpvlangname's type system has some extra precision compared to extensional strictness.
Consider a hypothetical program
$
y \leftarrow \textbf{ret } \{\dots \textit{error} \dots\} \textbf{ in }
\lambda x. y!
$.
The function produced by this program---call it $f$---would be
considered strict, extensionally,
even though it does not use its argument at all.
This characterization is useful to an optimizing compiler performing strictness
analysis (i.e., it would allow such a compiler to evaluate this function's argument eagerly),
but it fails to describe how $f$ actually uses its argument.

Unlike the extensional definition of strictness,
\cbpvlangname does distinguish between functions that fail due to strict use of their arguments and
those that fail for other reasons.
\cbpvlangname would describe $f$ as being lazy in
its argument, rather than strict, and also tell us that $f$ uses $y$ strictly.
In a situation where a call to $f$ fails,
this additional nuance allows us to reason about \textit{why} it has failed:
\cbpvlangname tells us that this failure cannot be the result of
the function's use of its argument (due to Theorem \ref{thm:soundness-bot})
and instead tells us that it must be the result of its use of $y$ (due to Theorem \ref{thm:bot-explosion}).
This is useful information for programmers to have;
if, say, a call to a lazy function were to throw an exception,
a programmer could be certain that the source of
that exception was in the function body,
not in the argument to the call.

This increased precision regarding sources of failure also
makes intensional strictness better suited to reasoning about
impure languages where exceptions are not encapsulated in monads as in Haskell.
In such settings, the extensional definition fails to provide a satisfying
characterization of strictness because it conflates
aborting and non-terminating programs;
it would thus falsely identify a call to $f$
as an opportunity for eager evaluation, when such an optimization
may in fact change observable behavior if the argument does not terminate.

We believe
we can also apply these insights to the call-by-need
\cite{wadsworth-semantics-1971, launchbury-natural-1993} evaluation strategy used by languages like Haskell,
as call-by-need is equivalent to call-by-name \cite{mycroft-theory-1980}:
the former is a more efficient implementation of the latter that
uses values with the same strictness.
We discuss how we might formalize this connection in Section \ref{sec:future}.

\section{Tracking Unused Variables} \label{sec:absence}

\newcommand{\coeffsof}[1]{\coeffcolor{\mathbb{E}(#1)}}
\newcommand{\vdashwf}{\vdash_{\text{WF}}}
\newcommand{\qabsent}{\coeffcolor{\mathrm{U}}}
\newcommand{\lazify}[1]{\coeffcolor{\mathbb{L}(#1)}}
\newcommand{\absentvec}{\coeffoverline{\qabsent}}

\begin{wrapfigure}{r}{0.28\textwidth}
  \centering
  \ifextended
  \vspace*{-1\baselineskip}
  \else
  \vspace*{-2\baselineskip}
  \fi
  \begin{tikzpicture}

    \node at (0,0)    (U)  {$\qunknown$};
    \node at (0.75,0.75)    (S)  {$\qstrict$};
    \node at (-.75,0.75)   (L)  {$\qlazy$};
    \node at (-.75,1.5)   (A)  {$\qabsent$};

    \draw (U)   -- (L);
    \draw (A)   -- (L);
    \draw (U)   -- (S);
  \end{tikzpicture}
  \vspace*{-0.5\baselineskip}
  \caption{Extended semilattice}
  \vspace*{-0.5\baselineskip}
  \label{fig:lattice-absence}
\end{wrapfigure}

The $\qlazy$ attribute described so far is not as precise as it could be.
Unlike the $\qstrict$ attribute, which guarantees that variables
are definitely used strictly,
the $\qlazy$ attribute describes both variables
that are used lazily and those that are not used at all.
It would be more accurate to describe the $\qlazy$ attribute
as asserting that a variable is definitely \textit{not} used strictly, as opposed to
definitely used lazily.

In this section, we show how to add additional precision to the system by
extending \cbpvlangname and \cbnlangname
with a new attribute $\qabsent$ to track variables that are known to be \textit{unused}.
This fourth attribute is related to the notion of \textit{absence}
described by \textit{absence analysis} \cite{sergey-theory-2014}.
It extends the existing set to produce a new semilattice,
depicted in Figure \ref{fig:lattice-absence}.
The $\qlazy$, $\qstrict$, and $\qunknown$ attributes have the same meaning as before,
and the intuition for $\coeffcolor{\leq}$ remains the same as well:
$\qabsent$ tells us strictly more information about a variable's usage
(or lack thereof) than $\qlazy$.

\subsection{Extending \cbpvlangname}

We present the extension of \cbpvlangname first, as it is simple.
$\qabsent$, rather than $\qlazy$, becomes the new identity of the $\coeffcolor{+}$ operator,
while the result of adding any two of the previously existing attributes remains the same
as in Table \ref{table:plus}.
We introduce the shorthand $\absentvec$ for the vector mapping all variables to $\qabsent$.

We do, however, need to introduce a new operation on $\coeffsymbol$s and $\coeffcolor{\gamma}$s.
When suspending a computation in the $\rref{T-Thunk}$ rule of the
original \cbpvlangname, it was sufficient
to simply use the lazy vector $\lazyvec$ for the required attributes,
as no variables are used strictly by a thunked computation.
If we wish to be more precise, however, it is inaccurate to say that a
thunk that does not use a variable $x$ in its body is lazy in $x$;
we must instead check the thunk construct with a vector that both is lazy
with respect to the thunk body and does not introduce any new usages.

To achieve this, we introduce a new operation $\lazify{\cdot}$ (read ``lazify'')
that makes a usage lazy. $\lazify{\coeffsymbol} = \qlazy$ for all $\coeffsymbol$ except $\qabsent$; instead,
$\lazify{\qabsent} = \qabsent$ because suspending an unused variable does not introduce a usage of that variable.
This new operation lifts pointwise to attribute vectors.
We use this new operation to define new rules for thunks,
presented \ifextended along with a handful of other minor rule changes \fi in Figure \ref{fig:cbpv-absent}.
All the other rules remain unchanged\ifextended\else, save for replacing $\lazyvec$ with $\absentvec$ in the rules for variables and $()$\fi.

This extension to \cbpvlangname enjoys all the
metatheoretic properties proven previously,
without significant changes to any proofs or definitions.
We therefore elide these proofs for brevity,
noting only that Theorem \ref{thm:soundness-bot} also applies
to variables with a $\qabsent$ attribute in addition to those with an $\qlazy$.

The simplicity of this extension
is a compelling demonstration of the benefits of reasoning about variable usage and strictness
in CBPV, rather than in CBN directly.

\begin{figure}
  \ifextended
  $$
  \infer[\rlabel*{T-Var-Ext}]{\absentvec \cdot \Gamma_1, x :^{\qstrict} A, \absentvec \cdot \Gamma_2 \vdash x : A}{} \qquad
  \infer[\rlabel*{T-Unit-Ext}]{\absentvec \cdot \Gamma\vdash () : \texttt{unit}}{}
  \vspace{2.75ex}
  $$
  \fi
  $$
  \infer[\rlabel*{T-Thunk-Ext}]{\lazify{\gamma} \cdot \Gamma \vdash \{ M \} : \utype{\gamma} B}
  {\coeffcolor{\gamma} \cdot \Gamma \vdash M : B}
  \ifextended \vspace{2.75ex} $$ \else \qquad \fi
  \ifextended
  $$
  \infer[\rlabel*{E-Var-Ext}]
  {\absentvec \cdot \rho_1, x :^{\qstrict} W, \absentvec \cdot \rho_2 \vdash x \Downarrow W}
  {}  \qquad
  \infer[\rlabel*{E-Unit-Ext}]{\absentvec \cdot \Gamma\vdash () \Downarrow ()}{}
  \vspace{2.75ex}
  $$
  $$
  \fi
  \infer[\rlabel*{E-Thunk-Ext}]
  {\lazify{\gamma} \cdot \rho \vdash \{ M \} \Downarrow \{ \coeffcolor{\gamma}, \rho , M \}}
  {}
  $$
  \ifextended \vspace*{-0.5\baselineskip} \else \vspace*{-1\baselineskip} \fi
  \ifextended \caption{Selected static and semantic rules for \cbpvlangname with unused variable tracking} \else
  \caption{Static and semantic rules for thunks in \cbpvlangname with unused variable tracking} \fi
  \ifextended \vspace*{-1\baselineskip} \else \vspace*{-1\baselineskip} \fi
  \label{fig:cbpv-absent}
\end{figure}

\subsection{Extending \cbnlangname}

\newcommand{\wf}{\textsc{wf }}

\newcommand{\cbnwfprop}{
  \begin{multicols}{2}
    $$
    \infer
    {\coeffcolor{\gamma} \vdashwf \tau}
    {\wf \tau & \coeffcolor{\lazify{\gamma} = \lazify{\gamma + \coeffsof{\tau}}}}
    \vspace{2.75ex}
    $$
    $$
    \infer
    {\wf \Gamma}
    {\forall x : \tau^{\coeffcolor{\gamma}} \in \Gamma, \coeffcolor{\gamma} \vdashwf \tau}
    $$
    \begin{align*}
      \coeffsof{\texttt{unit}} &= \absentvec \\
      \coeffsof{\tau_1^{\gamma_1} \times \tau_2^{\gamma_2}} &=
      \coeffcolor{\gamma_1 + \gamma_2 + \coeffsof{\tau_1} + \coeffsof{\tau_2}} \\
      \coeffsof{\tau_1^{\gamma_1} + \tau_2^{\gamma_2}} &=
      \coeffcolor{\gamma_1 + \gamma_2 + \coeffsof{\tau_1} + \coeffsof{\tau_2}} \\
      \coeffsof{(x:^{\coeffsymbol} \tau_1^{\gamma_1}) \xrightarrow{\gamma_2} \tau_2} &=
      \coeffcolor{\gamma_2 + \coeffsof{\tau_1} + \downshift{x} \coeffsof{\tau_2}}
    \end{align*}
  \end{multicols}
  \vspace*{-\baselineskip}
  $$
  \infer
  {\wf \texttt{unit}}
  {} \qquad
  \infer
  {\wf \tau_1^{\coeffcolor{\gamma_1}} \times\tau_2^{\coeffcolor{\gamma_2}}}
  {\coeffcolor{\gamma_1} \vdashwf \tau_1 & \coeffcolor{\gamma_2} \vdashwf \tau_2} \qquad
  \infer
  {\wf \tau_1^{\coeffcolor{\gamma_1}} +\tau_2^{\coeffcolor{\gamma_2}}}
  {\coeffcolor{\gamma_1} \vdashwf \tau_1 & \coeffcolor{\gamma_2} \vdashwf \tau_2} \qquad
  \infer
  {\wf (x:^{\coeffsymbol} \tau_1^{\coeffcolor{\gamma_1}}) \xrightarrow{\coeffcolor{\gamma_2}} \tau_2}
  {\coeffcolor{\gamma_1} \vdashwf \tau_1 & \coeffcolor{\gamma_2} \vdashwf \downshift{x} \tau_2}
  $$
}
Tracking unused variables in \cbnlangname requires significantly more effort
due to implicit thunking.
\ifextended
\begin{figure}
  \cbnwfprop
  \caption{Well-formedness requirements for \cbnlangname types}
  \vspace*{-0.5\baselineskip}
  \label{fig:cbn-wf}
\end{figure}
\fi
In particular, types in the extended \cbnlangname must
satisfy a certain well-formedness condition:
a type cannot claim that it uses more variables than those
used by the derivation that produces it.

Consider the term $(x, y)$.
Assuming $x$ and $y$ both have the type $\texttt{unit}$, we can
assign this term type
$\texttt{unit}^{\coeffcolor{x:\qstrict, y:\qabsent}} \times
\texttt{unit}^{\coeffcolor{x:\qabsent, y:\qstrict}}$,
producing effect $\coeffcolor{x : \qlazy, y : \qlazy}$.
This makes intuitive sense:
despite the fact that each side of the pair only uses one of the two
variables, the pair as a whole uses both $x$ and $y$.
It would be impossible, on the other hand,
to produce this type while not using either of the two variables;
to create either side of this pair type, we would need an expression
that uses one of the two strictly.
In general, typing derivations that look like
$\Gamma \vdash e :^{\coeffcolor{x:\qabsent, \dots}}
\tau_1^{\coeffcolor{x:\qlazy, \dots}} \times\tau_2^{\coeffcolor{\gamma}}$
are not possible: there is no way to produce a type that uses $x$
without using $x$ in its derivation.

\newcommand{\cbnexttyping}{
  $$
  \infer[\rlabel*{T-CBN-Var-Ext}]
  {\Gamma \cbnvdash x :^{\coeffcolor{\gamma, x:\qstrict}} \tau}
  {x : \tau^{\coeffcolor{\gamma}} \in \Gamma} \qquad
  \infer[\rlabel*{T-CBN-Abs-Ext}]
  {\Gamma \cbnvdash \lambda x. e :^{\lazify{\gamma_1 + \gamma_2}}
  (x :^{\coeffsymbol} \tau_1^{\coeffcolor{\gamma_1}}) \xrightarrow{\coeffcolor{\gamma_2}} \tau_2}
  {\deduce{\Gamma, x :^{\coeffcolor{\gamma_1}} \tau_1 \cbnvdash e :^{\coeffcolor{\gamma_2, x : \coeffsymbol}} \tau_2}{
      \deduce{\coeffcolor{\gamma_1} \vdashwf \tau_1}{
        \deduce{}{
        \coeffcolor{\lazify{\downshift{x} \coeffsof{\tau_2}} = \lazify{\gamma_1 + \downshift{x} \coeffsof{\tau_2}} }}
  }}}
  \vspace{2.75ex}
  $$
  $$
  \infer[\rlabel*{T-CBN-Inl-Ext}]
  {\Gamma \cbnvdash \textbf{inl } e :^{\lazify{\gamma_1 + \gamma_2 + \coeffsof{\tau_2}}} \tau_1^{\coeffcolor{\gamma_1}} + \tau_2^{\coeffcolor{\gamma_2}}}
  {\Gamma \cbnvdash e :^{\coeffcolor{\gamma_1}} \tau_1 &
    \coeffcolor{\lazify{\gamma_1} = \lazify{\gamma_2}} &
    \coeffcolor{\gamma_2} \vdashwf \tau_2
  }
  \vspace{2.75ex}
  $$
  $$
  \infer[\rlabel*{T-CBN-Inr-Ext}]
  {\Gamma \cbnvdash \textbf{inr } e :^{\lazify{\gamma_1 + \gamma_2 + \coeffsof{\tau_1}}} \tau_1^{\coeffcolor{\gamma_1}} + \tau_2^{\coeffcolor{\gamma_2}}}
  {\Gamma \cbnvdash e :^{\coeffcolor{\gamma_2}} \tau_2 &
    \coeffcolor{\lazify{\gamma_1} = \lazify{\gamma_2}} &
    \coeffcolor{\gamma_1} \vdashwf \tau_1
  }
  \vspace{2.75ex}
  $$
  $$
  \infer[\rlabel*{T-CBN-App-Ext}]
  {\Gamma \cbnvdash e_1\;e_2 :^{\coeffcolor{\gamma_1 + \gamma_3 + \lazify{\gamma_2}}}  \downshift{x} \tau_2}
  {\Gamma \cbnvdash e_1 :^{\coeffcolor{\gamma_1}}
    (x :^{\coeffsymbol} \tau_1^{\coeffcolor{\gamma_2}}) \xrightarrow{\coeffcolor{\gamma_3}} \tau_2  &
    \Gamma \cbnvdash e_2 :^{\coeffcolor{\gamma_2}} \tau_1
  }
  \vspace{2.75ex}
  $$
  $$
  \infer[\rlabel*{T-CBN-Let-Ext}]
  {\Gamma \cbnvdash \textbf{let } x = e_1 \textbf{ in } e_2 :^{\coeffcolor{\lazify{\gamma_1} + (\downshift{x} \gamma_2)}} \downshift{x} \tau_2}
  {\Gamma \cbnvdash e_1 :^{\coeffcolor{\gamma_1}} \tau_1 &
  \Gamma, x :^{\coeffcolor{\gamma_1}} \tau_1 \cbnvdash e_2 :^{\coeffcolor{\gamma_2}} \tau_2}
  $$
}

\ifextended
We can express this property formally by defining a function on types $\coeffsof{\cdot}$ (read "effects of"),
that collects all the latent effects on a type that appear in positive positions
(i.e., it does not include the latent effects on function arguments), and adds them together.
We then have the following definition of a \textit{valid typing derivation}:

\begin{definition}[Valid Typing Derivation] \label{def:valid}
  A \cbnlangname typing derivation $\Gamma \cbnvdash e :^{\coeffcolor{\gamma}} \tau$
  is \textit{valid} if $\coeffcolor{\gamma} \vdashwf \tau$.
\end{definition}

This makes formal the intuitive requirement we outlined earlier: a type cannot
declare that it uses more variables than the ones produced by derivation used to check it.
We can use Definition \ref{def:valid} to understand the definition of $\vdashwf$ in Figure \ref{fig:cbn-wf}:
well-formed types are those that can be produced by valid derivations.

\begin{figure}
  \cbnexttyping
  \caption{Main typing rules for extended \cbnlangname}
  \vspace*{-0.5\baselineskip}
  \label{fig:cbn-typing-extended}
\end{figure}
\fi

Types produced by the \cbnlangname typing rules should have this property,
but unfortunately there are a handful of rules where
types are chosen without an accompanying subderivation,
such as the rules for $\textbf{inl}$ or  $\textbf{inr}$.
In these cases, the rules must enforce that any such types are well formed and
satisfy particular usage conditions to
ensure that all derivations
permitted by the \cbnlangname typing judgment \ifextended are valid\else contain well-formed types\fi.
\ifextended

The extended rules, a selection of which can be found in Figure \ref{fig:cbn-typing-extended},
allow us to prove the following lemma,
which is necessary for the correctness of the extended \cbnlangname translation:

\begin{lemma}[Well-Formed Derivations are Valid]
  \label{lem:valid}
  If $\Gamma \cbnvdash e :^{\coeffcolor{\gamma}} \tau$ and $\wf \Gamma$,
  then $\coeffcolor{\gamma} \vdashwf \tau$.
\end{lemma}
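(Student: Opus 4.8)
The plan is to induct on the derivation of $\Gamma \cbnvdash e :^{\coeffcolor{\gamma}} \tau$, carrying $\wf \Gamma$ along as an invariant so that it is available for every subderivation, including those that descend under a binder. Recall that the conclusion $\coeffcolor{\gamma} \vdashwf \tau$ unfolds (Figure~\ref{fig:cbn-wf}) into a structural obligation $\wf \tau$ and an arithmetic one, $\coeffcolor{\lazify{\gamma} = \lazify{\gamma + \coeffsof{\tau}}}$, which records that $\coeffcolor{\gamma}$ already touches non-absently every variable that $\tau$ mentions non-absently in a positive position. Each case of the induction accordingly discharges these two obligations in turn.

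Before starting the induction I would collect a few algebraic facts about the operators involved: that $\coeffcolor{+}$ is idempotent on vectors; that $\lazify{\cdot}$ is idempotent and is a monoid homomorphism for $\coeffcolor{+}$ (so $\coeffcolor{\lazify{\gamma_1 + \gamma_2} = \lazify{\gamma_1} + \lazify{\gamma_2}}$), which in particular makes the equation $\coeffcolor{\lazify{\gamma_1} = \lazify{\gamma_1 + \gamma_2}}$ ``stable'' under adding an arbitrary $\coeffcolor{\delta}$ to both sides; that $\downshift{x}$ commutes with $\coeffsof{\cdot}$ away from $x$ and preserves both $\wf$ and the well-formedness equation; and that $\coeffcolor{\lazify{\gamma'} = \lazify{\gamma' + \coeffsof{\tau}}}$ survives replacing $\coeffcolor{\gamma'}$ by any $\coeffcolor{\gamma \leq \gamma'}$ in the information order, since nothing but $\qabsent$ sits $\coeffcolor{\leq}$-above $\qabsent$. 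With these lemmas each typing rule reduces to a small pointwise attribute calculation.

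The cases then organize into three families. First come the rules whose result type and all of its components are supplied by same-scope subderivations --- \rref{T-CBN-Pair}, \rref{T-CBN-Seq}, \rref{T-CBN-Case}, \rref{T-CBN-Sub}, and the analogous rules not displayed in Figure~\ref{fig:cbn-typing-extended} --- where $\wf \tau$ and the well-formedness equation follow directly from the induction hypotheses, using distributivity to absorb the $\coeffcolor{\gamma_1 + \gamma_2}$ of the conclusion and, for \rref{T-CBN-Sub}, the monotonicity fact. Second come the elimination forms that bind a local variable, \rref{T-CBN-Let-Ext}, \rref{T-CBN-App-Ext}, and \rref{T-CBN-Split}: here I first apply the IH to the scrutinee or function subderivation to learn that the bound variable, together with its stored vector, satisfies $\vdashwf$ (for the let rule this comes from the IH on $e_1$; for application and split it is read off from well-formedness of the arrow or product type returned by the IH), re-establish $\wf$ of the extended context, apply the IH to the body, and finally push the body's equation through the $\downshift{x}$ that each rule applies to its result type and effect. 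Third come the rules that fix part of a type with no corresponding subderivation, \rref{T-CBN-Var-Ext}, \rref{T-CBN-Inl-Ext}, \rref{T-CBN-Inr-Ext}, and \rref{T-CBN-Unit-Ext}; these are exactly where the side conditions of Figure~\ref{fig:cbn-typing-extended} earn their keep. For the variable rule the equation follows from $\wf \Gamma$ after noting that the extra $\qstrict$ usage of $x$ only strengthens the left-hand side; for the sum rules the explicit $\vdashwf$ premise on the ``absent'' summand together with the $\coeffcolor{\lazify{\gamma_1} = \lazify{\gamma_2}}$ condition are precisely what is needed; the unit rule is immediate.

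I expect the single most delicate case to be \rref{T-CBN-Abs-Ext}, which is a hybrid: its conclusion has output effect $\coeffcolor{\lazify{\gamma_1 + \gamma_2}}$ and the dependent arrow type $(x:^{\coeffsymbol}\tau_1^{\coeffcolor{\gamma_1}}) \xrightarrow{\coeffcolor{\gamma_2}} \tau_2$, whose $\coeffsof{\cdot}$ mixes $\coeffcolor{\gamma_2}$, $\coeffsof{\tau_1}$, and $\downshift{x}\coeffsof{\tau_2}$. Establishing its well-formedness equation means combining the IH on the body (which still speaks about a scope containing $x$), the explicit $\coeffcolor{\gamma_1} \vdashwf \tau_1$ premise, and the rule's side condition $\coeffcolor{\lazify{\downshift{x}\coeffsof{\tau_2}} = \lazify{\gamma_1 + \downshift{x}\coeffsof{\tau_2}}}$, all while keeping straight which equations live on the full scope and which hold only after $x$ is removed. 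In the de Bruijn mechanization every one of these scope manipulations must be spelled out with explicit shifts, so this is where most of the real work concentrates; once the algebraic lemmas above are available, the remaining cases are essentially calculation.
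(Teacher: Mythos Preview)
Your proposal is correct and follows the same approach as the paper, which simply states ``via induction on the typing derivation''; you have supplied the detailed case analysis and algebraic lemmas that the paper elides. The organization into three families of rules and the identification of \rref{T-CBN-Abs-Ext} as the delicate case are accurate elaborations of that one-line proof sketch.
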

\begin{proof}
  Via induction on the typing derivation.
\end{proof}
Beyond the well-formedness requirements for types,
the other major change to the typing rules in the extended \cbnlangname
is how the $\rref{T-CBN-Let-Ext}$ and $\rref{T-CBN-App-Ext}$ rules handle
the effects associated with the bound (or argument) expression.
The old versions of these rules just put these effects into the typing context, associated
with the variable to which their producing expression was being bound.
If we wish to distinguish between lazy use and non-use, however, these rules must
also account for variables that are used in bound expressions that are never evaluated.
For example, in the term $\textbf{let } x = y \textbf{ in } ()$,
the variable $y$ is used lazily.
Just adding the effect produced by the $y$ subexpression
to the context used to check $()$ would not capture this fact,
and so instead the extended rule
must also lazify the effect of checking $y$ and add that to the
overall effect of the let-binding. Function applications handle their arguments similarly.
\else
The formal definition of these well-formedness requirements,
along with a selection of the extended rules for \cbnlangname,
can be found in the extended version of this paper  \cite{extended-version}.
\fi

The translation between \cbnlangname and \cbpvlangname
needs only a small adjustment:
the type translation presented in Figure \ref{fig:cbn-trans} must lazify all the
intermediate effects on each type rather than returning $\lazyvec$ for any non-function types.
With this modification, the translation enjoys the same
correctness properties described in Section \ref{subsec:translation} for all
well-formed \cbnlangname derivations.

\ifpoly
\section{Coeffect Polymorphism} \label{sec:poly}

\cbpvlangname and \cbnlangname as described thus far have one major issue:
their metatheoretic properties behave as desired,
but actually programming with them is needlessly difficult.
To see why, consider the identity function $\lambda x.x$.
Despite its simple definition, attempting to produce a type for this function raises
an interesting question.
If $x$ has some type $\tau$, then this function's type should look like $\tau \rightarrow \tau$.
This type is incomplete, however: we need $\coeffcolor{\gamma}$s for both the function argument
and the arrow, as \cbnlangname function types look like
$(x:^{\coeffsymbol}:\tau^{\coeffcolor{\gamma_1}}) \xrightarrow{\coeffcolor{\gamma_2}} \tau$.
Clearly in this case $\coeffcolor{\gamma_1}$ and $\coeffcolor{\gamma_2}$ should be the same,
since the coeffects used by the function are the same as those used by its argument.
Unfortunately, any $\coeffcolor{\gamma}$ we choose here is extremely restrictive;
the function will only be applicable to arguments that use exactly that
vector of coeffects: $()$ would not be a valid argument to a function of type
$(x:^{\qstrict} : \texttt{unit}^{\coeffcolor{z : \qstrict}}) \xrightarrow{z : \qstrict} \texttt{unit}$,
for example, since $()$ is not strict in $z$.
This is a problem naturally solved with polymorphism,
and we adapt the approach to \textit{effect polymorphism} described by
Rytz et al. \cite{rytz-lightweight-2012} to instead operate over coeffects.

\begin{figure}
  \begin{gather*}
    \coeffcolor{\textit{coeffect literals } \coeffsymbol \text{ }} \coeffcolor{\defas \qstrict \alt \qlazy \alt \qunknown} \\
    \coeffcolor{\textit{coeffect vectors } \phi \text{ }} \coeffcolor{ \defas \cdot \alt \phi, x : \coeffsymbol} \\
    \coeffcolor{\textit{coeffect equations } \gamma \text{ }} \coeffcolor{\defas \phi \alt \Phi \alt \gamma + \gamma} \\
    \coeffcolor{\textit{coeffect variable contexts } \Delta \text{ }} \coeffcolor{\defas \cdot \alt \Delta, \Phi} \\
    \textit{types } \dots \alt \coeffcolor{\forall {\Phi}^{\gamma}.} \tau \\
    \textit{contexts } \Gamma \defas \cdot \alt \Gamma, x :^{\coeffcolor{\gamma}} \tau
  \end{gather*}

  \begin{gather*}
    e \defas \dots \alt \Lambda \coeffcolor{\Phi}. e \alt e \text{ } \coeffcolor{[\coeffoverline{\gamma}]} \text{ }e
  \end{gather*}

  $$
  \infer[\rlabel*{T-CBN-Poly-Abs}]
  {\coeffcolor{\Delta} \alt \Gamma \cbnvdash \Lambda \coeffcolor{\Phi}. e :^{\lazyvec}
  \coeffcolor{\forall \Phi^\gamma.\tau}}
  {\coeffcolor{\Delta, \Phi} \alt \Gamma \cbnvdash e :^{\gamma} \tau & \coeffcolor{\Delta \vdash \gamma} &
  \coeffcolor{\Delta, \Phi \vdash \tau}}
  $$

  $$
  \infer[\rlabel*{T-CBN-Poly-App}]
  {\coeffcolor{\Delta} \alt \Gamma \cbnvdash e\text{ } \coeffcolor{[\gamma]} :^{\coeffcolor{\gamma_1 + \gamma_2}}
  \tau \coeffcolor{[\Phi \leftarrow \gamma]}}
  {\coeffcolor{\Delta} \alt \Gamma \cbnvdash e :^{\coeffcolor{\gamma_1}} \coeffcolor{\forall \Phi^{\gamma_2}.\tau}
  }
  $$

  \caption{Polymorphic \cbnlangname}
  \label{fig:CBN-poly}
\end{figure}

We extend our definition of types and coeffects in Figure \ref{fig:CBN-poly},
where $\coeffcolor{\Phi}$ denotes a coeffect variable.
At the syntax level, we add an introduction and elimination form for the
new polymorphic type, written the usual way.

In the polymorphic \cbnlangname,
rather having than coeffect vectors simply be maps from variables to $\qstrict$, $\qlazy$, or $\qunknown$,
we internalize the previously described coeffect algebra into the vectors, allowing
them to be equations rather than literals.
For symmetry between the polymorphic and monomorphic systems' typing rules,
we will now denote context equations with $\coeffcolor{\gamma}$,
and context vector literals with $\coeffcolor{\phi}$.
Coeffect variable contexts track which coeffect variables are in scope,
and are used for a well-formedness check of equations defined in Figure \ref{fig:poly-wf}.
This well-formedness check extends straightforwardly to types as well, checking
each vector in the structure of a type, and is written $\coeffcolor{\Delta \vdash \tau}$.

\begin{figure}
  $$
  \infer[\rlabel*{WF-Coeffs-Lit}]
  {\coeffcolor{\Delta \vdash \phi}}{} \qquad
  \infer[\rlabel*{WF-Coeffs-Var}]
  {\coeffcolor{\Delta \vdash \Phi}}{\coeffcolor{\Phi \in \Delta}} \qquad
  \infer[\rlabel*{WF-Coeffs-Plus}]
  {\coeffcolor{\Delta \vdash \gamma_1 + \gamma_2}}{
    \coeffcolor{\Delta \vdash \gamma_1} & \coeffcolor{\Delta \vdash \gamma_2}
  }
  $$
  \caption{Coeffect equation well-formedness check}
  \label{fig:poly-wf}
\end{figure}

The typing judgment for the polymorphic \cbnlangname calculus has an additional $\coeffcolor{\Delta}$
that keeps track of in-scope coeffect variables. We present only the polymorphic abstraction
and application rules in Figure \ref{fig:CBN-poly};
the others are straightforward extensions of the monomorphic rules.
Note that $\tau\coeffcolor{[\Phi \leftarrow \gamma]}$
denotes the substitution of $\coeffcolor{\gamma}$ for $\coeffcolor{\Phi}$ in $\tau$.

As with the other introduction forms in \cbnlangname,
the $\Lambda$ abstraction suspends the coeffects of the expression within it,
which are then placed on the $\coeffcolor{\forall}$ type it produces.
To ensure that polymorphic variables do not escape their scope,
we must require that the unextended $\coeffcolor{\Delta}$ is able to check
the coeffects of the $\Lambda$'s body.

These new rules now allow us to write the program $\Lambda \coeffcolor{\Phi}. \lambda x. x$
and assign it type
$\coeffcolor{\forall \Phi^{\coeffoverline{\qlazy}}}.
(x :^{\qstrict} : \tau^{\coeffcolor{\Phi}}) \xrightarrow{\coeffcolor{\Phi}} \tau$,
producing a function that will accept any expression with type $\tau$ and have
the same vector of coeffects in its body as in the argument.
If we instead wished to check $\Lambda \coeffcolor{\Phi}. \lambda x. y; x$, where $y$
is some $\texttt{unit}$-typed variable, we could assign it type
$\coeffcolor{\forall \Phi^{\coeffoverline{\qlazy}}}.
(x :^{\qstrict} : \tau^{\coeffcolor{\Phi}}) \xrightarrow{\coeffcolor{\Phi + y : \qstrict}} \tau$,
which indicates that the function body has the coeffects of its argument and additionally
uses $y$ strictly. Multiple coeffect variables can be used to type
programs like $\Lambda \coeffcolor{\Phi_1}. \Lambda \coeffcolor{\Phi_2}. \lambda x. x$
with type

\begin{align*}
  \coeffcolor{\forall \Phi_1^{\coeffoverline{\qlazy}}. \forall \Phi_2^{\coeffoverline{\qlazy}}}.
  (x:^{\qstrict} (\tau_1^{\coeffcolor{\Phi_1}} \times \tau_2^{\coeffcolor{\Phi_2}} )^{\coeffcolor{\coeffoverline{\qlazy}}})
  \xrightarrow{\coeffcolor{\coeffoverline{\qlazy}}} \tau_1^{\coeffcolor{\Phi_1}} \times \tau_2^{\coeffcolor{\Phi_2}}
\end{align*}

which describes a function that accepts a pair type of $\tau_1$ and $\tau_2$
with arbitrary latent coeffects and returns it, without using any of its contents strictly.

\begin{figure}[t!]
  \begin{align*}
    \textit{contexts } \Gamma &\defas \cdot \alt \Gamma, x : A \\
    \textit{value types } A &\defas \texttt{unit} \alt \utype{\gamma} B \alt A_1 \times A_2 \alt A_1 + A_2
    \alt \coeffcolor{\forall \Phi}. A \\
    \textit{computation types } B &\defas A^{\coeffsymbol} \rightarrow B \alt \textbf{F} A \\
  \end{align*}
  \begin{gather*}
    \textit{values } V \defas \dots \alt \Lambda \coeffcolor{\Phi}. V \alt V \text{ } \coeffcolor{[\gamma]}
  \end{gather*}

  $$
  \infer[\rlabel*{T-Poly-Abs}]
  {\coeffcolor{\Delta} \alt \coeffcolor{\gamma} \cdot \Gamma \vdash \Lambda \coeffcolor{\Phi.} V : \coeffcolor{\forall \Phi.} A}
  {\coeffcolor{\Delta, } \coeffcolor{\Phi} \alt \coeffcolor{\gamma} \cdot \Gamma \vdash V : A &
  \coeffcolor{\Delta \vdash \gamma} & \coeffcolor{\Delta, \Phi \vdash A}}
  $$

  $$
  \infer[\rlabel*{T-Poly-App}]
  { \coeffcolor{\Delta} \alt \coeffcolor{\gamma} \cdot \Gamma \vdash V \text{ } [\coeffcolor{\gamma'}] : A \coeffcolor{[\Phi \leftarrow \gamma']}}
  {\coeffcolor{\Delta} \alt \coeffcolor{\gamma} \cdot \Gamma \vdash V : \coeffcolor{\forall \Phi.} A}
  $$

  \begin{align*}
    \translate{\coeffcolor{\forall \Phi^{\gamma}.}\tau} &=
    (\ftype (\coeffcolor{\forall \Phi.} \utype{\gamma + \gamma'} B), \lazyvec) \textit{ where } \translate{\tau} = (B, \coeffcolor{\gamma'})
  \end{align*}
  \caption{Polymorphic \cbpvlangname}
  \label{fig:cbpv-poly}
\end{figure}

These desiderata for the call-by-name calculus suggest definitions for the polymorphic \cbpvlangname.
If we want to be able to translate the polymorphic \cbnlangname type
$\coeffcolor{\forall \Phi^{\gamma_1}.}
(x:^{\coeffsymbol} : \tau_1^{\coeffcolor{\gamma_2}}) \xrightarrow{\coeffcolor{\gamma_3}} \tau_2$
into \cbpvlangname, the translation we defined earlier in Figure \ref{fig:cbn-trans} indicates
that the shape of the resulting type and coeffects should look like
$(\coeffcolor{\forall \Phi.}
  (\utype{\gamma_2 + \gamma_2'} B_1)^{(\coeffcolor{\coeffsymbol + \gamma_3'(x)})} \rightarrow
\downshift{x} B_2, \coeffcolor{\gamma_3 + \downshift{x} \gamma_3'})$,
where $\translate{\tau_1} = (B_1, \gamma_2')$ and $\translate{\tau_2} = (B_2, \gamma_3')$.
However, we need somewhere for the $\coeffcolor{\gamma_1}$ coeffects to go in this resulting type
in order not to lose information in the translation.
This gives us two options: either have coeffect abstraction suspend coeffects in \cbpvlangname,
or have the translation place a thunk around the abstracted type and
have coeffect abstraction in \cbpvlangname pass its coeffects through like
term abstractions.
In the interest of separation of concerns in the design of \cbpvlangname's type system,
we opt for the latter approach.

We also have the choice of whether to make coeffect abstraction and application
values or computations in \cbpvlangname.
It is tempting to take the traditional view that abstraction and application are computations,
but we should recall Levy's characterization of the difference between the two classes in CBPV:
``a value is, a computation does'' \cite{levy-cbpv-2003}.
In this view, it is actually more reasonable to have coeffect abstractions and applications
be values in \cbpvlangname, since they do not actually ``do'' anything per se,
and instead simply serve as a declaration of the coeffects necessary for something to be done.
Given this, our translation must computationalize the \cbnlangname type in \cbpvlangname,
producing the final output type
$\ftype (\coeffcolor{\forall \Phi.} \utype{\gamma_1+\coeffcolor{\gamma_3 + \downshift{x} \gamma_3'}}
  (\utype{\gamma_2 + \gamma_2'} B_1^{(\coeffcolor{\coeffsymbol + \gamma_3'(x)})} \rightarrow
\downshift{x} B_2$)),
where $\translate{\tau_1} = (B_1, \gamma_2')$ and $\translate{\tau_2} = (B_2, \gamma_3')$.

This motivating example leads directly to the definitions for the extended polymorphic \cbpvlangname
in Figure \ref{fig:cbpv-poly}.
As before, only the new syntax forms and rules are shown here,
the others being straightforward extensions of monomorphic \cbpvlangname.
Using these definitions, we can state and prove our polymorphic translation correctness theorem:

\begin{theorem}[Polymorphic Call-By-Name Translation is Type-And-Coeffect Preserving]
  \label{thm:poly-translation}
  If $\coeffcolor{\Delta} \alt \Gamma \cbnvdash e :^{\coeffcolor{\gamma}} \tau$ then
  $\coeffcolor{\Delta} \alt \coeffcolor{\gamma + \gamma'} \cdot \translate{\Gamma} \vdash \translate{e} : B$,
  where $\translate{\tau} = (B, \coeffcolor{\gamma'})$.
\end{theorem}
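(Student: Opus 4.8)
The plan is to run the same induction as in Lemma~\ref{lem:translation-sound}, now on the polymorphic \cbnlangname typing derivation, carrying the coeffect-variable context $\coeffcolor{\Delta}$ along unchanged. First I would fix the term-level companion to the type translation of Figure~\ref{fig:cbpv-poly}: since $\translate{\coeffcolor{\forall \Phi^{\gamma}.}\tau}$ is a returner type wrapping a $\coeffcolor{\forall}$-quantified thunk, the natural definitions are $\translate{\Lambda\coeffcolor{\Phi}.e} = \textbf{ret }(\Lambda\coeffcolor{\Phi}.\{\translate{e}\})$ and $\translate{e\,\coeffcolor{[\gamma]}} = z \leftarrow \translate{e} \textbf{ in } (z\,\coeffcolor{[\gamma]})!$ for a fresh $z$. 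All of the monomorphic cases then go through exactly as in Lemma~\ref{lem:translation-sound}: $\coeffcolor{\Delta}$ is inert in those rules, the coeffect algebra is used purely syntactically (now over equations rather than literal maps), and the term and type translations are unchanged there.

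For \rref{T-CBN-Poly-Abs}, writing $\translate{\tau}=(B,\coeffcolor{\gamma'})$, the induction hypothesis gives $\coeffcolor{\Delta,\Phi} \mid \coeffcolor{\gamma+\gamma'}\cdot\translate{\Gamma}\vdash\translate{e}:B$. Applying \rref{T-Thunk}, then CBPV \rref{T-Poly-Abs} (whose conclusion carries the premise's coeffect, here $\lazyvec$, matching the $\lazyvec$ output of \rref{T-CBN-Poly-Abs}), then \rref{T-Return}, yields $\coeffcolor{\Delta} \mid \lazyvec\cdot\translate{\Gamma}\vdash\translate{\Lambda\coeffcolor{\Phi}.e}:\ftype(\coeffcolor{\forall\Phi.}\utype{\gamma+\gamma'}B)$, which is exactly the goal since $\translate{\coeffcolor{\forall\Phi^\gamma.}\tau}=(\ftype(\coeffcolor{\forall\Phi.}\utype{\gamma+\gamma'}B),\lazyvec)$. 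The well-formedness side conditions of \rref{T-Poly-Abs} follow from those of \rref{T-CBN-Poly-Abs} via a routine auxiliary lemma that the type translation preserves well-formedness (if $\coeffcolor{\Delta,\Phi\vdash\tau}$ then $\coeffcolor{\Delta,\Phi\vdash B}$ and $\coeffcolor{\Delta,\Phi\vdash\gamma'}$), together with weakening of $\coeffcolor{\Delta}$. The CBN rule's constraint $\coeffcolor{\Delta\vdash\gamma}$ additionally records that the suspended effect $\coeffcolor{\gamma}$ is $\coeffcolor{\Phi}$-free---an invariant I will use in the application case.

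For \rref{T-CBN-Poly-App}, the induction hypothesis gives $\coeffcolor{\Delta} \mid \coeffcolor{\gamma_1}\cdot\translate{\Gamma}\vdash\translate{e}:\ftype(\coeffcolor{\forall\Phi.}\utype{\gamma_2+\gamma'}B)$ with $\translate{\tau}=(B,\coeffcolor{\gamma'})$. I would type $z\leftarrow\translate{e}\textbf{ in }(z\,\coeffcolor{[\gamma]})!$ by \rref{T-Var} and then CBPV \rref{T-Poly-App}, giving $z\,\coeffcolor{[\gamma]}$ the value type $\utype{(\gamma_2+\gamma')[\Phi\leftarrow\gamma]}\,B[\Phi\leftarrow\gamma]$ with coeffect $z{:}\qstrict$, then \rref{T-Force}, then \rref{T-Let}, which drops $z$; this produces coeffect $\coeffcolor{\gamma_1+(\gamma_2+\gamma')[\Phi\leftarrow\gamma]}$ and type $B[\Phi\leftarrow\gamma]$. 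To match the statement I need (i) a substitution lemma that the translation commutes with coeffect substitution, $\translate{\tau[\Phi\leftarrow\gamma]}=(B[\Phi\leftarrow\gamma],\coeffcolor{\gamma'[\Phi\leftarrow\gamma]})$, proved by structural induction on $\tau$ using that $\coeffcolor{+}$ and $\coeffcolor{\downshift{}}$ commute with a substitution acting on a disjoint class of variables, and (ii) $\coeffcolor{\gamma_2[\Phi\leftarrow\gamma]=\gamma_2}$, which holds since $\coeffcolor{\gamma_2}$ is $\coeffcolor{\Phi}$-free by the invariant above. Together these give $\coeffcolor{\gamma_1+(\gamma_2+\gamma')[\Phi\leftarrow\gamma]}=\coeffcolor{(\gamma_1+\gamma_2)+\gamma'[\Phi\leftarrow\gamma]}$, which is precisely the coeffect demanded against the output type $\translate{\tau[\Phi\leftarrow\gamma]}$.

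The main obstacle I expect is exactly this bookkeeping: reconciling the two scoping disciplines---term variables, handled by $\coeffcolor{\downshift{}}$ and $\mid_{\dom{\cdot}}$, and coeffect variables, handled by $\coeffcolor{\Delta}$ and substitution---so that the substitution lemma holds on the nose, and checking that the coeffect-equation algebra of the polymorphic \cbpvlangname is rigid enough for the equalities above to hold syntactically rather than merely modulo its equational theory. Everything else is a mechanical extension of the existing induction, and soundness-style corollaries (e.g.\ a polymorphic interpretation theorem) would then follow by composing with Lemma~\ref{lem:fundamental-lemma} as before.
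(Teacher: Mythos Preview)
Your proposal is correct and follows exactly the paper's approach---induction on the \cbnlangname typing derivation---supplying the term translations, the substitution lemma, and the $\coeffcolor{\Phi}$-freeness invariant that the paper's one-line proof leaves implicit. The bookkeeping obstacles you flag are real but routine, and nothing in your outline diverges from what the paper intends.
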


\begin{proof}
  By induction on the \cbnlangname typing derivation.
\end{proof}
\fi

\section{Related Work} \label{sec:related}

\emph{Strictness analysis}
was first presented by \citet{mycroft-theory-1980}.
His initial work laid the foundations for strictness analysis using abstract interpretation \cite{abstract-interpretation},
but it was limited to programs working with simple data
for which it sufficed to use an abstract domain containing just a $\top$ (defined) and a $\bot$ (undefined) element.
The original formulation was extended
to a four-point domain to better handle list programs \cite{wadler-strictness-1987}.
In addition to fully defined and undefined values,
the four-point domain enabled reasoning about data that was itself defined but that
might contain subcomponents that were not
(e.g., a fully-defined list whose elements were undefined).

\citet{wadler-projections-1987} improved upon earlier
techniques using \textit{projections} \cite{hughes-strict} to handle significantly
more complex programs and bypass the need for abstract interpretation.
Projection-based analysis involves reasoning backwards to determine how defined a program's inputs
need to be, based on how its outputs are used;
it was first implemented in practice by \citet{kubiak-implementing-1992}
and it now forms the foundation for strictness analysis in GHC \cite{sergey-theory-2014}.
Other approaches have been found that outperform GHC, such as the ``Optimistic Evaluation''
strategy \cite{Ennals-Jones-2003}, but these have proven too complex for practical adoption.

\textit{Demand analysis} \cite{launchbury-demand}
generalizes strictness analysis
by allowing compilers to reason about ``how much'' of a value is used (or ``demanded'').
It has been adapted to provide fine-grained analysis
of performance in non-strictly evaluated languages \cite{xia-story-2024, bjerner-demand}
and to guide compiler optimizations \cite{sergey-modular-2014}.
\cbnlangname's types also describe demand beyond the top-level constructor;
the suspended $\coeffcolor{\gamma}$s that appear in types
describe how variables get used as more of the result of a term is demanded.


\subsubsection*{Strictness in Types}

Usage type systems \cite{turner-1995,wansbrough-1999, wansbrough-2005} such as linear types \cite{girard-linear, maraist-linear, linear-types-wadler,linearity-haskell}
are used to track how variables and values are used in programs.
Attempting to apply such systems to modeling strictness, however, breaks down almost immediately
when considering variables that appear in the bodies of functions.
For example, such systems would just describe \texttt{f4} from the examples
in Figure \ref{fig:strict-example-1} as using its argument $y$,
failing to capture the fact that \texttt{y} is not strictly used.

Approaches based on information flow
\cite{denning-denning-info,volpano-secure-flow,Zdancewic-2001,zdancewic-secure-002,Palsberg-flow}
have the opposite problem.
Unlike usage typing, which underapproximates strictness,
information-flow typing overapproximates it.
While these systems can distinguish the ways in which \texttt{f1} and \texttt{f4} in Figure \ref{fig:strict-example-1} use their arguments,
they also distinguish between variable usages that \textit{should} be considered equivalent
from a strictness perspective.
For example, an information-flow system would say that
\textbf{if} \texttt{y} \textbf{then} \texttt{1} \textbf{else} \texttt{2}
leaks information about \texttt{y}
and that \texttt{y} \`{}\textbf{seq}\`{} \texttt{1} does not,
since the latter produces the same result regardless of \texttt{y}'s value.
It is clear, however, that both of these programs use \texttt{y} strictly,
so information flow is not sufficient for modeling strictness either.

There are, however, three existing source-level approaches to tracking strictness.
Kuo and Mishra \cite{kuo-87,kuo-89} describe a constraint-gathering static analysis
they call ``strictness types.''
%
\citet{schrijvers-strictness-2010} describe
an effect system for modeling strictness
and use it to guide a handful of program optimizations.
Both of these approaches, however, are limited to
analysis of flat data (i.e., numbers or other base types) and thus
are not sufficiently expressive for our purposes.

\citet{barendsen-strictness-2007} and \citet{smetsers-strictness} outline
a type system modeling strictness using attributes annotated on types.
Their system features a two-point lattice containing
a $!$ attribute for strict usage and a $?$ attribute for uses
lacking strictness information.
This approach allows
function types to be annotated with usage information
(e.g., a strict function from $A$ to $B$ would be typed $A^!\rightarrow B$),
but it only tells half the story.
Guarantees can only be made about strict
usage (i.e., types for which a $!$ attribute is inferred) whereas
\cbnlangname and \cbpvlangname can make guarantees about the evaluation of lazy functions as well.

\subsubsection*{Effects and Coeffects}

Type-and-effect systems are well-studied tools for modeling side effects at the level of type systems
\cite{lucassen-polymorphic-1988,katsumata-2014,talpin-type-1992, Wadler-effect-2003}.
The study of coeffects is relatively newer,
having been first introduced by \citet{petricek-coeffects-2013} in 2013,
but it has been an area of active study for the past decade
\cite{Brunel-coeffect-2014, Petricek-coeffect-2014, Orchard-graded-2019, choudhury-graded-2021,ghica-linear}.
Coeffects have been used to track properties like differential privacy \cite{reed-privacy},
error bounds \cite{kellison-2025},
irrelevance in dependent type theories \cite{abel-coeffects}, and
resource usage \cite{torczon-effects-2024, choudhury-graded-2021,Bianchini-resource-2023,linearity-haskell}.

The design of \cbnlangname and \cbpvlangname resembles coeffect-graded systems,
such as those described by \citet{Bianchini-resource-2023} and \citet{choudhury-graded-2021}.
However, while grade algebras can describe properties similar to laziness and strictness,
it is not clear whether coeffect-graded systems generalize our work.
In particular, \cbnlangname types describe how variables will be used if the values inhabiting those types are used.
In a graded system, on the other hand, grades describe only how values are used, not the downstream effects of such usage.
\cbnlangname is more detailed because it annotates types with an entire grade vector
instead of a single grade annotation on a graded modal type,
but it is possible that a clever choice of grade algebra may be able to capture the same properties.
Further work may provide a more rigorous understanding of the relationship between these systems.

\ifpoly
The original treatment of polymorphic effects by \citet{lucassen-polymorphic-1988}
was simplified by \citet{rytz-lightweight-2012},
and this simpler version informed our presentation of polymorphic coeffects.
\fi

The interaction between effects and coeffects is an active research area
\cite{dal-lago-2022, gaboardi-combining-2016}.
\citet{nanevski-dynamic-2003} presents a type system that uses modal types
to model local and global state in a manner analogous to effects and coeffects.
Of particular relevance to our work is that the $\Box$ modality behaves like the $\utype{}$ type
in \cbpvlangname, in the sense that it both suspends computation
and tracks local variable usage within the type, which has the flavor of a coeffect.
\citet{gaboardi-combining-2016} use graded modal types to describe a type system in
which effects and coeffects can interact with one another according to distributive laws,
while \citet{Hirsch-2018} investigate how effects and coeffects interact
when distributive laws between them do not exist. The latter work is especially relevant because their
choice of how to layer effects and coeffects semantically results in
either strict or lazy evaluation, although they do not surface this information statically.
Lastly, \citet{McDermottMycroft-2018} blend coeffects with effects to implement
an effect system for a call-by-need language similarly to \cbnlangname.

\subsubsection*{Call-By-Push-Value}

Call-by-push-value and its translations from call by name and call by value are due to Levy
\cite{levy-call-by-push-value-1999,levy-call-by-push-value-2001,levy-call-by-push-value-2006,levy-cbpv-2003,Levy-cbpv-2022}.
It is frequently used as a substrate for studying effects \cite{mcdermott-grading-2025,rioux-computation-2020,kammar-2013,kammar-2012} and their interplay with other
systems, including evaluation order \cite{mcdermott2019extended}, dependent types \cite{pedrot-fire-2019},
and time complexity \cite{kavvos-cbpv}.
Torczon et al. \cite{torczon-effects-2024} extend CBPV with support
for both effects and coeffects;
their presentation and mechanization of this system heavily influenced this paper.
They, in turn, drew from \citet{forster-call-by-push-value-2019}'s mechanization of CBPV in Rocq.

\section{Conclusion and Future Work} \label{sec:future}

We have presented a new intensional definition of strictness
that refines the usual extensional definition by describing usage more precisely.
We introduced \cbnlangname and \cbpvlangname,
type systems that embody our definition using effects,
related these systems via a type-preserving translation,
and proved that they capture appropriate semantic notions of strict
and lazy usage.

In the future, we hope to investigate whether \cbnlangname
could be used in practice to typecheck real programs in languages like Haskell.
In its current presentation, the annotation burden in the \cbnlangname type system is quite severe---expecting programmers to annotate types with the usage of every variable
in scope does not seem reasonable!
Prior work by \citet{wansbrough-2005}, however, describes implementing
a type system with a similar annotation burden
practically in GHC; we plan to explore whether the
principles guiding that work can be applied to \cbnlangname.

To work for realistic Haskell programs,
a few refinements to \cbnlangname's type system would be needed.
\ifpoly In particular, \else
As mentioned in Section \ref{sec:cbn}, the requirement that \cbnlangname function types
declare the effects that they allow their arguments to produce is quite restrictive.
Existing work on systems similar to \cbnlangname \cite{McDermottMycroft-2018} addresses
this limitation using effect polymorphism
\cite{rytz-lightweight-2012,lucassen-polymorphic-1988}, and we
believe the same approach would work here.
This would involve building out the theory and metatheory
of effect-polymorphic extensions of \cbnlangname and \cbpvlangname.
\citet{rioux-computation-2020} describe $\text{CBPV}^{\forall}\!$,
a type-polymorphic extension to CBPV, providing a promising foundation for
this endeavor.

Further,
\fi
our type system would need to support more complex datatypes, such as lists.
Indeed, we believe the main principles should extend relatively
straightforwardly.
Aside from the usual \textbf{nil} and \textbf{cons} constructors for
introducing lists,
we would add a $\textbf{fold}\;f\;lst\;acc$ operation to consume them.
The strictness annotations on the type inferred for $f$ would describe
how the list and accumulator are used: if $f$ has an $\qstrict$ attribute for both arguments,
then the list contents are all strictly used (e.g., the \texttt{sum} example from Figure \ref{fig:strict-example-fold}),
whereas an $\qlazy$ attribute for the list element argument and
an $\qstrict$ attribute for the accumulator
would result in only the ``spine'' of the list being strictly used (e.g., the \texttt{length} function).
An $\qlazy$ attribute for the accumulator would mean that the fold does not even
use the spine of the list strictly (e.g., \texttt{map}).

We would also like to explore whether \cbpvlangname
can be used to guide compiler optimizations.
CBPV is often used as a reasoning tool for intermediate languages \cite{downen-2020,mercer-2022, new-cbpv},
so the groundwork has already been laid.
We plan to build a compiler for a fragment of Haskell into \cbpvlangname to investigate
what transformations can be applied based on intensional strictness in this setting.

Additionally, as mentioned previously, the call-by-name insights provided by this work
extend to call-by-need evaluation,
as the two strategies use variables with the same strictness.
However, there is interesting future work in actually formalizing this relationship,
and in particular in working out the details of how to model call-by-need
evaluation in CBPV.
One option would be to use the Extended Call-By-Push-Value system of \citet{mcdermott2019extended},
which changes the evaluation strategy of CBPV itself to be call-by-need. Another option
would be to encode call-by-need evaluation via translation to CBPV by extending the
latter with mutable references. In such a translation, we conjecture that all value types $A$
would be translated to references to sum types $\textbf{ref} (\utype{}\ftype A + A)$,
and evaluating thunks the first time would translate to storing their resulting value
in the reference in place of the previously thunked expression.

Beyond strictness, the characterization of $\coeffcolor{\gamma}$s in
\cbnlangname and \cbpvlangname
deserves further attention: we would like to study more formally the connection between
\cbpvlangname's $\coeffcolor{\gamma}$s and coeffects.
We have also noted the potential relationship between these types and the logic of modal necessity,
and we are eager to learn more about the extent and nature of this
relationship.

\section*{Acknowledgements}
We thank the anonymous reviewers for their feedback and suggestions.
We also thank José Manuel Calderón Trilla and Simon Peyton-Jones
for their valuable suggestions about how to best frame and explain the contributions of
the paper.
Lastly, we thank Jonathan Chan, Yiyun Liu, Nick Rioux, Cassia Torczon, and PLClub at large
for feedback on the paper and help with formalization.

This material is based upon work supported by the U.S. National Science Foundation
under Grants No. 2313998 and 2402449. Any opinions, findings, and conclusions or recommendations
expressed in this material are those of the authors and do not
necessarily reflect the views of the U.S. National Science Foundation.

\bibliographystyle{ACM-Reference-Format}
\bibliography{main}

\end{document}